\newcommand{\N}{\mathbb{N}}
\newcommand{\D}{\mathcal{D}}
\newcommand{\E}{\mathbb{E}}
\newcommand{\ALG}{\textsf{ALG}}
\newcommand{\CR}{\textsf{CR}}
\newcommand{\eps}{\varepsilon}
\newcommand{\indic}[1]{\mathbbm{1}_{#1}}
\newcommand{\Xmax}{X^*}
\newcommand{\thresh}{\theta}
\newcommand{\p}{p}
\newcommand{\simiid}{\overset{\text{iid}}{\sim}}
\newcommand{\A}{\textsf{A}}
\newcommand{\err}{\epsilon}
\newcommand{\OPT}{\textsf{OPT}}
\newcommand{\overbar}[1]{\mkern 1.5mu\overline{\mkern-2mu#1\mkern-1.5mu}\mkern 1.5mu}
\newtheorem{theorem}{Theorem}[section]
\newtheorem{corollary}{Corollary}[theorem]
\newtheorem{lemma}[theorem]{Lemma}
\newtheorem{remark}[theorem]{Remark}
\newtheorem{definition}{Definition}[section]
\newtheorem{proposition}[theorem]{Proposition}
\title{Lookback Prophet Inequalities}
\author{%
    Ziyad Benomar \\
  ENSAE, Ecole Polytechnique,\\
  FairPlay joint team \\
  \texttt{ziyad.benomar@ensae.fr}  \\
    \And 
    Dorian Baudry \\ Department of Statistics, \\University of Oxford\\
\texttt{dorian.baudry@ox.ac.uk}
   \And
    Vianney Perchet \\
    CREST, ENSAE, Criteo AI LAB\\
    Fairplay joint team\\
    \texttt{vianney.perchet@normalesup.org}\\
}
\begin{document}

\maketitle

\begin{abstract}
Prophet inequalities are fundamental optimal stopping problems, where a decision-maker observes sequentially items with values sampled independently from known distributions, and must decide at each new observation to either stop and gain the current value or reject it irrevocably and move to the next step. This model is often too pessimistic and does not adequately represent real-world online selection processes. Potentially, rejected items can be revisited and a fraction of their value can be recovered. To analyze this problem, we consider general decay functions $D_1,D_2,\ldots$, quantifying the value to be recovered from a rejected item, depending on how far it has been observed in the past. We analyze how lookback improves, or not, the competitive ratio in prophet inequalities in different order models. 
We show that, under mild monotonicity assumptions on the decay functions, the problem can be reduced to the case where all the decay functions are equal to the same function $x \mapsto \gamma x$, where $\gamma = \inf_{x>0} \inf_{j \geq 1} D_j(x)/x$. Consequently, we focus on this setting and refine the analyses of the competitive ratios, with upper and lower bounds expressed as increasing functions of $\gamma$.
\end{abstract}

\section{Introduction}\label{sec::intro}
Optimal stopping problems constitute a classical paradigm of decision-making under uncertainty \citep{dynkin1963optstop}
Typically, in online algorithms, these problems are formalized as variations of the secretary problem \citep{lindley1961secretary}
or the prophet inequality \citep{krengel1977semiamarts-prophet}.
In the context of the prophet inequality, the decision-maker observes a finite sequence of items, each having a value drawn independently from a known probability distribution. Upon encountering a new item, the decision-maker faces the choice of either accepting it and concluding the selection process or irreversibly rejecting it, with the objective of maximizing the value of the selected item. 
However, while the prophet inequality problem is already used in scenarios such as posted-price mechanism design \citep{hajiaghayi2007multiple} or online auctions \citep{syrgkanis2017sample}, it might present a pessimistic model of real-world online selection problems. Indeed, it is in general possible in practice to revisit previously rejected items and potentially recover them or at least recover a fraction of their value.

Consider for instance an individual navigating a city in search of a restaurant. When encountering one, they have the choice to stop and dine at this place, continue their search, or revisit a previously passed option, incurring a utility cost that is proportional to the distance of backtracking. In another example drawn from the real estate market, homeowners receive offers from potential buyers. The decision to accept or reject an offer can be revisited later, although buyer interest may have changed, resulting in a potentially lower offer or even a lack of interest. Lastly, in the financial domain, an agent may choose to sell an asset at its current price or opt for a lookback put option, allowing them to sell at the asset's highest price over a specified future period. To make a meaningful comparison between the two, one must account for factors such as discounting (time value of money) and the cost of the option.


\subsection{Formal problem and notation}
To encompass diverse scenarios, we propose a general way to quantify the cost incurred by the decision-maker for retrieving a previously rejected value. 
\begin{definition}[Decay functions] \label{def::decay}
Let $\D = (D_1, D_2, \ldots)$ be a sequence of non-negative functions defined on $[0,\infty)$. It is a sequence of decay functions if
\begin{enumerate}[label=(\roman*)]
    \item $D_1(x) \leq x$ for all $x \geq 0$,
    \item the sequence $(D_j(x))_{j \geq 1}$ is non-increasing for all $x \geq 0$, 
    \item the function $D_j$ is non-decreasing for all $j \geq 1$.  
\end{enumerate}
\end{definition}

In the context of decay functions $\D$, if a value $x$ is rejected, the algorithm can recover ${D}_j(x)$ after $j$ subsequent steps. The three conditions defining decay functions serve as fundamental prerequisites for the problem. The first and second conditions ensure that the recoverable value of a rejected item can only diminish over time, while the final condition implies that an increase in the observed value $x$ corresponds to an increase in the potential recovered value. Although the non-negativity of the decay functions is non-essential, we retain it for convenience, as we can easily revert to this assumption by considering that the algorithm has a reward of zero by not selecting any item.

The motivating examples that we introduced can be modeled respectively with decay functions of the form $D_j(x) = x - c_j$ where $(c_j)_{j\geq 1}$ is a non-decreasing positive sequence, $D_j(x) = \xi_j x$ with $\xi_j \sim \mathcal{B}(p_j)$ and $(p_j)_{j\geq 1}$ a non-increasing sequence of probabilities, and $D_j(x) = \lambda^j x$ with $\lambda \in [0,1]$. In one of these examples (housing market), the natural model is to use \emph{random decay functions}: the buyer makes the same offer if they are still interested, and offers $0$ otherwise. Definition \ref{def::decay} can be easily extended to consider this case.
However, to enhance the clarity of the presentation, we only discuss the deterministic case in the rest of the paper. In Appendix \ref{appx:rd-decay}, we explain how all the proofs and theorems can be generalized to that case.

\paragraph{The $\D$-prophet inequality.}
For any decay functions $\D$, we define the $\D$-prophet inequality problem, where the decision maker, knowing $\D$, observes sequentially the values $X_{1}, \ldots, X_{n}$, with $X_i$ drawn from a known distribution $F_i$ for all $i \in [n]$. If they decide to stop at some step $\tau$, then instead of gaining $X_{\tau}$ as in the classical prophet inequality, they can choose to select the current item $X_{\tau}$ and have its full value, or select any item $X_{i}$ with $i < \tau$ among the rejected ones but only recover a fraction $D_{\tau-i}(X_{i}) \leq X_{i}$ of its value. Therefore, if an algorithm $\ALG$ stops at step $\tau$ its reward is
\begin{align*}
\ALG^{\D}(X_1, \ldots, X_n) 
&= \max\{X_{\tau}, D_1(X_{\tau-1}), D_2(X_{\tau-2}),\ldots,D_{\tau-1}(X_{1})\}\\
&= \max_{0\leq i \leq \tau-1} \{ D_i (X_{\tau-i}) \}\; ,
\end{align*}
with the convention $D_0(x) = x$. 
If the algorithm does not stop at any step before $n$, then its reward is $\ALG^{\D}(X_1, \ldots, X_n) = \max_{1\leq i \leq n} \{ D_i (X_{\pi(n-i+1)})\}$, which corresponds to $\tau = n+1$. 
\begin{remark}
As in the standard prophet inequality, an algorithm is defined by its stopping time, i.e., the rule set to decide whether to stop or not. Hence, if $\D$ and $\D'$ are two different sequences of decay functions, any algorithm for the $\D$-prophet inequality, although its stopping time might depend on the particular sequence of functions $\D$, is also an algorithm for the $\D'$-prophet inequality. Consider for example an algorithm $\ALG$ with stopping time $\tau(\D)$ that depends on $\D$. Its reward in the $\D'$-prophet inequality is $\ALG^{\D'}(X_1, \ldots, X_n) = \max_{0\leq i \leq \tau-1} \{ D'_i (X_{\tau(\D)-i}) \}$.
\end{remark}

\paragraph{Observation order.} Several variants of the prophet inequality problem have been studied, depending on the order of observations. The standard model is the adversarial (or fixed) order: The instance of the distributions $F_1,\dots, F_n$ is chosen by an adversary, and the algorithm observes the samples $X_1 \sim F_1,\ldots,X_n \sim F_n$ in this order \citep{krengel1977semiamarts-prophet, krengel1978semiamarts-prophet}. In the \textit{random order} model, the adversary can again choose the distributions, but the algorithm observes the samples in a uniformly random order. Another setting in which the observation order is no longer important is the IID model~\citep{hill1982iidUpper, correa2021iidOPT}, where all the values are sampled independently from the same distribution $F$.
The $\D$-prophet inequality is well-defined in each of these different order models: if the items are observed in the order $X_{\pi(1)}, \ldots, X_{\pi(n)}$ with $\pi$ a permutation of $[n]$, then the reward of the algorithm is $\ALG^\D(X_1,\ldots,X_n) = \max_{0\leq i \leq \tau-1} \{ D_i (X_{\pi(\tau-i)}) \}$. In this paper, we study the $\D$-prophet inequality in the three models we presented, providing lower and upper bounds in each of them. 

\paragraph{Competitive ratio.}
In the $\D$-prophet inequality, an input instance $I$ is a finite sequence of probability distributions $(F_1,\ldots,F_n)$. Thus, for any instance $I$, we denote by $\E[\ALG^\D(I)]$ the expected reward of $\ALG$ given $I$ as input, and we denote by $\E[\OPT(I)]$ the expected maximum of independent random variables $(X_i)_{i \in [n]}$, where $X_i \sim F_i$. With these notations, we define the competitive ratio, which will be used to measure the quality of the algorithms.
\begin{definition}[Competitive ratio]
Let $\D$ be a sequence of decay functions and $\ALG$ an algorithm. We define the competitive ratio of $\ALG$ by
\[
\CR^{\D}(\ALG) = \inf_{I} \frac{\E[\ALG^{\D}(I)]}{\E[\OPT(I)]} \;,
\]
with the infimum taken over the tuples of all sizes of non-negative distributions with finite expectation.
\end{definition}

An algorithm is said to be $\alpha$-competitive if its competitive ratio is at least $\alpha$, which means that for any possible instance $I$, the algorithm guarantees a reward of at least $\alpha \E[\OPT(I)]$. 
The notion of competitive ratio is used more broadly in competitive analysis as a metric to evaluate online algorithms \citep{borodin2005online}.


\subsection{Contributions}\label{sec:main-results}
It is trivial that non-zero decay functions $\D$ guarantee a better reward compared to the classical prophet inequality. However, in general, this is not sufficient to conclude that the standard upper bounds or the competitive ratio of a given algorithm can be improved. 
Hence, a first key question is: what condition on $\D$ is necessary to surpass the conventional upper bounds of the classical prophet inequality? Surprisingly, the answer hinges solely on the constant $\gamma_\D$, defined as follows,
\begin{equation}
    \gamma_\D = \inf_{x>0} \inf_{j\geq 1} \left\{ \frac{D_j(x)}{x} \right\}\;.
\end{equation}
In the adversarial order model, we demonstrate that the optimal competitive ratio achievable in the $\D$-prophet inequality is determined by the parameter $\gamma_\D$ alone.
Additionally, in both the random order and IID models, we demonstrate the essential requirement of $\gamma_\D > 0$ for breaking the upper bounds of the classical prophet inequality. In particular, this implies that no improvement can be made with decay functions of the form $D_j(x) = x - c_j$ with $c_j>0$, or $D_j(x) = \lambda^j x$ with $\lambda \in [0,1)$.
Subsequently, we develop algorithms and provide upper bounds in the $\D$-prophet inequality, uniquely dependent on the parameter $\gamma_\D$. We illustrate them in Figure \ref{fig:results}, comparing them with the identity function $\gamma \mapsto \gamma$, which is a trivial lower bound.

\begin{figure}
    \centering    \includegraphics[width=0.7\textwidth]{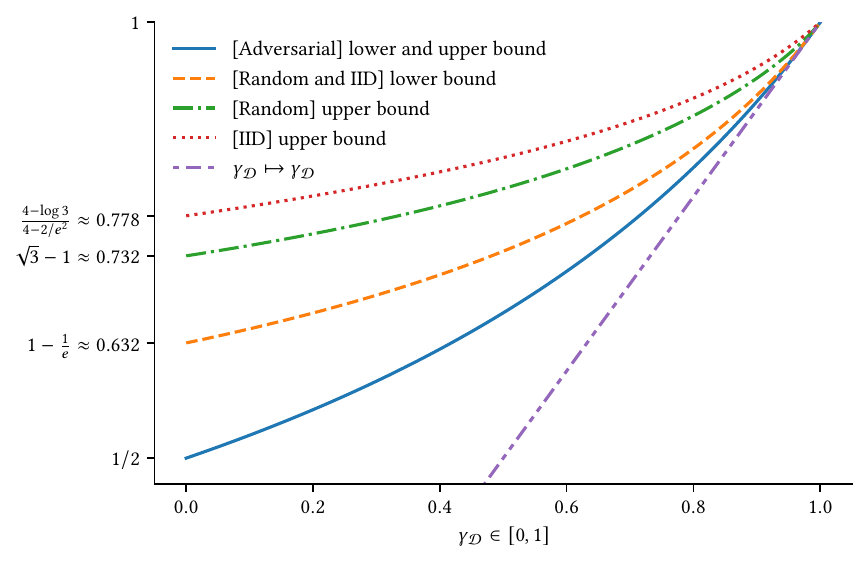}
    \caption{Lower and upper bounds on the competitive ratio in the $\D$-prophet inequality depending on $\gamma_\D$, in the adversarial order (Thm \ref{thm:gamma-adv}), random order (Thm \ref{thm:gamma-rd}) and IID (Thm \ref{thm:gamma-iid}) models }
    \label{fig:results}
\end{figure}

\subsection{Related work}\label{sec:related}

\paragraph{Prophet inequalities.}
The first prophet inequality was proven by Krengel and Sucheston \citep{krengel1977semiamarts-prophet, krengel1978semiamarts-prophet} in the setting where the items are observed in a fixed order, demonstrating that the dynamic programming algorithm has a competitive ratio of $1/2$, which is the best possible. It was shown later that the same guarantee can be obtained with simpler algorithms \citep{samuel1984prophet, kleinberg2012matroid}, accepting the first value above a carefully chosen threshold. For a more comprehensive and historical overview, we refer the interested reader to surveys on the problem such as \citep{lucier2017survey, correa2019survey}. Prophet inequalities have immediate applications in mechanism design \citep{hajiaghayi2007multiple, deng2022posted, psomas2022infinite, makur2024robustness}, auctions \citep{syrgkanis2017sample, dutting2020log}, resource management \citep{sinclair2023hindsight}, and online matching \citep{cohen2019learning, ezra2020matching, jiang2021online, papadimitriou2021matching, brubach2021improved}. 
Many variants and related problems have been studied, including, for example, the matroid prophet inequality \citep{kleinberg2012matroid, feldman2016online}, prophet inequality with advice \citep{diakonikolas2021learning}, and variants with fairness considerations \citep{correa2021fairness, arsenis2022fairness}. 


\paragraph{Random order and IID models.} 
\citet{esfandiari2017prophetsecretary} introduced the \textit{prophet secretary} problem, where items are observed in a uniformly random order, and they proved a $(1-\frac{1}{e})$-competitive algorithm. \citet{correa2021prophet} showed later a competitive ratio of $0.669$, and \citet{harb2024new} enhanced it to 0.6724, which currently stands as the best-known solution for the problem. They also proved an upper bound of $\sqrt{3} - 1 \approx 0.732$, which was improved to $0.7254$ in \citep{bubna2023order} then 0.723 in \citep{giambartolomei2023prophet}. Addressing the gap between the lower and upper bound remains an engaging and actively pursued open question. 
On the other hand, the study of prophet inequalities with IID random variables dates back to papers such as \citep{hill1982iidUpper, kertz1986stop}, demonstrating guarantees on the dynamic programming algorithm.
The problem was completely solved in \citep{correa2021iidOPT}, where the authors show that the competitive ratio of the dynamic programming algorithm is $0.745$, thus it constitutes an upper bound on the competitive ratio of any algorithm, and they give a simpler adaptive threshold algorithm matching it. 
Another setting that we do not study in this paper, is the \textit{order selection} model, where the decision-maker can choose the order in which the items are observed, knowing their distributions \citep{chawla2010multi, beyhaghi2021orderSelect, peng2022order}.

\paragraph{Beyond the worst-case.} In recent years, there has been increasing interest in exploring ways to exceed the worst-case upper bounds of online algorithms by providing the decision-maker with additional capabilities. A notable research avenue is learning-augmented algorithms \citep{lykouris2018competitive}, which equip the decision-maker with predictions or hints about unknown variables of the problem. Multiple problems have been studied in this framework, such as scheduling \citep{purohit2018improving, lassota2023minimalistic, benomarnon}, matching \citep{antoniadis2020secretary, dinitz2021faster, chen2022faster}, caching \citep{antoniadis2023paging, chlkedowski2021robust, christianson2023optimal}, the design of data structures \citep{kraska2018case, lin2022learning, benomar2024learning},  and in particular, online selection problems \citep{dutting2021secretaries, sun2021pareto, benomar2024addressing, benomar2023advice, diakonikolas2021learning}. More related to our setting, the ability to revisit items in online selection has been studied in problems such as the multiple-choice prophet inequality, where the algorithm can select up to $k$ items and its reward is the maximum selected value \citep{assaf2000simple}. This allows for revisiting up to $k$ items, chosen during the execution, for final acceptance or rejection decisions. Similarly, in Pandora's box problem \citep{weitzman1978optimal, kleinberg2016descending} and its variants \citep{esfandiari2019online, gergatsouli2022online, atsidakou2024contextual, gergatsouli2024weitzman, berger2024pandora}, the decision maker decides the observation order of the items, but a cost $c_i$ is paid for observing each value $X_i$, with the gain being the maximum observed value minus the total opening costs.
A very recent work investigates a scenario closely related to the lookback prophet inequality \citep{ekbatani2024prophet} where, upon selecting a candidate $X_i$, the decision-maker has the option to discard it and choose a new value $X_j$ at any later step $j$, at a buyback cost of $f X_i$, where $f > 0$. The authors present an optimal algorithm for the case when $f \geq 1$, although the problem remains open for $f \in (0,1)$. Other problems were studied in similar settings, such as online matching \citep{ekbatani2022online} and online resource allocation \citep{ekbatani2023online}.


\section{From $\D$-prophet to the $D_\infty$-prophet inequality}\label{sec:Dinfty}
Let us consider a sequence $\D$ of decay functions. By Definition~\ref{def::decay}, for any $x\in [0,\infty]$ the sequence $(D_j(x))_{j\geq 1}$ converges, since it is non-increasing and non-negative. Hence, there exists a mapping $D_\infty$ such that for any $x\geq 0$, $\lim_{j\to \infty}D_j(x)=D_\infty(x)$. 
Furthermore, we can easily verify that $D_\infty$ is non-decreasing and satisfies $D_\infty(x) \in [0,x]$ for all $x\geq 0$. 

Thanks to these properties, we obtain that $(D_\infty)_{j\geq 1}$ also satisfies Definition~\ref{def::decay}, and is hence a valid sequence of decay functions. 
We thus refer to the corresponding problem as the $D_\infty$-prophet inequality. Since $D_j \geq D_\infty$ for any $j\geq 1$, it is straightforward that the stopping problem with the decay functions $D_\infty$ would be less favorable to the decision-maker. 
More precisely, for any random variables $X_1,\ldots, X_n$, observation order $\pi$, and algorithm $\ALG$ with stopping time $\tau$, it holds that 
\[
\ALG^\D(X_1,\ldots,X_n) \coloneqq \max\{ X_{\pi(\tau)}, \max_{i<\tau} D_{\tau-i}(X_{\pi(i)}) \} \geq \max\{ X_{\pi(\tau)}, \max_{i<\tau} D_\infty(X_{\pi(i)}) \}\;,
\]
which corresponds to the output of $\ALG$ (with the same decision rule) when all the decay functions are equal to $D_\infty$. Therefore, any guarantees established for algorithms in the $D_\infty$-prophet inequality naturally extend to the $\D$-prophet inequality. 
However, it remains uncertain whether the $\D$-prophet inequality can yield improved competitive ratios compared to the $D_\infty$-prophet inequality. In the following, we prove that this is not the case, for all the order models presented in Section~\ref{sec::intro}.

\begin{theorem}\label{thm:neg-D}
Let $D_\infty$ be the pointwise limit of the sequence of decay functions $\D = (D_j)_{j\geq 1}$. Then for any instance $I = (F_1,\ldots,F_n)$ of non-negative distributions, it holds in the adversarial and the random order models that
\begin{equation}
\forall \ALG: \CR^\D(\ALG) \leq \sup_{\A} \frac{\E[\A^{D_\infty}(I)]}{\E[\OPT(I)]}\; ,
\end{equation}
where the supremum is taken over all the online algorithms $\A$. In the IID model, the same inequality holds with an additional $O(n^{-1/3})$ term, which depends only on the size $n$ of the instance.
\end{theorem}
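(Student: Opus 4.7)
The strategy is a reduction: given any $\D$-algorithm $\ALG$ and any instance $I=(F_1,\ldots,F_n)$, I will exhibit an instance $I'$ with $\E[\ALG^{\D}(I')]/\E[\OPT(I')] \le \sup_{\A} \E[\A^{D_\infty}(I)]/\E[\OPT(I)]$, up to an error that vanishes as the construction is refined. Since $\CR^\D(\ALG)$ is an infimum over instances, this yields the claim.

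For the adversarial-order case, I would pad $I$ with $M$ copies of the point mass $\delta_0$ between consecutive distributions, producing $I'=(F_1,\delta_0,\ldots,\delta_0,F_2,\ldots,F_n)$. Zero padding preserves $\E[\OPT]$. If $\ALG$ stops at position $t$ of $I'$ and $k$ is the index of the most recent real item observed, then using $D_j(x)\le x$ and the monotonicity $j\mapsto D_j$, its reward is bounded by $\max\{X_k,\ \max_{i<k} D_{(k-i)(M+1)}(X_i)\}$. I then define a $D_\infty$-algorithm $\A$ on $I$ that internally simulates $\ALG$ on $I'$ (zeros carry no information, so decisions within each zero-block are deterministic functions of the already-observed real items) and stops at position $k$ whenever the simulated $\ALG$ first stops in the block following real item $k$. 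Its reward equals $\max\{X_k,\ \max_{i<k} D_\infty(X_i)\}$, hence
\[
\E[\ALG^{\D}(I')] - \E[\A^{D_\infty}(I)] \ \le\ \sum_{i=1}^{n-1} \E\bigl[D_{M+1}(X_i) - D_\infty(X_i)\bigr],
\]
which vanishes as $M\to\infty$ by dominated convergence ($D_{M+1}(X_i)\le X_i$ and $\E[X_i]<\infty$). Dividing by $\E[\OPT(I')]=\E[\OPT(I)]$ gives the inequality.

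The random-order case uses the same padded $I'$, but now the $n$ real items occupy a uniformly random $n$-subset of the $n+(n-1)M$ positions. A union bound shows that, as $M\to\infty$ with $n$ and $L$ fixed, all inter-arrival gaps exceed $L$ with probability tending to one; on this event the same reward comparison applies with $(k-i)(M+1)$ replaced by $L$, and the complementary event contributes only an $o(1)\cdot\E[\OPT]$ error. The simulator $\A$ samples the positions of the real items independently of the observations and uses them to run $\ALG$ internally.

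The IID case is the main obstacle because zero-padding breaks the IID structure. Instead, I would take $I''$ to be $N$ IID draws from the mixture $F''=(1-p)\delta_0+pF$ with $pN=n$, reproducing the padded structure in expectation. The number of non-zero samples $K\sim\mathrm{Binomial}(N,p)$ concentrates around $n$ with standard deviation $O(\sqrt{n})$, the inter-arrival gaps between non-zeros are geometric with mean $1/p=N/n$, and conditional on the positions of the non-zeros the same simulation-and-reward argument applies. The total error decomposes into three pieces: the fluctuation of $K$ around $n$ (controlling $|\E[\OPT(I'')]-\E[\OPT(I)]|$), the small-gap failure probability from a union bound over $K$ gaps, and the residual decay error $\sum_i \E[D_L(X_i)-D_\infty(X_i)]$. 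Balancing these three contributions by tuning $N$ and $L$ as functions of $n$ is the technical heart of the proof, and the $O(n^{-1/3})$ rate arises from the optimal such choice.
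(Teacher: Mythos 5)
Your adversarial-order and random-order arguments are essentially the paper's: zero padding between the real items, an internal simulator that is a valid algorithm for the original instance, a dominated-convergence step to kill the residual gap between $D_{M+1}$ and $D_\infty$, and (for random order) a union bound showing all inter-arrival gaps are large with probability $1-o(1)$ as the number of padding zeros grows with $n$ fixed. Those two parts are correct as sketched.

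The IID part uses the paper's construction (the mixture $(1-p)\delta_0+pF$ with $pN=n$), but your stated plan for extracting the $O(n^{-1/3})$ rate does not work. You propose to balance three error terms by tuning $N$ and the gap threshold $L$ \emph{as functions of $n$}; however, the third term, the residual decay error $\E[D_L(X)-D_\infty(X)]$, admits no quantitative rate in $L$, because the convergence $D_j\to D_\infty$ is only pointwise and can be arbitrarily slow. So any choice of $L=L(n)$ leaves an uncontrolled term and no explicit rate can come out of that balance. The correct order of limits (and the paper's) is to send $N\to\infty$ (with $L\sim\sqrt N$) \emph{for each fixed $n$}, which makes the small-gap and decay errors vanish identically; the $O(n^{-1/3})$ then comes solely from the binomial fluctuation of the number of real draws $K$ around $n$, with a window $\Delta=n^{2/3}$ chosen to balance the Chernoff tail $e^{-\Delta^2/(3n)}$ against the relative error $\Delta/n$ in the comparison $\E[\OPT(F,n+\Delta)]\le(1+\Delta/n)\,\E[\OPT(F,n)]$. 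Relatedly, ``$K$ concentrates with standard deviation $O(\sqrt n)$'' is not enough: on the upper tail $\{K\ge n+\Delta\}$ and on the small-gap event you must bound expectations of the form $\E[\OPT(F,K)\,\indic{\cdot}]$, not mere probabilities, since the algorithm's reward there scales with the maximum of a \emph{larger} number of draws; this requires the expected-maximum growth lemma just mentioned and, because the minimum gap and $K$ are dependent, a second-moment-type bound on $\E[K^2\,\OPT(F,K)]$ (or conditioning on $K$), which is where most of the technical work in the paper's proof actually lies. Finally, note that the benchmark you end up comparing against is the instance of size roughly $n+n^{2/3}$, which is why the error term is stated as depending only on the instance size.
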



The main implication of Theorem \ref{thm:neg-D} is the following corollary.

\begin{corollary}\label{cor:neg-D}
In the adversarial order and the random order models, if $\overbar \A_\infty$ is an optimal algorithm for the $D_\infty$-prophet inequality, i.e. with maximal competitive ratio, then $\overbar \A_\infty$ is also optimal for the $\D$-prophet inequality. Moreover, it holds that
\[
\CR^{\D}(\overbar \A_\infty) = \CR^{D_\infty}(\overbar \A_\infty) \; .
\]
\end{corollary}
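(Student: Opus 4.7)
The plan is to combine Theorem~\ref{thm:neg-D} with a straightforward pointwise comparison of rewards. Write $\alpha^\star \coloneqq \CR^{D_\infty}(\overbar \A_\infty)$ for the optimal competitive ratio of the $D_\infty$-prophet inequality.

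For the inequality $\CR^\D(\overbar \A_\infty) \geq \alpha^\star$, I would invoke the Remark: the stopping rule defining $\overbar \A_\infty$ is a valid online algorithm in both the $D_\infty$-prophet and the $\D$-prophet problems. Since $D_j \geq D_\infty$ pointwise for every $j \geq 1$, for every sample path $(X_1, \ldots, X_n)$, every observation order $\pi$, and the common stopping time $\tau$,
\[
\overbar \A_\infty^\D(X_1,\ldots,X_n) = \max_{0 \leq i < \tau} D_i(X_{\pi(\tau-i)}) \;\geq\; \max_{0 \leq i < \tau} D_\infty(X_{\pi(\tau-i)}) = \overbar \A_\infty^{D_\infty}(X_1,\ldots,X_n).
\]
Taking expectations, dividing by $\E[\OPT(I)]$, and taking the infimum over all instances $I$ gives the desired lower bound.

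For the reverse inequality and for the optimality of $\overbar \A_\infty$ in the $\D$-prophet problem, I would apply Theorem~\ref{thm:neg-D} to an arbitrary algorithm $\ALG$: for every instance $I$,
\[
\CR^\D(\ALG) \leq \sup_{\A} \frac{\E[\A^{D_\infty}(I)]}{\E[\OPT(I)]}.
\]
I would then specialize $I$ to a sequence $(I_k)_{k \geq 1}$ of near-worst-case instances for the $D_\infty$-prophet problem, chosen so that $\sup_\A \E[\A^{D_\infty}(I_k)]/\E[\OPT(I_k)] \to \alpha^\star$. Passing to the limit in the displayed inequality yields $\CR^\D(\ALG) \leq \alpha^\star$ for every $\ALG$. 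Taking $\ALG = \overbar \A_\infty$ and combining with the previous paragraph gives the equality $\CR^\D(\overbar \A_\infty) = \alpha^\star$, while taking $\ALG$ arbitrary shows that $\overbar \A_\infty$ attains the optimal competitive ratio in the $\D$-prophet problem.

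The main obstacle is producing such a sequence $(I_k)_{k \geq 1}$: the instances must be simultaneously hard for \emph{every} online algorithm, not merely for $\overbar \A_\infty$, which amounts to a minimax-type assertion on the space of instances and algorithms. In the classical prophet inequality, universally tight hard instances are known explicitly (two-point distributions), and I would expect the tight upper bound on $\CR^{D_\infty}$ to be witnessed by a similar explicit family; extracting or invoking this construction is the only nontrivial ingredient beyond Theorem~\ref{thm:neg-D}.
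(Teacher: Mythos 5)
Your first inequality, $\CR^\D(\overbar\A_\infty)\geq \CR^{D_\infty}(\overbar\A_\infty)$ via the pointwise domination $D_j\geq D_\infty$, is correct and is exactly the easy half of the argument. The gap is in the second half. You correctly identify that, after applying Theorem~\ref{thm:neg-D}, you need instances $I_k$ with $\sup_\A \E[\A^{D_\infty}(I_k)]/\E[\OPT(I_k)]\to\alpha^\star$, i.e.\ a minimax-type statement $\inf_I\sup_\A \E[\A^{D_\infty}(I)]/\E[\OPT(I)]\leq \sup_\A\inf_I \E[\A^{D_\infty}(I)]/\E[\OPT(I)]$ --- but you leave it unproven, and the route you sketch (explicit universally hard two-point-style instances for the $D_\infty$-problem) is not available in general: the paper has no tight characterization of $\CR^{D_\infty}$ outside the adversarial order, and constructing instances that are simultaneously near-worst-case for \emph{all} algorithms is precisely the content you would need to supply. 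As written, the proof of $\CR^\D(\ALG)\leq\alpha^\star$ for arbitrary $\ALG$ is therefore incomplete.

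The missing idea is much simpler than a hard-instance construction, and it is how the paper closes the argument: the supremum in Theorem~\ref{thm:neg-D} is over expected rewards on a \emph{fixed} instance, and there is a single algorithm --- the optimal dynamic programming algorithm $\A_{*,\infty}$ for the $D_\infty$-prophet inequality --- that attains $\sup_\A \E[\A^{D_\infty}(I)]$ for \emph{every} instance $I$ simultaneously (backward induction maximizes the expected reward instance-wise). Hence
\[
\inf_I \sup_\A \frac{\E[\A^{D_\infty}(I)]}{\E[\OPT(I)]}
= \inf_I \frac{\E[\A_{*,\infty}^{D_\infty}(I)]}{\E[\OPT(I)]}
= \CR^{D_\infty}(\A_{*,\infty})
= \sup_\A \CR^{D_\infty}(\A) = \alpha^\star\;,
\]
so the inf/sup exchange you were worried about holds with no extremal instances at all, and Theorem~\ref{thm:neg-D} immediately gives $\CR^\D(\ALG)\leq\alpha^\star$ for every $\ALG$. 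Combined with your first paragraph this yields both the optimality of $\overbar\A_\infty$ and the equality $\CR^\D(\overbar\A_\infty)=\CR^{D_\infty}(\overbar\A_\infty)$. If you insert this instance-wise optimality observation in place of the hoped-for hard instances, your proof becomes complete and coincides with the paper's.
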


A direct consequence of this result is that, in the adversarial and the random order models, the asymptotic decay $D_\infty$ entirely determines the competitive ratio that is achievable and the upper bounds for the $\D$-prophet inequality. 
Therefore, we can restrict our analysis to algorithms designed for the problem with 
identical decay function. 
In the IID model, the same conclusion holds if the worst-case instances are arbitrarily large, making the additional $O(n^{-1/3})$ term vanish. This is the case in particular in the classical IID prophet inequality \citep{hill1982iidUpper}.

\subsection{Sketch of the proof of Theorem \ref{thm:neg-D}}\label{sec:sketch-thm:neg-D}

While we use different techniques
for each order model considered, all the proofs share the same underlying idea.
Given any instance $I$ of non-negative distributions, we build an alternative instance $J$ such that the reward of any algorithm on $I$ with decay functions $\D = (D_j)_j$ is at most its reward on $J$ with decay functions all equal to $D_{\infty}$. To do this, we essentially introduce an arbitrarily large number of zero values between two successive observations drawn from distributions belonging to $I$. Hence, under $J$, the algorithm cannot recover much more than a fraction $D_\infty(X)$ for any past observation $X$ collected from a distribution $F\in I$.

In the adversarial case, implementing this idea is straightforward, since nature can build $J$ by directly inserting $m$ zeros between each pair of consecutive values, and the result is obtained by making $m$ arbitrarily large. 
For the random order model, we use the same instance $J$, but extra steps are needed to prove that the number of steps between two non-zero values is very large with high probability.

Moving to the IID model, an instance $I$ is defined by a pair $(F, n)$, where $F$ is a non-negative distribution, and $n$ is the size of the instance. In this scenario, we consider an instance consisting of $m>n$ IID random variables $(Y_i)_{i \in [m]}$, each sampled from $F$ with probability $n/m$, and equal to zero with the remaining probability. We again achieve the desired result by letting $m$ be arbitrarily large compared to $n$.
However, the number of variables sampled from $F$ is not fixed; it follows a Binomial distribution with parameters $(m, n/m)$. We control this variability by using concentration inequalities, which causes the additional term $O(n^{-1/3})$.

\section{From $D_\infty$-prophet to the $\gamma_\D$-prophet inequality}\label{sec:Dinfty-gamma}


As discussed in Section \ref{sec:Dinfty}, Theorem \ref{thm:neg-D} implies that, for either establishing upper bounds or guarantees on the competitive ratios of algorithms, it is sufficient to study the $D_\infty$-prophet inequality, where all the decay functions are equal to $D_\infty$. The remaining question is then to determine which functions $D_\infty$ allow to improve upon the upper bounds of the classical prophet inequality. Before tackling this question, let us make some observations regarding algorithms in the $D_\infty$-prophet inequality. 

In the $D_\infty$-prophet inequality, it is always possible to have a reward of $D_\infty(\max_{i \in [n]} X_i)$ by rejecting all the items and then selecting the maximum by the end. Thus, it is suboptimal to stop at a step $i$ where $X_i \leq D_{\infty}(\max_{j<i} X_j)$. An algorithm respecting this decision rule is called \textit{rational}.


\begin{lemma}\label{lem:rational-alg}
    For any rational algorithm $\ALG$ in the $D_\infty$-prophet inequality, if we denote by $\tau$ its stopping time, then for any instance $I = (F_1,\ldots,F_n)$ and $X_i \sim F_i$ for all $i \in [n]$ we have
    \[
    \ALG^{D_\infty}(X_1,\ldots,X_n)
    = \ALG^{0}(X_1,\ldots,X_n) + D_\infty\big(\max_{i \in [n]} X_i\big) \indic{\tau = n+1} \;,
    \]
    where $\ALG^0$ denotes the reward of the algorithm in the standard prophet inequality.
    Moreover, the optimal dynamic programming algorithm in the $D_\infty$-prophet inequality is rational.
\end{lemma}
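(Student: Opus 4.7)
The plan is to handle the two assertions separately. Both rest on two elementary facts: that $D_\infty$ is non-decreasing (inherited from the decay functions), so $\max_{i<\tau} D_\infty(X_i) = D_\infty(\max_{i<\tau} X_i)$; and that being \emph{rational} means precisely that at the stopping time $\tau$ one has $X_\tau > D_\infty(\max_{i<\tau} X_i)$, since otherwise the algorithm would have been in the forbidden situation and continued.

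For the identity, I would split on whether $\tau \leq n$ or $\tau = n+1$. If $\tau \leq n$, rationality combined with the monotonicity of $D_\infty$ yields $X_\tau > \max_{i<\tau} D_\infty(X_i)$, hence
\[
\ALG^{D_\infty}(X_1,\ldots,X_n) = \max\{X_\tau, \max_{i<\tau} D_\infty(X_i)\} = X_\tau = \ALG^0(X_1,\ldots,X_n),
\]
and the indicator $\indic{\tau=n+1}$ vanishes, matching the right-hand side. If instead $\tau = n+1$, the classical reward $\ALG^0$ is $0$ by convention (nothing is selected), while the $D_\infty$-reward is $\max_{1\leq i \leq n} D_\infty(X_i) = D_\infty(\max_i X_i)$, again by monotonicity, which coincides with $0 + D_\infty(\max_i X_i)\cdot 1$ as required.

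For the second claim, I would argue by pathwise domination. Let $M_{i-1} = \max_{j<i} X_j$. At any step $i$, stopping yields $\max\{X_i, D_\infty(M_{i-1})\}$, whereas continuing always preserves the fall-back option of waiting until step $n+1$ and collecting $D_\infty(M_n) \geq D_\infty(M_{i-1})$ pointwise. Hence whenever $X_i \leq D_\infty(M_{i-1})$, the stopping reward $D_\infty(M_{i-1})$ is pathwise dominated by the continuation value $D_\infty(M_n)$, and the backward DP recursion will (weakly) prefer to continue. Breaking ties in favour of continuation — which does not affect the expected reward — gives a rational version of the optimal DP policy.

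The main subtlety is only in the second part: one must justify that a tie-breaking convention for the DP recursion preserves optimality, so that a \emph{rational} DP algorithm exists alongside the canonical one. Everything else is a direct manipulation of maxima using the monotonicity of $D_\infty$ and the convention $D_0(x) = x$, with no nontrivial calculation involved.
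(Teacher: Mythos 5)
Your proof is correct and follows essentially the same route as the paper: the case split $\tau\leq n$ versus $\tau=n+1$ together with the monotonicity of $D_\infty$ for the identity, and the observation that whenever $X_i \leq D_\infty(\max_{j<i}X_j)$ the continuation value is pathwise at least the stopping reward for the dynamic programming claim. Your explicit remark on breaking ties in favour of continuation is a small refinement the paper glosses over (it only treats the strict inequality case), but it does not change the argument.
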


The best competitive ratio in the $D_\infty$-prophet inequality is achieved, possibly among others, by the optimal dynamic programming algorithm, which is a rational algorithm by the previous Lemma. Hence, it suffices to prove upper bounds on rational algorithms. We use this observation to prove the next propositions.

\begin{proposition}\label{prop:infD}
In the $D_\infty$-prophet inequality, if $\inf_{x>0}\frac{D_\infty(x)}{x} = 0$, then it holds, in any order model, that
\begin{equation}\label{eq:th1}
\forall \ALG: \CR^{D_\infty}(\ALG) \leq \sup_{\A} \CR^0(\A)\; ,
\end{equation}
where the supremum is taken over all the online algorithms $\A$, and $\CR^0$ denotes the competitive ratio in the standard prophet inequality.
\end{proposition}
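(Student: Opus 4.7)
The plan is to combine a pointwise comparison between the $D_\infty$-prophet and the classical prophet rewards with a scaling argument that exploits the hypothesis $\inf_{x>0} D_\infty(x)/x = 0$. For an arbitrary algorithm $\ALG$ with stopping time $\tau$, the first step is to establish the pointwise bound
\[
\ALG^{D_\infty}(X_1,\ldots,X_n) \;\leq\; \ALG^0(X_1,\ldots,X_n) \;+\; D_\infty(\max_i X_i).
\]
When $\tau \leq n$ this follows from $\ALG^{D_\infty} = \max\{X_\tau, D_\infty(\max_{i<\tau} X_i)\}$ together with the monotonicity of $D_\infty$; when $\tau = n+1$ the inequality is an equality since $\ALG^0 = 0$ and $\ALG^{D_\infty} = D_\infty(\max_i X_i)$ (and in fact this is the decomposition stated by Lemma \ref{lem:rational-alg} for rational algorithms). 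Taking expectations over an instance $I$ and bounding $\E[\ALG^0(I)]$ by $\sup_\A \E[\A^0(I)]$ gives, for any instance $I$,
\[
\frac{\E[\ALG^{D_\infty}(I)]}{\E[\OPT(I)]} \;\leq\; \sup_{\A}\frac{\E[\A^0(I)]}{\E[\OPT(I)]} \;+\; \frac{\E[D_\infty(\max_i X_i)]}{\E[\OPT(I)]}.
\]

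Next, I would exhibit a hard $0$-prophet instance and rescale it. In each order model, the classical prophet upper bound is witnessed by a sequence of instances $\{I_k\}_k$ with bounded support $[0,M_k]$ for which $\sup_{\A} \E[\A^0(I_k)]/\E[\OPT(I_k)] \leq \sup_\A \CR^0(\A) + \eta_k$ with $\eta_k\to 0$. For adversarial order one can take the two-item instance $X_1 = 1$, $X_2 = 1/\epsilon_k$ with probability $\epsilon_k$; analogous bounded-support hardness families are available for random order and for IID, possibly after truncation at a large threshold. The quantity $\sup_\A \E[\A^0(I)]/\E[\OPT(I)]$ is invariant under rescaling of $I$ by any $c>0$: mapping each algorithm $\A$ for $cI$ to the algorithm $\tilde\A$ for $I$ that applies $\A$'s rule to the rescaled observations $cX_1,\ldots,cX_n$ is a bijection preserving the ratio. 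Hence for any $c_k>0$, the scaled instance $c_k I_k$ still satisfies the same hardness bound $\sup_\A \CR^0(\A) + \eta_k$.

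The final step is to pick $c_k$ so that the lookback term becomes negligible. On $c_k I_k$,
\[
\frac{\E[D_\infty(\max_i c_k X_i)]}{\E[\OPT(c_k I_k)]} \;\leq\; \frac{D_\infty(c_k M_k)}{c_k\E[\OPT(I_k)]} \;=\; \frac{M_k}{\E[\OPT(I_k)]}\cdot\frac{D_\infty(c_k M_k)}{c_k M_k},
\]
where the first factor is a fixed positive constant depending only on $I_k$. By the hypothesis $\inf_{y>0} D_\infty(y)/y=0$, for any target $\delta_k>0$ I can choose $c_k$ so that $c_k M_k$ lies in the set $\{y : D_\infty(y)/y<\delta_k\}$, making the lookback ratio at most $\delta_k\cdot M_k/\E[\OPT(I_k)]$. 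Choosing $\delta_k$ small enough (say so that this product is at most $1/k$), combining with the hardness bound, and letting $k\to\infty$ yields $\CR^{D_\infty}(\ALG) \leq \sup_\A \CR^0(\A)+\eta_k+1/k \to \sup_\A \CR^0(\A)$, which is the claim.

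The main obstacle is verifying the existence of a universally hard $0$-prophet instance with bounded support in each order model; in the adversarial case this is immediate, while for random order and IID it reduces to showing that truncating the standard upper-bound distributions at a sufficiently large threshold changes the competitive ratio only by an additive $o(1)$, which follows from a dominated-convergence argument as the threshold tends to infinity.
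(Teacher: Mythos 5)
Your core mechanism is the same as the paper's: bound $\ALG^{D_\infty}$ by $\ALG^0$ plus a $D_\infty(\max_i X_i)$ term, then rescale the instance so that the hypothesis $\inf_{x>0}D_\infty(x)/x=0$ makes the extra term negligible relative to $\E[\OPT]$. (A small plus of your version: your pointwise inequality holds for every algorithm, so you do not even need the reduction to rational algorithms via Lemma \ref{lem:rational-alg}.) Where you differ is the bookkeeping: the paper proves the rescaled bound for an \emph{arbitrary} instance $I$, handling unbounded support by splitting $D_\infty(\Xmax)$ at a free truncation level (using $D_\infty(x)\le x$ and $\E[\Xmax]<\infty$, sending that level to infinity at the very end), and only then takes the infimum over instances, exchanging $\inf_I$ and $\sup_\A$ because dynamic programming attains the per-instance supremum. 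You instead route everything through a pre-selected family of bounded-support, near-optimal hard instances for the classical problem, which is what forces your "main obstacle."

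That obstacle is the one genuine soft spot. For the random-order (and IID) models you cannot cite "the standard upper-bound distributions": the explicit hardness families in the literature witness only the known upper bounds (e.g.\ $0.723$ for random order), which are not known to equal $\sup_\A \CR^0(\A)$, so they do not by themselves give instances $I_k$ with $\sup_\A \E[\A^0(I_k)]/\E[\OPT(I_k)]\le \sup_\A \CR^0(\A)+\eta_k$. The existence of such $I_k$ instead follows from the abstract argument the paper uses: the per-instance supremum is attained by the dynamic programming algorithm $\A_*$, so $\inf_I \sup_\A \E[\A^0(I)]/\E[\OPT(I)]=\CR^0(\A_*)\le \sup_\A \CR^0(\A)$, and the infimum supplies the sequence $I_k$. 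Your truncation step is then fine (truncating at $T$ can only decrease $\sup_\A\E[\A^0(I)]$, while $\E[\OPT(I)]$ drops by $\E[(\Xmax-T)_+]\to 0$, so the ratio changes by $o(1)$), and the rest of your scaling argument, including the choice of $c_k$ and $\delta_k$, is correct. With that existence justification patched in, your proof goes through; note that the paper's arbitrary-instance formulation avoids the issue entirely.
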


Proposition \ref{prop:infD} implies that if $\inf_{x>0} \frac{D_\infty(x)}{x} = 0$, then, in any order model, any upper bound on the competitive ratios of all algorithms in the classical prophet inequality is also an upper bound on the competitive ratios of all algorithm in the $D_\infty$-prophet inequality. Consequently, for surpassing the upper bounds of the classical prophet inequality, it is necessary to have, for some $\gamma > 0$, that $D_\infty(x) \geq \gamma x$ for all $x \geq 0$. 
Furthermore, the next Proposition allows giving upper bounds in the $D_\infty$-prophet inequality that depend only on $\inf_{x>0} \frac{D_\infty(x)}{x}$.

\begin{proposition}\label{prop:Dinf<gamma}
Let $\gamma = \inf_{x>0} D_\infty(x)/x$, and $0<a<b$. Consider an instance $I$ of distributions with support in $\{0,a,b\}$, then in any order model and for any algorithm $\ALG$ we have that
\[
\CR^{D_\infty}(\ALG)
\leq \sup_\A \frac{\E[\A^\gamma(I)]}{\E[\OPT(I)]}\;,
\]
where $\E[\A^\gamma(I)]$ is the reward of $\A$ if all the decay functions were equal to $x \mapsto \gamma x$.
\end{proposition}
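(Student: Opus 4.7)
The plan is to exploit the scale-invariance of the $\gamma$-prophet inequality. Since $\CR^{D_\infty}(\ALG) = \inf_J \E[\ALG^{D_\infty}(J)]/\E[\OPT(J)]$, it is enough to exhibit a single instance $J$ on which $\ALG$'s $D_\infty$-ratio is dominated by the claimed right-hand side. I would take $J = I_\lambda$, a rescaled copy of $I$ with atoms $0, \lambda a, \lambda b$ and the same distribution structure as $I$, for a scaling factor $\lambda > 0$ chosen so that $D_\infty$ acts essentially like $x \mapsto \gamma x$ on the support of $I_\lambda$.

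Given $\eps > 0$, the key step is to pick $\lambda$ so that $D_\infty(\lambda a) \leq (\gamma+\eps)\lambda a$ and $D_\infty(\lambda b) \leq (\gamma+\eps)\lambda b$ simultaneously, which I expect to be possible by the definition of $\gamma = \inf_{x>0} D_\infty(x)/x$. Given such a $\lambda$, write $M = \max_{i<\tau} X_{\pi(i)}$ for the largest rejected value along a realization with stopping time $\tau$. Using the monotonicity of $D_\infty$ and the bound on the support, one obtains the pointwise inequality
\[
\ALG^{D_\infty}(X_1,\ldots,X_n) = \max\{X_{\pi(\tau)}, D_\infty(M)\} \leq \max\{X_{\pi(\tau)}, (\gamma+\eps) M\} \leq \max\{X_{\pi(\tau)}, \gamma M\} + \eps M,
\]
where the last inequality follows from a short case analysis on whether $X_{\pi(\tau)}$ lies below $\gamma M$, inside $[\gamma M,(\gamma+\eps)M]$, or above $(\gamma+\eps)M$. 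The middle expression equals $\A^\gamma(X_1,\ldots,X_n)$ for the algorithm $\A$ that mimics $\ALG$'s stopping rule but is run in the $\gamma$-prophet inequality. Taking expectations and using $M \leq \max_i X_i$,
\[
\E[\ALG^{D_\infty}(I_\lambda)] \leq \E[\A^\gamma(I_\lambda)] + \eps \E[\OPT(I_\lambda)] \leq \sup_{\A'} \E[{\A'}^\gamma(I_\lambda)] + \eps \E[\OPT(I_\lambda)].
\]

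Because $x \mapsto \gamma x$ is linear and $\OPT$ scales linearly with the instance values, the $\gamma$-prophet competitive ratio is scale-invariant, so $\sup_{\A'} \E[{\A'}^\gamma(I_\lambda)]/\E[\OPT(I_\lambda)] = \sup_{\A'} \E[{\A'}^\gamma(I)]/\E[\OPT(I)]$. Dividing the previous display by $\E[\OPT(I_\lambda)]$ and combining with $\CR^{D_\infty}(\ALG) \leq \E[\ALG^{D_\infty}(I_\lambda)]/\E[\OPT(I_\lambda)]$, then letting $\eps \to 0$, gives the claim. The same argument applies uniformly in the three order models, since rescaling preserves both the observation order and the IID property.

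The delicate point I expect is securing a single $\lambda$ at which $D_\infty(x)/x$ is near $\gamma$ at both $x = \lambda a$ and $x = \lambda b$. When the infimum of $D_\infty(x)/x$ is approached as $x \to 0$ or $x \to \infty$, this is immediate: both $\lambda a$ and $\lambda b$ lie in the near-infimum region when $\lambda$ is small or large, respectively. In more exotic situations one may need to exploit extra regularity of $D_\infty$ inherited from its being a pointwise limit of decay functions, or to enlarge the auxiliary instance beyond a pure rescaling of $I$.
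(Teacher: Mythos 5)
There is a genuine gap, and it is exactly the point you flag as ``delicate'': your argument needs a single scaling $\lambda$ for which $D_\infty(x)/x \leq \gamma+\eps$ simultaneously at $x=\lambda a$ and $x=\lambda b$, and no such $\lambda$ need exist. The definition $\gamma=\inf_{x>0}D_\infty(x)/x$ only guarantees a sequence $s_k$ of \emph{single} points with $D_\infty(s_k)/s_k\to\gamma$; the infimum may be approached only near one finite scale. Concretely, take $D_\infty(x)=x$ for $x\leq\gamma$, $D_\infty(x)=\gamma$ for $\gamma\leq x\leq 1$, and $D_\infty(x)=x-(1-\gamma)$ for $x\geq 1$ (a valid decay function: non-decreasing and $\leq x$). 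Then $D_\infty(x)/x$ equals $\gamma$ only at $x=1$ and the sublevel set $\{x: D_\infty(x)/x\leq\gamma+\eps\}$ is an interval around $1$ whose endpoints have ratio tending to $1$ as $\eps\to 0$; since $\lambda b/\lambda a = b/a>1$ is fixed, for small $\eps$ no $\lambda$ places both $\lambda a$ and $\lambda b$ in that set. So your pointwise domination $D_\infty(M)\leq(\gamma+\eps)M$ on the whole support fails for an algorithm that might reject the value $b$, and the fallback suggestions (extra regularity of $D_\infty$, enlarging the instance) are not carried out. The remaining ingredients of your write-up (the mimicking algorithm $\A$ with the same stopping rule, the bound $\max\{X,(\gamma+\eps)M\}\leq\max\{X,\gamma M\}+\eps M$, scale-invariance of the $\gamma$-problem, and absorbing $\eps M$ into $\eps\,\E[\OPT]$) are fine; the failure is confined to, but fatal for, the two-point control.

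The paper's proof shows what is needed to close this, and it is precisely why the statement is restricted to supports $\{0,a,b\}$. One first reduces to algorithms that reject $0$ and always accept $b$ (the optimal dynamic-programming algorithm has this property and is rational in the sense of Lemma~\ref{lem:rational-alg}, so an upper bound for it transfers to every algorithm). For such an algorithm on a three-point instance, $D_\infty$ is never evaluated at $b$: by Lemma~\ref{lem:rational-alg} the reward is $\E[\ALG^0(I)]+D_\infty(a)\Pr(\tau=n+1,\ \max_i X_i=a)$, i.e.\ the decay function enters at the single point $a$ only. A single rescaling $\lambda=s_k/a$ then suffices, turning $D_\infty(a)$ into $D_\infty(s_k)$ with $D_\infty(s_k)/s_k\to\gamma$, and the limit recombines (again via Lemma~\ref{lem:rational-alg}, now in the $\gamma$-model) into $\E[\ALG^\gamma(I)]\leq\sup_\A\E[\A^\gamma(I)]$. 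You would need to incorporate this reduction to rational algorithms that never reject $b$ — or some substitute for it — to make your rescaling argument go through.
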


The core idea for proving this proposition is that rescaling an instance, i.e. considering $(r X_i)_{i \in [n]}$ instead of $(X_i)_{i \in [n]}$, has no impact in the classical prophet inequality. However, in the $D_\infty$-prophet inequality, rescaling can be exploited to adjust the ratio $\frac{D_\infty(rx)}{rx}$. By considering instances with random variables taking values in $\{0,a,b\}$ almost surely, where $a<b$, a reasonable algorithm facing such an instance would never reject the value $b$. Consequently, the value it recovers from rejected items is either $D_\infty(0) = 0$ or $D_\infty(a)$. Rescaling this instance by a factor $r = s/a$ and taking the ratio to the expected maximum, the term $\frac{D_\infty(s)}{s}$ appears, with $s$ a free parameter that can be chosen to satisfy $\frac{D_\infty(s)}{s} \to \inf_{x>0} \frac{D_\infty(x)}{x} = \gamma_\D$.

As a consequence, if $\inf_{x>0} \frac{D_\infty(x)}{x} = \gamma$, then any upper bound obtained in the $\gamma$-prophet inequality (when the decay functions are all equal to $x \mapsto \gamma x$) using instances of random variables $(X_1,\ldots,X_n)$ satisfying $X_i \in \{0,a,b\}$ a.s. for all $i$, is also an upper bound in the $D_\infty$-prophet inequality.

\paragraph{Implication}
Consider any sequence $\D$ of decay functions, and define
\[
\gamma_\D := \inf_{x>0} \left\{ \frac{D_\infty(x)}{x} \right\}
= \inf_{x>0} \inf_{j\geq 1} \left\{ \frac{ D_j(x)}{x} \right\}\;.
\]
For any $x >0$ and $j \geq 1$ it holds that $D_j(x) \geq \gamma_\D x$, therefore, any guarantees on the competitive ratio of an algorithm in the $\gamma_\D$-prophet inequality are valid in the $\D$-prophet inequality, under any order model.
Furthermore, combining Theorem \ref{thm:neg-D} and Proposition \ref{prop:Dinf<gamma}, we obtain that for any instance $I$ of random variables  taking values in a set $\{0,a,b\}$ it holds that

\[
\forall \ALG: \CR^\D(\ALG) \leq \sup_\A \frac{\E[\A^{\gamma_\D}(I)]}{\E[\OPT(I)]}\;,
\]
with an additional term of order $O(n^{-1/3})$ in the IID model. In the particular case where $\gamma_\D = 0$, Proposition \ref{prop:Dinf<gamma} with Theorem \ref{thm:neg-D} give a stronger result, showing that no algorithm can surpass the upper bounds of the classical prophet inequality. This is true also for the IID model since the instances used to prove the tight upper bound of $\approx 0.745$ are of arbitrarily large size \citep{hill1982iidUpper}.

Therefore, by studying the $\gamma$-prophet inequality for $\gamma \in [0,1]$, we can prove upper bounds and lower bounds on the $\D$-prophet inequality for any sequence $\D$ of decay functions.

\section{The $\gamma$-prophet inequality}\label{sec:gamma-prophet}

We study in this section the $\gamma$-prophet inequality, where all the decay functions are equal to $x \mapsto \gamma x$, for some $\gamma \in [0,1]$. 
For any algorithm $\ALG$ with stopping time $\tau$ and random variables $X_1,\ldots,X_n$, if the observation order is $\pi$, we use the notation
\[
\ALG^\gamma(X_1,\ldots,X_n) = \max\{X_{\pi(\tau)}, \gamma X_{\pi(\tau-1)}, \ldots, \gamma X_{\pi(1)}\} \; .
\]
and we denote by $\CR^\gamma(\ALG)$ the competitive ratio of $\ALG$ in this setting.
In the following, we provide theoretical guarantees for the $\gamma$-prophet inequality.

For each observation order, we first derive upper bounds on the competitive ratio of any algorithm, depending on $\gamma$, using only hard instances satisfying the condition of Proposition \ref{prop:Dinf<gamma}. This would guarantee that the upper bounds extend to the $\D$-prophet inequality if $\gamma_\D = \gamma$.
Then, we design single-threshold algorithms with well-chosen thresholds depending on $\gamma$ and the distributions, with competitive ratios improving with $\gamma$. A crucial property of single-threshold algorithms, which we use to estimate their competitive ratios, is that their reward satisfies
\begin{equation}\label{eq:STA}
\ALG^\gamma(X_1,\ldots,X_n)
= \ALG^0(X_1,\ldots,X_n) + \gamma (\max_i X_i) \indic{(\max_i X_i < \theta)}\;. 
\end{equation}
The additional term appearing due to $\gamma$ depends only on $\max_{i \in [n]} X_i$, which is the reward of the prophet against whom we compete. This property is not satisfied by more general class of algorithms such as multiple-threshold algorithms, where each observation $X_{\pi(i)}$ is compared with a threshold $\theta_i$. 

\begin{remark}
We only consider instances with continuous distributions in the proofs of lower bounds. The thresholds $\theta$ considered are such that $\Pr(\max_{i\in [n]} X_i \geq \thresh) = g(\gamma, n, \pi)$, with $g$ depending on $\gamma$, the order model $\pi$ and the size of the instance $n$. Such a threshold is always guaranteed to exist when the distributions are continuous.
However, as in the prophet inequality, the algorithms can be easily adapted to non-continuous distributions by allowing stochastic tie-breaking. A detailed strategy for doing this can be found for example in \citep{correa2021prophet}.    
\end{remark}

Before delving into the study of the different models, we provide generic lower and upper bounds, which depend solely on the bounds of the classical prophet inequality and $\gamma$.

\begin{proposition}\label{prop:gamma-generic-bounds}
In any order model, if $\alpha$ is a lower bound in the classical prophet inequality, and $\beta$ an upper bound, then, in the $\gamma$-prophet inequality
\begin{enumerate}[label=(\roman*)]
    \item there exists a trivial algorithm with a competitive ratio of at least $\max\{\gamma,\alpha\}$,
    \item the competitive ratio of any algorithm is at most $(1-\gamma) \beta + \gamma$.
\end{enumerate}
\end{proposition}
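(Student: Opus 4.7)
The plan is to dispatch the two parts of the proposition separately, with (i) via two baseline strategies and (ii) via a clean pointwise decoupling inequality combined with the existence of a hard instance for the classical prophet problem.

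For (i), I would exhibit two ``trivial'' baseline algorithms and pick whichever is better depending on the regime of $\gamma$ relative to $\alpha$. The first, $\A_1$, simply runs any $\alpha$-competitive algorithm for the classical prophet inequality in the $\gamma$-prophet setting without modification. Because the $\gamma$-prophet reward always dominates the current-item value, we have $\A_1^{\gamma}(X_1,\ldots,X_n) \geq X_{\pi(\tau)} \indic{\tau \leq n} = \A_1^{0}(X_1,\ldots,X_n)$ sample-path-wise, so $\CR^\gamma(\A_1) \geq \CR^0(\A_1) \geq \alpha$. The second, $\A_2$, rejects every item and collects $\gamma \max_{i\in [n]} X_i$ at step $n+1$, giving competitive ratio exactly $\gamma$. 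Taking the one with the larger guarantee yields $\max\{\alpha,\gamma\}$.

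For (ii), the key ingredient is the pointwise inequality
\[
\max\{a,\gamma b\} \leq (1-\gamma)\, a + \gamma \max(a,b)
\]
for $a,b \geq 0$ and $\gamma \in [0,1]$, which I would verify by a short case split on whether $a \geq b$ or $a < b$ (and, in the latter case, further on whether $a \geq \gamma b$). Given any algorithm $\ALG$ with stopping time $\tau$, applying this inequality with $a = X_{\pi(\tau)}$ and $b = \max_{i<\tau} X_{\pi(i)}$ when $\tau \leq n$, and observing that when $\tau = n+1$ the reward $\ALG^{\gamma} = \gamma \max_{i\in [n]} X_i$ already matches the bound with $a=0$, yields on every sample path
\[
\ALG^{\gamma}(X_1,\ldots,X_n) \leq (1-\gamma)\, \ALG^{0}(X_1,\ldots,X_n) + \gamma \max_{i\in [n]} X_i,
\]
where $\ALG^0 := X_{\pi(\tau)} \indic{\tau \leq n}$ is the reward of the very same stopping rule reinterpreted as an algorithm for the classical prophet inequality.

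To close, I take expectations and invoke the existence of a hard instance for the classical problem. Since $\beta$ upper bounds the competitive ratio of every algorithm in the classical prophet inequality under the order model considered, for each $\eps > 0$ there is an instance $I_\eps$ with $\E[\ALG^{0}(I_\eps)] \leq (\beta+\eps)\E[\OPT(I_\eps)]$. Plugging into the expectation of the pointwise bound gives
\[
\E[\ALG^{\gamma}(I_\eps)] \leq \bigl((1-\gamma)(\beta+\eps) + \gamma\bigr)\E[\OPT(I_\eps)],
\]
and sending $\eps \to 0$ yields $\CR^{\gamma}(\ALG) \leq (1-\gamma)\beta + \gamma$, as desired. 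There is no real obstacle in this argument: the only points requiring care are the handling of the $\tau = n+1$ branch inside the pointwise bound and the fact that the stopping rule of an algorithm designed for the $\gamma$-prophet problem is still a valid (if suboptimal) stopping rule for the classical one, so the upper bound $\beta$ legitimately applies.
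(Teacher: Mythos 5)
Your proof is correct and follows essentially the same route as the paper: the same two trivial baselines for part (i), and for part (ii) the same chord inequality $\max\{a,\gamma b\}\leq(1-\gamma)a+\gamma\max(a,b)$ (which the paper phrases as convexity of $\gamma\mapsto\max(x,\gamma y)$) applied pathwise with $a=X_{\pi(\tau)}$, $b=\max_{i<\tau}X_{\pi(i)}$. The only difference is cosmetic: you close by picking a near-worst-case instance for the induced classical algorithm $\ALG^0$, whereas the paper bounds $\E[\ALG^0(I)]$ by the instance-wise optimum $\beta_I$ and then takes the infimum over instances via the dynamic-programming algorithm; both steps are valid and yield the same bound.
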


\subsection{Adversarial order}
We first consider the adversarial order model, and prove the upper bound of $\frac{1}{2-\gamma}$. Then, we provide a single-threshold algorithm with a competitive ratio matching this upper bound, hence fully solving the $\gamma$-prophet inequality in this adversarial order model.

\begin{theorem}\label{thm:gamma-adv}
In the adversarial order model, the competitive ratio of any algorithm is at most $\frac{1}{2-\gamma}$. Furthermore, there exists a single threshold algorithm with a competitive ratio $\frac{1}{2-\gamma}$: given any instance $(F_1,\dots, F_n)$, this is achieved with the threshold $\thresh$ satisfying 
\[
\textstyle \Pr_{X_1\sim F_1, \dots, X_n \sim F_n}(\max_{i\in [n]} X_i \leq \thresh) = \frac{1}{2-\gamma}\;.
\]
\end{theorem}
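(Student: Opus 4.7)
The plan is to prove the two halves separately: first the upper bound via a tailored two-point hard instance, then the matching single-threshold algorithm.

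For the upper bound, I would build a two-item instance of the type allowed by Proposition \ref{prop:Dinf<gamma}. Take $X_1 = a$ deterministically, and $X_2 = b$ with probability $p$ and $X_2 = 0$ otherwise, where $a,b>0$ and $p \in (0,1)$ will be chosen to make the algorithm indifferent. By the ``rational algorithm'' principle (Lemma \ref{lem:rational-alg}), the only meaningful choices for an algorithm are to stop at step $1$ (reward $a$) or to continue and get $\max\{X_2, \gamma a\}$ at step $2$, whose expectation is $pb + (1-p)\gamma a$ (provided $b \geq \gamma a$, which will hold for small $p$). Setting these equal gives $b = a(1-\gamma+\gamma p)/p$, making every algorithm achieve expected reward exactly $a$ regardless of its decisions. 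The prophet's expected reward is $pb + (1-p)a = a[2-\gamma-p(1-\gamma)]$. Taking $p \to 0^+$, the ratio tends to $\frac{1}{2-\gamma}$, establishing the upper bound.

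For the algorithmic lower bound, let $\theta$ be the threshold with $q := \Pr(\max_i X_i \leq \theta) = \frac{1}{2-\gamma}$ (existence follows from the continuity assumption). Let $\ALG$ denote the single-threshold algorithm that accepts the first $X_i \geq \theta$, or no item if none exists. By equation \eqref{eq:STA} applied to this algorithm,
\[
\E[\ALG^\gamma] = \E[\ALG^0] + \gamma\, \E\!\left[\max_i X_i \cdot \indic{\max_i X_i < \theta}\right].
\]
For the classical part, the standard Samuel-Cahn decomposition gives
\[
\E[\ALG^0] = \sum_{i=1}^n q_{i-1}\, \E[(X_i - \theta)^+] + \theta(1-q),
\]
where $q_{i-1} = \prod_{j<i} F_j(\theta) \geq q$. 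Using this lower bound on each $q_{i-1}$ and the standard chain $\sum_i \E[(X_i-\theta)^+] \geq \E[(\max_i X_i - \theta)^+]$, followed by $\E[(\max-\theta)^+] = \E[\max] - \E[\max \indic{\max<\theta}] - \theta(1-q)$, I get
\[
\E[\ALG^0] \geq q\,\E[\max_i X_i] - q\,\E[\max_i X_i\, \indic{\max_i X_i<\theta}] + \theta(1-q)^2.
\]

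Combining and collecting terms yields
\[
\E[\ALG^\gamma] \geq q\,\E[\max_i X_i] + (\gamma - q)\,\E[\max_i X_i\,\indic{\max_i X_i<\theta}] + \theta(1-q)^2.
\]
A direct check shows $\gamma - q = -(1-\gamma)^2/(2-\gamma) \leq 0$, so bounding the nonnegative quantity $\E[\max_i X_i\,\indic{\max_i X_i<\theta}] \leq \theta q$ only decreases the right-hand side, giving
\[
\E[\ALG^\gamma] \geq q\,\E[\max_i X_i] + \theta\bigl[(\gamma - q)q + (1-q)^2\bigr] = q\,\E[\max_i X_i] + \theta\bigl[1 - q(2-\gamma)\bigr].
\]
By the choice $q = 1/(2-\gamma)$ the bracketed term vanishes, yielding $\E[\ALG^\gamma] \geq \frac{1}{2-\gamma} \E[\max_i X_i]$, as required.

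The main obstacle I anticipate is not computational but conceptual: finding the right place to spend the ``budget'' coming from $\gamma$. The calculation only closes because the trivial upper bound $\E[\max_i X_i\indic{\max_i X_i < \theta}] \leq \theta q$ is tight enough on the instances that make the classical Samuel-Cahn inequality tight (essentially point masses at $\theta$), which is precisely why the single-threshold algorithm cannot be improved in this adversarial setting.
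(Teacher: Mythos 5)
Your proof is correct and follows essentially the same route as the paper: the upper bound uses the same two-item indifference instance (deterministic first value, rare high second value, calibrated so stopping and continuing both yield $a$), and the lower bound is the same single-threshold analysis combining the Samuel--Cahn-type bound with the $\gamma$-bonus term from \eqref{eq:STA} and the inequality $\E[\Xmax\indic{\Xmax\leq\thresh}]\leq \thresh\,q$. The only difference is bookkeeping: the paper groups terms to get $\min\{\p,\gamma+(1-\p)^2/\p\}\,\E[\Xmax]$ and optimizes $\p$, whereas you plug in $q=\frac{1}{2-\gamma}$ directly and check that the residual $\thresh$-coefficient vanishes.
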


The upper bound in the previous theorem is proved using instances satisfying the condition of Proposition \ref{prop:Dinf<gamma}. Hence it extends to the $D_\infty$- then to the $\D$-prophet inequality, with $\gamma = \gamma_\D$, by Proposition \ref{prop:Dinf<gamma} and Theorem \ref{thm:neg-D}.


\subsection{Random order}

Consider now that the items are observed in a uniformly random order $X_{\pi(1)},\ldots, X_{\pi(n)}$, and $\Xmax= \max_{i\in [n]} X_i$. As for the adversarial model, we first prove an upper bound on the competitive ratio as a function of $\gamma$, and then prove a lower bound for a single-threshold algorithm. However, for this model, there is a gap between the two bounds, as illustrated in Figure~\ref{fig:results}.

We first prove an upper bound that depends on $\gamma$, matching the upper bound $\sqrt{3}-1$ of \citet{correa2021prophet} when $\gamma = 0$ and equal to $1$ when $\gamma = 1$.
Our single-threshold algorithm has a competitive ratio of at least $(1 - \frac{1}{e})$ when $\gamma = 0$, which is the best competitive ratio of a single threshold algorithm in the prophet inequality \citep{esfandiari2017prophetsecretary, correa2021prophet}, and equal to $1$ for $\gamma = 1$.

\begin{theorem}\label{thm:gamma-rd}
The competitive ratio of any algorithm $\ALG$ in the $\gamma$-prophet inequality with random order satisfies 
\[\CR^\gamma(\ALG) \leq (1-\gamma)^{3/2}(\sqrt{3-\gamma} - \sqrt{1-\gamma}) + \gamma \;.\]

Furthermore, denoting by $\p_\gamma$ is the unique solution to the equation $1 - (1-\gamma)\p = \frac{1-\p}{-\log \p}$, the single-threshold algorithm $\ALG_\theta$ with $\Pr_{X_1\sim F_1, \dots, X_n \sim F_n}(\max_{i\in [n]} X_i \leq \thresh) = p_\gamma$ satisfies
\[
\CR^\gamma(\ALG) \geq 1 - (1-\gamma) p_\gamma\;.
\]
\end{theorem}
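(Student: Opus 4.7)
The theorem decomposes into an upper bound and a single-threshold lower bound, which I would address separately.

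For the upper bound, my plan is to use the reduction afforded by Proposition \ref{prop:Dinf<gamma}. I would construct iid hard instances with three-point support $\{0, a, b\}$ and with the masses on $a$ and $b$ scaling as $1/n$; letting $n \to \infty$ produces a Poissonized instance in which the $a$- and $b$-valued observations form two independent Poisson point processes on $[0,1]$. The optimal algorithm on this limit has a tractable finite-state structure, with current state equal to the best value held so far, and the $\gamma$-decay entering only through the fallback values $\gamma a$ and $\gamma b$ obtainable if the algorithm reaches the end of the sequence. Writing out the optimal expected reward, dividing by $\E[\OPT]$, and minimizing the ratio over the four parameters (the two Poisson rates and $a, b$) should yield the explicit form $(1-\gamma)^{3/2}(\sqrt{3-\gamma}-\sqrt{1-\gamma})+\gamma$. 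Since the hard instances satisfy the hypothesis of Proposition \ref{prop:Dinf<gamma}, Theorem \ref{thm:neg-D} then transfers the bound to the full $\D$-prophet inequality.

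For the lower bound, I would take $p := p_\gamma$ and let $\theta$ be such that $\Pr(Y<\theta) = p$, where $Y := \max_i X_i$. Since $\ALG_\theta$ is a single-threshold algorithm, identity \eqref{eq:STA} yields
\[
\E[\ALG_\theta^\gamma] \;=\; \E[\ALG_\theta^0] \;+\; \gamma\, \E[Y\,\indic{Y<\theta}].
\]
For the classical term I would invoke the standard single-threshold prophet-secretary estimate
\[
\E[\ALG_\theta^0] \;\geq\; (1-p)\,\theta \;+\; \tfrac{1-p}{-\log p}\,\E[(Y-\theta)_+],
\]
which recovers $(1-1/e)\E[Y]$ at $p=1/e$. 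Combining this with the decomposition $\E[Y] = \E[Y\,\indic{Y<\theta}] + (1-p)\theta + \E[(Y-\theta)_+]$ and the trivial bound $\E[Y\,\indic{Y<\theta}] \leq p\,\theta$, a line of algebra shows that
\[
\E[\ALG_\theta^\gamma] - \bigl(1-(1-\gamma)p\bigr)\E[Y] \;=\; (1-\gamma)(1-p)\bigl[p\theta - \E[Y\,\indic{Y<\theta}]\bigr] \;\geq\; 0,
\]
provided the coefficient $\tfrac{1-p}{-\log p}$ equals $1-(1-\gamma)p$ exactly, which is the defining equation of $p_\gamma$. The choice of $p_\gamma$ thus arises naturally as the unique value making the $\E[(Y-\theta)_+]$ coefficient on both sides match, with the residual surplus on the $\theta$ term absorbing the deficit on $\E[Y\,\indic{Y<\theta}]$.

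The main obstacle is the upper bound: the Poissonization step must be carried out rigorously, and the resulting four-parameter minimization requires a judicious reparametrization (the presence of the factor $\sqrt{3-\gamma}-\sqrt{1-\gamma}$ suggests that the optimal instance is characterized by a quadratic optimality condition in a cleverly chosen variable, reducing at $\gamma=0$ to the construction underlying the $\sqrt{3}-1$ bound). The lower bound, by contrast, reduces to the short algebraic identity above once \eqref{eq:STA}, the classical single-threshold estimate, and the defining equation of $p_\gamma$ are combined.
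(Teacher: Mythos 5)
Your lower-bound argument is correct and is essentially the paper's proof: identity \eqref{eq:STA}, the single-threshold tail estimate of Correa et al.\ ($\Pr(\ALG^0\geq x)=1-p$ for $x<\theta$ and $\Pr(\ALG^0\geq x)\geq \frac{1-p}{-\log p}\Pr(\Xmax\geq x)$ for $x\geq\theta$), the decomposition of $\E[\Xmax]$, and the bound $\E[\Xmax\indic{\Xmax<\theta}]\leq p\theta$; your algebraic identity with the defining equation of $p_\gamma$ is just a rearrangement of the paper's ``minimum of $1-(1-\gamma)p$ and $\frac{1-p}{-\log p}$'' step, and it checks out.

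The upper bound plan, however, has a genuine gap: IID instances cannot yield the claimed bound. At $\gamma=0$ the target is $\sqrt{3}-1\approx 0.732$, but on \emph{every} IID instance the optimal dynamic programming algorithm guarantees at least $\approx 0.745$ of the prophet (Hill--Kertz, Correa et al.), and it only does better when $\gamma>0$; hence $\inf_I \sup_{\A}\E[\A^\gamma(I)]/\E[\OPT(I)]\geq 0.745$ over IID instances, no matter how you Poissonize or optimize the rates and the values $a,b$. This is exactly why the paper's IID-model upper bound (Theorem \ref{thm:gamma-iid}), proved with precisely the Poissonized three-point IID family you describe, only reaches $\approx 0.778$ at $\gamma=0$. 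Note also that the hypothesis of Proposition \ref{prop:Dinf<gamma} concerns the support $\{0,a,b\}$, not identical distributions, so there is no need to restrict to IID instances for the transfer to the $\D$-prophet inequality. The paper's hard instance for the random-order bound is heterogeneous: one item equal to $a$ deterministically, plus $n$ items equal to $n$ with probability $1/n^2$ and $0$ otherwise, observed in uniformly random order. The analysis conditions on the (uniform) arrival position of the deterministic item and on the optimal cutoff step $j$ before which the dynamic program rejects $a$, giving an expected reward at most $1+2\gamma a+(1-\gamma)^2a^2+o(1)$ against $\E[\OPT]=1+a+o(1)$; minimizing the ratio over $a$ (at $a=\sqrt{(3-\gamma)/(1-\gamma)}-1$) produces $(1-\gamma)^{3/2}(\sqrt{3-\gamma}-\sqrt{1-\gamma})+\gamma$. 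Your proposal would need to be redirected to this kind of non-IID construction; as written, the four-parameter IID minimization cannot get below $0.745$ at $\gamma=0$.
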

Similarly to the adversarial order model, we used instances satisfying the condition of Proposition \ref{prop:Dinf<gamma} to prove the upper bound, thus it extends to the $\D$-prophet inequality with $\gamma=\gamma_\D$.

While the equation defining $p_\gamma$ cannot be solved analytically, the solution can easily be computed numerically for any $\gamma \in [0,1]$.
Before moving to the IID case, we propose in the following a more explicit lower bound derived from Theorem~\ref{thm:gamma-rd}.

\begin{corollary}\label{cor:gamma-rd}
In the random order model, the single threshold algorithm with a threshold $\thresh$ satisfying $\Pr(\max_{i\in [n]} X_i \geq \thresh) = \frac{1/e}{1 - (1-1/e)\gamma}$ has a competitive ratio of at least $1 - \frac{(1-\gamma)/e}{1-(1-1/e)\gamma}$.
\end{corollary}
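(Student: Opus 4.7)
The plan is to derive the corollary from Theorem~\ref{thm:gamma-rd} by instantiating its analysis at an explicit, closed-form threshold rather than at the implicitly-defined optimum $p_\gamma$. Setting $p := \Pr(\Xmax \leq \thresh)$ (matching the theorem's convention; for continuous distributions the threshold condition in the statement corresponds to this value), the relevant quantity is
\[
p^\star \;:=\; \frac{1/e}{1-(1-1/e)\gamma},
\]
and the target lower bound $1 - \tfrac{(1-\gamma)/e}{1-(1-1/e)\gamma}$ is simply $1 - (1-\gamma)p^\star$, matching the functional form of the theorem's bound. The corollary therefore amounts to showing that the single-threshold algorithm with this specific $p^\star$ still enjoys the guarantee $1 - (1-\gamma)p^\star$.

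Implicit in the proof of Theorem~\ref{thm:gamma-rd} is a general intermediate bound, valid for every single-threshold algorithm parameterized by $p$, of the form
\[
\CR^\gamma(\ALG_\theta) \;\geq\; \min\!\left\{\,1-(1-\gamma)p,\; \frac{1-p}{-\log p}\,\right\},
\]
in which $p_\gamma$ is the unique $p$ that equalizes the two arguments (that is exactly what the equation $1-(1-\gamma)p = \tfrac{1-p}{-\log p}$ encodes). Since $p \mapsto 1-(1-\gamma)p$ is decreasing and $p \mapsto \tfrac{1-p}{-\log p}$ is increasing, the first argument is the minimizer whenever $p \geq p_\gamma$. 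Hence the corollary reduces to showing $p^\star \geq p_\gamma$, i.e.\ to the single scalar inequality
\[
\bigl(1-(1-\gamma)p^\star\bigr)(-\log p^\star) \;\leq\; 1-p^\star.
\]

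This verification is the main obstacle. Using $-\log p^\star = 1 + \log\!\bigl(1-(1-1/e)\gamma\bigr)$ together with the fractional identities $1-p^\star = \tfrac{(1-1/e)(1-\gamma)}{1-(1-1/e)\gamma}$ and $1-(1-\gamma)p^\star = \tfrac{1-(1-1/e)\gamma-(1-\gamma)/e}{1-(1-1/e)\gamma}$, and making the change of variables $v := e\bigl(1-(1-1/e)\gamma\bigr) \in [1,e]$ to clear denominators, the inequality becomes
\[
\Psi(v) \;:=\; (e-1)(v-1) - \bigl(v(e-2)+1\bigr)\log v \;\geq\; 0,
\]
with $\Psi(1) = \Psi(e) = 0$ (corresponding respectively to $\gamma = 1$ and $\gamma = 0$, the two endpoints where the corollary matches the theorem exactly). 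A short computation gives $\Psi''(v) = (1 - v(e-2))/v^2$, so $\Psi$ is convex on $[1, 1/(e-2)]$ and concave on $[1/(e-2), e]$. On the convex piece, $\Psi(1) = \Psi'(1) = 0$ forces $\Psi \geq 0$; on the concave piece, $\Psi$ lies above the chord connecting its (already established nonnegative) value at $1/(e-2)$ to $\Psi(e)=0$, and is therefore nonnegative throughout.

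Granted $p^\star \geq p_\gamma$, the minimum in the intermediate bound equals $1-(1-\gamma)p^\star$; substituting the closed form of $p^\star$ yields exactly $1 - \tfrac{(1-\gamma)/e}{1-(1-1/e)\gamma}$, as claimed.
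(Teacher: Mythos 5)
Your proposal is correct, and it shares the paper's skeleton: both proofs start from the bound $\CR^\gamma(\ALG_\theta)\geq \min\{1-(1-\gamma)p,\ \tfrac{1-p}{-\log p}\}$ established in the proof of Theorem \ref{thm:gamma-rd}, and both boil down to verifying the single inequality $\tfrac{1-p^\star}{-\log p^\star}\geq 1-(1-\gamma)p^\star$ at $p^\star=\tfrac{1/e}{1-(1-1/e)\gamma}$. Where you differ is in how this inequality is checked. The paper observes that $p^\star\in[1/e,1]$ and that $x\mapsto \tfrac{1-x}{\log(1/x)}$ is concave, hence lies above its chord through the points $x=1/e$ and $x=1$; evaluating that chord at $p^\star$ happens to give exactly $1-\tfrac{(1-\gamma)/e}{1-(1-1/e)\gamma}$, which finishes the proof in two lines. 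You instead reformulate the inequality as $p^\star\geq p_\gamma$ (using that $p\mapsto 1-(1-\gamma)p$ is decreasing and $p\mapsto\tfrac{1-p}{-\log p}$ is increasing — a fact you assert without proof, though it is easily checked since $\log p+1/p-1>0$ on $(0,1)$), change variables to $v=e\bigl(1-(1-1/e)\gamma\bigr)\in[1,e]$, and show $\Psi(v)=(e-1)(v-1)-\bigl(v(e-2)+1\bigr)\log v\geq 0$ by splitting at the inflection point $v=1/(e-2)$: convexity with $\Psi(1)=\Psi'(1)=0$ on the left, concavity above the chord to $\Psi(e)=0$ on the right. I verified the algebra ($-\log p^\star=1+\log(v/e)+1=\log v$ after clearing denominators, $\Psi(1)=\Psi(e)=0$, $\Psi''(v)=(1-v(e-2))/v^2$), and the argument is sound. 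The detour through $p^\star\geq p_\gamma$ is logically equivalent to the paper's direct bound on the second argument of the minimum, so nothing is gained in generality; the paper's chord argument is shorter and avoids the case analysis, while your route is more mechanical but self-contained and does not rely on knowing the concavity of the logarithmic-mean-type function.
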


\subsection{IID Random Variables}
In the classical IID prophet inequality, \citep{hill1982iidUpper} showed that the competitive ratio of any algorithm is at most $\approx 0.745$. The proof of this upper bound is hard to generalize for the IID $\gamma$-prophet inequality. As an alternative, we prove a weaker upper bound, which equals $\approx 0.778$ for $\gamma = 0$ and $1$ for $\gamma = 1$, and the proof relies on instances of arbitrarily large size satisfying the condition of Proposition \ref{prop:Dinf<gamma}, hence the upper bound can be extended to the $\D$-prophet inequality.

Subsequently, we present a single-threshold algorithm with the same competitive ratio as the random order algorithm. However, the proof is different, leveraging the fact that the variables are identically distributed. More precisely, we introduce a single-threshold algorithm with guarantees that depend on the size $n$ of the instance, then we show that its competitive ratio is at least that of the algorithm presented in Theorem \ref{thm:gamma-rd}, with equality when $n$ approaches infinity.

Although it might look surprising that the obtained competitive ratio in the IID model is not better than that of the random-order model, the same behavior occurs in the classical prophet inequality.
Indeed, \cite{li2022query} established that no single-threshold algorithm can achieve a competitive ratio better than $1-1/e$ in the standard prophet inequality with IID random variables, which is also the best possible with a single-threshold algorithm in the random order. However, considering larger classes of algorithms, the competitive ratios achieved in the IID model are better than those of the random order model.

We describe the algorithm and give a first lower bound on its reward depending on the size of the instance in the following lemma.

\begin{lemma}\label{lem:gamma-iid}
Let $a_{n,\gamma}$ be the unique solution of the equation $\left( \frac{1}{(1-a/n)^n} - 1 \right) \left(\frac{1}{a} - 1\right) = \gamma$, then for any IID instance $X_1,\ldots,X_n$, the algorithm with threshold $\theta$ satisfying $\Pr(X_1 > \thresh) = \frac{a_{n,\gamma}}{n}$ has a reward of at least 
\[
\frac{1}{a_{n,\gamma}}\left(1-\Big(1-\frac{a_{n,\gamma}}{n}\Big)^n\right) \E[\max_{i\in[n]}X_i]\;.
\]
\end{lemma}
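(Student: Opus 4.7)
The plan is to derive a closed-form expression for $\E[\ALG^\gamma]$ and a matching upper bound on $\E[\Xmax]$ (with $\Xmax = \max_{i\in[n]}X_i$) whose coefficients combine so that the defining equation for $a_{n,\gamma}$ reduces the target inequality to an obvious non-negativity. Let $a = a_{n,\gamma}$, $q = a/n$, $p = 1-q$, and assume $F$ is continuous so that the threshold $\theta$ with $\Pr(X_1 > \theta) = q$ exists. The single-threshold identity~(\ref{eq:STA}) together with the geometric sum $\E[\ALG^0] = \sum_{i=1}^n p^{i-1}\E[X_1 \indic{X_1 > \theta}]$ (valid by the IID assumption) gives
\[
\E[\ALG^\gamma] = \frac{1-p^n}{q}\, \E[X_1 \indic{X_1 > \theta}] + \gamma\, \E[\Xmax \indic{\Xmax < \theta}].
\]

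For the upper bound on $\E[\Xmax] = \int_0^\infty (1-F(t)^n)\,dt$, I would split the integral at $\theta$. By Fubini the part on $[0,\theta]$ evaluates exactly to $(1-p^n)\theta + \E[\Xmax \indic{\Xmax < \theta}]$; on $[\theta,\infty)$ the elementary union bound $1-F(t)^n \leq n(1-F(t))$ yields at most $n\E[(X_1-\theta)_+] = n\E[X_1 \indic{X_1 > \theta}] - a\theta$. Summing the two pieces,
\[
\E[\Xmax] \leq \E[\Xmax \indic{\Xmax < \theta}] + n\, \E[X_1 \indic{X_1 > \theta}] - (a + p^n - 1)\,\theta,
\]
where Bernoulli's inequality $(1-a/n)^n \geq 1-a$ ensures $a + p^n - 1 \geq 0$.

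The remainder is algebra. Since $\frac{1-p^n}{q} = n\cdot\frac{1-p^n}{a}$, substituting the two displays into $\E[\ALG^\gamma] - \frac{1-p^n}{a}\E[\Xmax]$ makes the $\E[X_1 \indic{X_1 > \theta}]$ terms cancel exactly, leaving $\bigl(\gamma - \frac{1-p^n}{a}\bigr)\E[\Xmax\indic{\Xmax < \theta}] + \frac{(1-p^n)(a+p^n-1)}{a}\,\theta$. The defining equation for $a_{n,\gamma}$, rewritten as $\gamma = \frac{(1-p^n)(1-a)}{a\, p^n}$, yields the identity $\gamma - \frac{1-p^n}{a} = -\frac{(1-p^n)(a+p^n-1)}{a\, p^n}$. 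Hence the residual factors as
\[
\frac{(1-p^n)(a+p^n-1)}{a\, p^n}\,\bigl(\theta p^n - \E[\Xmax\indic{\Xmax < \theta}]\bigr) = \frac{(1-p^n)(a+p^n-1)}{a\, p^n}\, \E[(\theta - \Xmax)\indic{\Xmax < \theta}] \geq 0,
\]
giving the claimed lower bound.

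The only delicate point is the upper bound in the second step: one must keep the $\E[\Xmax\indic{\Xmax < \theta}]$ mass explicit so that it can pair against the look-back contribution in $\E[\ALG^\gamma]$, while simultaneously absorbing the right tail into $n\E[X_1 \indic{X_1 > \theta}]$. Once that decomposition is in place, the defining equation for $a_{n,\gamma}$ is literally what forces the residual to collapse to the non-negative quantity $\E[(\theta-\Xmax)\indic{\Xmax < \theta}]$, so no slack is wasted anywhere in the argument.
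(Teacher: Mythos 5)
Your proof is correct and uses essentially the same ingredients as the paper's: the geometric-sum evaluation of $\E[\ALG^0]$, the union bound $\E[(\Xmax-\thresh)_+]\le n\E[(X_1-\thresh)_+]$, Bernoulli's inequality, and the bound $\thresh\, F(\thresh)^n \ge \E[\Xmax\indic{\Xmax\le\thresh}]$. The only difference is cosmetic bookkeeping — the paper arranges the bound as $\min$ of two coefficients of $\E[\Xmax\indic{\Xmax>\thresh}]$ and $\E[\Xmax\indic{\Xmax\le\thresh}]$ and chooses $a_{n,\gamma}$ to equalize them, whereas you substitute the defining equation directly and show the residual is the non-negative quantity $\E[(\thresh-\Xmax)\indic{\Xmax<\thresh}]$.
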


We can prove that the reward presented in the Lemma above is strictly better than that of the random order model. However, both are asymptotically equal as we show in the following theorem.

\begin{theorem}\label{thm:gamma-iid}
The competitive ratio of any algorithm in the IID $\gamma$-prophet inequality is at most 
\[
U(\gamma) = 1 - (1-\gamma)\frac{e^2 \log(3-\gamma) - (2-\gamma)}{2(2e^2-1) - (3e^2 - 1) \gamma} \;.
\]
In particular, $U$ is increasing, $U(0) = \frac{4-\log 3}{2(2-\frac{1}{e}^2)} \approx 0.778$ and $U(1) = 1$. Furthermore, there exists a single-threshold algorithm $\ALG_\theta$ satisfying 
\[
\CR^\gamma(\ALG_\theta) \geq 1 - (1-\gamma) p_\gamma\;,
\]  
where $p_\gamma$ is defined in Theorem \ref{thm:gamma-rd}.
\end{theorem}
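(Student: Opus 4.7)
The plan is to treat the three components of the theorem (upper bound, properties of $U$, single-threshold lower bound) separately, with most work concentrated on the first and third.

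For the upper bound, I build a family of IID instances with support in a three-point set $\{0,a,b\}$, so that the hardness statement extends to the $\D$-prophet inequality via Proposition \ref{prop:Dinf<gamma} combined with Theorem \ref{thm:neg-D} (the $O(n^{-1/3})$ correction will be absorbed by taking $n \to \infty$). Concretely I choose $X_i = b$ with probability $p/n$, $X_i = a$ with probability $q/n$, and $X_i = 0$ otherwise, with $r = a/b \in (0,1)$ and $(p,q,r)$ free parameters. For such an instance one can compute
\[
\E[\OPT] \;\longrightarrow\; b\bigl(1 - e^{-p}\bigr) + a\,e^{-p}\bigl(1 - e^{-q}\bigr)
\]
as $n \to \infty$. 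By Lemma \ref{lem:rational-alg} it suffices to take the supremum over rational algorithms, and since a rational algorithm always accepts $b$, it is parameterized by when it starts accepting $a$. Taking $i^* = tn$, the reward decomposes as $\E[\ALG^0] + \gamma\E[\max X_i\indic{\tau=n+1}]$, both expressible in closed form in $(p,q,r,t)$. Optimizing $t$ analytically and then the remaining parameters yields the stated formula $U(\gamma)$.

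For the elementary properties of $U$, I substitute directly: $U(0) = 1 - (e^2\log 3 - 2)/(4e^2 - 2)$ simplifies to $(4-\log 3)/(2(2 - e^{-2}))$; the factor $(1-\gamma)$ forces $U(1) = 1$; monotonicity is verified by differentiating $U(\gamma) = 1 - (1-\gamma)N(\gamma)/D(\gamma)$ and checking that the resulting algebraic inequality in $\log(3-\gamma)$ holds on $[0,1]$ (this is routine and monotone in both directions at the endpoints).

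For the lower bound, Lemma \ref{lem:gamma-iid} supplies the algorithm and its reward per $n$; I only need to bound its coefficient from below by $1 - (1-\gamma)p_\gamma$. Setting $q_n = (1 - a_{n,\gamma}/n)^n$, the defining equation of $a_{n,\gamma}$ algebraically rearranges to
\[
\frac{1-q_n}{n(1 - q_n^{1/n})} = 1 - (1-\gamma)q_n,
\]
so that the reward coefficient from Lemma \ref{lem:gamma-iid} is exactly $1 - (1-\gamma)q_n$, and the task reduces to showing $q_n \leq p_\gamma$. Using the elementary inequality $n(1 - q^{1/n}) \leq -\log q$ (convexity of $x \mapsto e^{-x}$), the previous identity implies
\[
1 - (1-\gamma)q_n \geq \frac{1 - q_n}{-\log q_n}.
\]
Defining $\phi(q) = 1 - (1-\gamma)q - (1-q)/(-\log q)$, this reads $\phi(q_n) \geq 0 = \phi(p_\gamma)$. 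A short calculation shows $\phi'(q) < 0$ on $(0,1)$ (it reduces to $\log q + (1-q)/q > 0$ for $q \in (0,1)$, which is the strict form of the log-inequality), so $\phi$ is strictly decreasing and hence $q_n \leq p_\gamma$. The asymptotic equality $q_n \to p_\gamma$ follows from $n(1 - q_n^{1/n}) \to -\log q_n$ and continuity, justifying the claim that this bound matches the random-order rate in the limit.

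The main obstacle is the upper bound: one must confirm that the optimum over rational algorithms against the three-point instance is attained by a threshold policy of the form described (so that the supremum over $\ALG$ has the expected closed form), and then carry out a three-parameter optimization yielding the specific expression $U(\gamma)$. The lower-bound argument, by contrast, collapses to a single monotonicity of $\phi$ once the defining equation of $a_{n,\gamma}$ is rewritten.
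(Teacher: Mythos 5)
Your lower-bound argument is correct and is a nice variant of the paper's: you rewrite the defining equation of $a_{n,\gamma}$ so that the reward coefficient of Lemma \ref{lem:gamma-iid} is exactly $1-(1-\gamma)q_n$ with $q_n=(1-a_{n,\gamma}/n)^n$, then use $n(1-q^{1/n})\leq -\log q$ and strict monotonicity of $\phi(q)=1-(1-\gamma)q-\frac{1-q}{-\log q}$ to conclude $q_n\leq p_\gamma$; I checked the algebraic identity $\frac{1-q_n}{a_{n,\gamma}}=1-(1-\gamma)q_n$ and the sign computation ($\phi'<0$ reduces to $\log q+\frac{1-q}{q}>0$ on $(0,1)$), and both hold. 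The paper instead compares $a_\gamma=-\log p_\gamma$ with $a_{n,\gamma}$ via monotonicity of $x\mapsto(e^x-1)(1/x-1)$ and of $x\mapsto\frac{1-e^{-x}}{x}$; the two routes rest on the same elementary inequality $(1-a/n)^n\leq e^{-a}$, but yours isolates the exact coefficient and is arguably cleaner. The evaluations $U(0)$, $U(1)$ are also fine.

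The genuine gap is in the upper bound. The entire derivation of the specific constant $U(\gamma)$ is asserted, not carried out: you claim that "optimizing $t$ analytically and then the remaining parameters yields the stated formula," but $U(\gamma)$ does not arise as the optimum of your three-parameter family. In the paper's construction the high value is not a fixed $b$ with probability $p/n$ but the value $n$ with probability $1/n^2$ (i.e., the degenerate limit of your family with $b\to\infty$, $p\to 0$, $bp\to 1$), the $a$-arrival rate is fixed at $x=2$, and $a=\frac{1-\gamma/2}{1-\gamma}$; these specific, not-necessarily-optimal choices are what produce the closed form $U(\gamma)$ after maximizing only over the switching time ($s=j/n$), via the bound $\CR\leq 1-\frac{\log(1+(1-\gamma)ax)-(1-\gamma)axe^{-x}}{x+(1-e^{-x})ax}$. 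Within your family with finite $b$ the limiting ratio carries extra $e^{-p}$ and $b(1-e^{-p})$ factors, so you would need either to pass to the $b\to\infty$, $bp\to 1$ limit (legitimate, since the competitive ratio is an infimum over instances) or to verify that some finite choice already certifies $\leq U(\gamma)$ — neither is done, and without identifying the parameter values above the stated closed form cannot be certified. In addition, the structural claim that the optimal (dynamic programming) policy against the three-point instance accepts $b$ always and accepts $a$ only after a deterministic switching time is flagged by you as an obstacle but not established; the paper also only asserts this, so it is a shared weakness rather than a divergence, but as written your upper-bound section is a program rather than a proof.
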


To prove the upper bound, we used instances satisfying the condition of Proposition \ref{prop:Dinf<gamma}, guaranteeing that it remains true in the $D_\infty$-prophet inequality with $\gamma = \gamma_\D$. On the other hand, Theorem \ref{thm:neg-D} ensures that the upper bound extends to the $\D$-prophet inequality, but with an additional $O(1/n^{1/3})$ term. The latter does change the result, as we considered instances of arbitrarily large size $n \to \infty$.




\section{Conclusion}

In this paper, we addressed the $\D$-prophet inequality problem, which models a very broad spectrum of online selection scenarios, accommodating various observation order models and allowing to revisit rejected candidates at a cost. The problem extends the classic prophet inequality, corresponding to the special case where all decay functions are zero. 
The main result of the paper is a reduction from the general $\D$-prophet inequality to the $\gamma$-prophet inequality, where all the decay functions equal to $x \mapsto \gamma x$ for some constant $\gamma\in[0,1]$. 
Subsequently, we provide algorithms and upper bounds for the $\gamma$-prophet inequality, which remain valid, by the previous reduction, in the $\D$-prophet inequality. Notably, the proved upper and lower bounds match each other for the adversarial order model, hence completely solving the problem.
Our analysis paves the way for more practical applications of prophet inequalities, and advances efforts towards closing the gap between theory and practice in online selection problems.

\subsection*{Limitations and future work}

\paragraph{Better upper bounds in the $D_\infty$-prophet inequality.} Proposition \ref{prop:Dinf<gamma} establishes that upper bounds proved in the $\gamma$-prophet inequality using instances of random variables with support in some set $\{0,a,b\}$ remain true in the $D_\infty$-prophet inequality, hence in the $\D$-prophet inequality by Theorem \ref{thm:neg-D}. This is enough to establish a tight upper bound in the adversarial order model, but not in the random order and IID models. An interesting question to explore is if more general upper bounds can be extended, or not, from the $\gamma$- to the $\D$-prophet inequality.

\paragraph{Algorithms for the $\gamma$-prophet inequality.} As explained in Section \ref{sec:gamma-prophet}, our analysis of the competitive ratio of single-threshold algorithms relies on the identity \eqref{eq:STA}, which is not satisfied for instance by multiple-threshold algorithms. In the adversarial order model, we proved that the optimal competitive ratio $1/(2-\gamma)$ can be achieved with a single-threshold algorithm. However, this is not the case in the random order or IID models. An interesting research avenue is to study other classes of algorithms in the $\gamma$-prophet inequality.

\section*{Acknowledgements}
This research was supported in part by the French National Research Agency (ANR) in the framework of the PEPR IA FOUNDRY project (ANR-23-PEIA-0003) and through the grant DOOM ANR-23-CE23-0002. It was also funded by the European Union (ERC, Ocean, 101071601). Views and opinions expressed are however those of the author(s) only and do not necessarily reflect those of the European Union or the European Research Council Executive Agency. Neither the European Union nor the granting authority can be held responsible for them.

Dorian Baudry thanks the support of the French National Research Agency: ANR-19-CHIA-02 SCAI, ANR-22-SRSE-0009 Ocean, and ANR-23-CE23-0002 Doom; and of the European Research Council (GTIR project)

\bibliographystyle{plainnat}
\bibliography{bibliography}

\newpage
\appendix
\section{From $\D$-prophet to the $D_\infty$-prophet inequality}\label{appx:Dinfty}

In this section, we prove the reduction from the $\D$-prophet to the $D_\infty$-prophet inequality problem in the adversarial and random order models, and the reduction up to an additional $O(n^{-1/3})$ term in the IID model. First, we prove Corollary \ref{cor:neg-D}, which is the principal implication of Theorem \ref{thm:neg-D}.

\subsection{Proof of Corollary \ref{cor:neg-D}}
\begin{proof} Let us denote $\A_{*,\infty}$ the algorithm taking optimal decisions for any instance in the $D_\infty$-prophet inequality (obtained via dynamic programming). Then, by Theorem~\ref{thm:neg-D} we obtain for the adversarial and random order models that
\begin{equation}\label{eq:CRD-CRDinfty}
\sup_{\ALG} \CR^\D(\ALG)
\leq \inf_{I} \sup_{\A} \frac{\E[\A^{D_\infty}(I)]}{\E[\OPT(I)]}
= \inf_{I} \frac{\E[\A_{*,\infty}^{D_\infty}(I)]}{\E[\OPT(I)]}
= \CR^{D_\infty}(\A_{*,\infty})
= \sup_{\A} \CR^{D_\infty}(\A)\; .
\end{equation}
Since $\CR^\D(\ALG) \geq \CR^{D_\infty}(\ALG)$ for any algorithm, we deduce that \eqref{eq:CRD-CRDinfty} is an equality.
If we consider now any algorithm $\overbar{\A}_\infty$ that is optimal for the $D_\infty$-prophet inequality, not necessarily $\A_{*,\infty}$, then Equation~\eqref{eq:CRD-CRDinfty} provides
\[
\CR^\D(\overbar{\A}_\infty) 
\geq \CR^{D_\infty}(\overbar{\A}_\infty)
= \sup_{\ALG} \CR^{\D}(\ALG)\; .
\]
The previous inequality is again an equality, and it implies that $\Bar{\A}_\infty$ is also optimal, in the sense of the competitive ratio, for the $\D$-prophet inequality, and
\[
\CR^{\D}(\overbar{\A}_\infty) = \CR^{D_\infty}(\overbar{\A}_\infty) \;.
\]
\end{proof}

\subsection{Auxilary Lemma} 
The efficiency of the proof scheme introduced in Section \ref{sec:sketch-thm:neg-D} relies on the following key argument:
if $(D_j)_{j\geq 1}$ converges pointwise to $D_\infty$, then for any algorithm $\A$ and any instance $I$, the output of $\A$ when all the decay functions are equal to $D_m$ converges to its output when all the decay functions are equal to $D_\infty$. If $X_1,\ldots, X_n$ are the realizations of $I$ observed by $\A$ and if $\sigma$ is the order in which they are observed, then denoting $\tau$ the stopping time of $\A$ we can write that
\begin{align*}
\E[\A^{D_m}&(I)] - \E[\A^{D_\infty}(I)] \\
&= \E[\max\{X_{\sigma(\tau)},D_m(\max_{i<\tau}X_{\sigma(i)})\}] - \E[\max\{X_{\sigma(\tau)},D_\infty(\max_{i<\tau}X_{\sigma(i)})\}]\\
&\leq \max_{\substack{\pi \in \mathcal{S}_n\\ q \in [n]}}\left\{
    \E[\max\{X_{\pi(q)},D_m(\max_{i<q}X_{\pi(i)})\}]
    - \E[\max\{X_{\pi(q)},D_\infty(\max_{i<q}X_{\pi(i)})\}]
    \right\}\; ,
\end{align*}
where $\mathcal{S}_n$ is the set of all permutations of $[n]$. The latter upper bound is independent of $\sigma$ and $\A$. We show in the following lemma that it converges to $0$ when $m \to \infty$.

\begin{lemma}\label{lem:err_m->0}
    Let $\mathcal{S}_n$ be the set of all permutations of $[n]$. For any fixed instance $I = (F_1,\ldots,F_n)$, considering $X_i \sim F_i$ for all $i \in [n]$, define for all $m \geq 1$
    \[
    \err_m(I) 
    = \max_{\substack{\pi \in \mathcal{S}_n\\ q \in [n]}}\left\{
    \E[\max\{X_{\pi(q)},D_m(\max_{i<q}X_{\pi(i)})\}]
    - \E[\max\{X_{\pi(q)},D_\infty(\max_{i<q}X_{\pi(i)})\}]
    \right\} \;.
    \]
    If $\E[\max_{i\in [n]} X_i] < \infty$, then $\lim\limits_{m\to \infty} \err_m(I) = 0$.
\end{lemma}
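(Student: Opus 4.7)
The plan is to exchange the limit and the expectation inside the max, exploiting that the outer maximum is over a finite set (there are $|\mathcal{S}_n| \cdot n = n \cdot n!$ pairs $(\pi,q)$) and that the sequence $(D_m)_{m\geq 1}$ converges pointwise monotonically to $D_\infty$. Concretely, I would first fix an arbitrary pair $(\pi,q)$ and write, with $M_\pi := \max_{i<q} X_{\pi(i)}$,
\[
Y_m := \max\{X_{\pi(q)},\, D_m(M_\pi)\}, \qquad Y_\infty := \max\{X_{\pi(q)},\, D_\infty(M_\pi)\}.
\]
The goal for the fixed pair is to show that $\E[Y_m] \to \E[Y_\infty]$, and the full claim then follows by taking the maximum of finitely many vanishing differences.

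For the convergence of expectations, I would invoke dominated convergence. The pointwise convergence $Y_m \to Y_\infty$ is immediate: $D_m(M_\pi(\omega)) \downarrow D_\infty(M_\pi(\omega))$ for every realization $\omega$ by definition of $D_\infty$, and the $\max$ against the fixed quantity $X_{\pi(q)}$ preserves this convergence. For the integrable dominating function, I would use
\[
0 \le Y_m \le Y_1 = \max\{X_{\pi(q)},\, D_1(M_\pi)\} \le \max\{X_{\pi(q)},\, M_\pi\} \le \max_{i\in[n]} X_i,
\]
where the step $D_1(M_\pi) \le M_\pi$ uses property (i) of Definition~\ref{def::decay}. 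By the hypothesis $\E[\max_{i\in[n]} X_i] < \infty$ the dominating random variable lies in $L^1$, so dominated convergence applies. (Equivalently, since $Y_m \downarrow Y_\infty$ one could appeal to monotone convergence after subtracting the integrable quantity $Y_\infty$ from both sides.)

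Having obtained $\E[Y_m] - \E[Y_\infty] \to 0$ for every fixed $(\pi,q)$, the last step is to pass this convergence through the outer maximum. Since the index set $\mathcal{S}_n \times [n]$ is finite, for any $\eta>0$ one can choose $m$ large enough so that the difference is below $\eta$ simultaneously for every pair, yielding $\err_m(I) < \eta$ and hence $\err_m(I) \to 0$.

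The argument is essentially bookkeeping; there is no serious obstacle, but the one subtle point worth flagging is the integrable upper envelope. The quantity $D_m(M_\pi)$ is not obviously bounded by $M_\pi$ for every $m$, yet monotonicity of $(D_j)_j$ reduces this to the $m=1$ case, where property (i) of Definition~\ref{def::decay} supplies exactly the needed bound $D_1(x) \le x$. This is why both properties (i) and (ii) are used, together with the finite-expectation hypothesis on the prophet's reward.
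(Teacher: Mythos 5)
Your proof is correct and follows essentially the same route as the paper's: pointwise monotone convergence of $D_m$ to $D_\infty$, domination by $\max_{i\in[n]}X_i$ (using properties (i) and (ii) of the decay functions) to invoke dominated convergence for each fixed pair $(\pi,q)$, and finiteness of $\mathcal{S}_n\times[n]$ to pass through the outer maximum. The only cosmetic difference is that the paper applies dominated convergence to the nonnegative difference of the two maxima rather than to each term separately, which changes nothing of substance.
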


\begin{proof}
Let us denote $f_1,\ldots,f_n$ the respective probability density functions of $X_1,\ldots, X_q$. For any $q \in [n]$ and $\pi \in \mathcal{S}_n$, let us define for all $m \geq 0$ the function $\varphi^{\pi,q}_m : [0,\infty)^q \to [0,\infty)$ by $\varphi^{\pi,q}_m(x_1,\ldots,x_q) = \max\{x_{\pi(q)},D_m(\max_{i<q}x_{\pi(i)})\} - \max\{x_{\pi(q)},D_\infty(\max_{i<q}x_{\pi(i)})\}$. $\varphi^{\pi,q}_m$ is positive because $D_m \geq D_\infty$. The sequence $(\varphi^{\pi,q}_m)_m$ is non-increasing, converges to $0$ pointwise, and is dominated by $(x_1,\ldots,x_q) \mapsto \max_{i \in [q]} x_i$, which is integrable with respect to the probability measure $(x_1,\ldots, x_q) \mapsto \prod_{i=1}^q f(x_i)$. Therefore, using the dominated convergence theorem, we deduce that $\lim_{m\to \infty} \E[\varphi^{\pi,q}_m(X_1,\ldots,X_q)] = 0$. It follows that 
\[
\lim_{m\to \infty} \err_m(I) = \lim_{m \to \infty} \Big( \max_{\substack{\pi \in \mathcal{S}_n\\ q \in [n]}} \E[\varphi^{\pi,q}_m(X_1,\ldots,X_q)] \Big) = 0\; .
\]
\end{proof}

\subsection{Proof of Theorem \ref{thm:neg-D}}

\begin{proof}[Proof of Theorem~\ref{thm:neg-D}] We provide a separate proof for each of the adversarial order, random order and IID models.

\paragraph{Adversarial order}
Let $I = (F_1,\ldots,F_n)$ be any instance and $X_i \sim F_i$ for all $i \in [n]$. Consider the instance $I_m = (Y_1,\ldots,Y_{mn})$, where $Y_{km} \sim F_k$ for any $k \in [n]$ and $Y_i = 0$ a.s. for all $i \notin \{ m,2m,\ldots,mn \}$.
It is clear that no reasonable algorithm would stop at a zero value: if the current observation is $0$ it is preferable to wait for a non-null value, or it would have been preferable to stop at the previous non-null value. Hence, $\tau$ is a
multiple of $m$: $\tau = \rho m$ for some $\rho\in \N^\star$. Given that $D_j(0) = 0$ for all $j$ and the sequence $(D_j)_j$ is non-increasing, we have that
\begin{align*}
\E[\ALG^\D(I_m)]
&= \E[\max_{i \leq \tau} D_{\tau-i}(Y_{i})]\\
&= \E[\max_{k \leq \rho} D_{\tau - km}(Y_{km})]\\
&= \E[\max_{k \leq \rho} D_{\rho m - km}(X_{k})]\\
&= \E[\max\{ X_{\rho}, \max_{k < \rho} D_{(\rho - k)m}(X_{k})\}]\\
&\leq \E[\max\{ X_{\rho}, \max_{k < \rho} D_{m}(X_{k})\}]\\
&\leq \E[\max\{ X_{\rho}, \max_{k < \rho} D_{\infty}(X_{k})\}] + \err_m(I)\;,
\end{align*}
where $\err_m(I)$ is defined in Lemma \ref{lem:err_m->0}. We can then use that the first right-hand term is the output of some other algorithm that would choose a stopping time $\rho$ when facing $I$ in the context of the $D_\infty$-prophet inequality. More precisely, consider the algorithm $\A_m$ which, given any instance $I=(F_1,\ldots,F_n)$, simulates the behavior of $\ALG$ facing the sequence $I_m$, where at each step $i \in [mn]$
\begin{itemize}
    \item if $i \notin \{m,\ldots,nm\}$: $\ALG$ observes $Y_i = 0$,
    \item otherwise, if $i = km$ for some $k \in [n]$: $\A_m$ observes $X_k$ and $\ALG$ observes $Y_{km} = X_k$
    \item if $\ALG$ stops on $Y_{\rho m}$, then $\A_m$ also stops, and its reward is $X_\rho$.
\end{itemize}
$\A_m(X_1,\ldots,X_n)$ stops at the same value as $\ALG(Y_1,\ldots,Y_m)$, their reward in the $D_\infty$-prophet inequality is the same, and since $\max_{i \in [n]} X_i = \max_{i \in [mn]} Y_i$ this yields to
\[
\CR^\D(\ALG)
\leq \frac{\E[\ALG^\D(I_m)]}{\E[\OPT(I_m)]}
\leq \frac{\E[\A_m^{D_\infty}(I)] + \err_m(I)}{\E[\OPT(I)]}
\leq \sup_{\A: \text{ algo}} \frac{\E[\A^{D_\infty}(I)]}{\E[\OPT(I)]} + \frac{\err_m(I)}{\E[\OPT(I)]}\;,
\]
and taking the limit when $m \to \infty$ gives the result, by making the second term vanish.

\paragraph{Random order}

Let $I = (F_1,\dots, F_n)$ be an instance of distributions and $X_i\sim F_i$ for $i \in [n]$. Using the notation $\delta_0$ for the Dirac distribution in $0$, we consider $I_m = (F_1,\ldots, F_n, \delta_0, \ldots, \delta_0)$ containing $m$ copies of $\delta_0$ so that the observations from this instance always contain at least $m$ null values. Let $Y_1, \ldots, Y_m$ be a realization of this instance. For simplicity, say that $Y_i = X_i$ for $i \in [n]$, and $Y_i = 0$ for $i > n$.

We first show that when $m \to \infty$, since the observation order is drawn uniformly at random, the algorithm observes a large number of zeros between every two random variables drawn from $(F_1,\dots, F_n)$. Let us denote by $\pi$ the uniformly random order in which the observations are received, i.e. the algorithms observes $Y_{\pi(1)}, Y_{\pi(2)}, \ldots$, and let $\ell\geq 1$ be some positive integer, and $t_1,\dots, t_n$ be the increasing indices in which the variables $Y_1,\ldots, Y_n$ are observed, i.e. $t_1 < \ldots < t_n$ and $\{ t_1, \ldots t_n\} = \{\pi^{-1}(1),\ldots,\pi^{-1}(n)\}$. Therefore, any observation outside $\{Y_{\pi(t_1)}, \ldots, Y_{\pi(t_1)}\}$ is zero.
Using the notation $L = \min_{i \in  [n-1]} |t_{i+1} - t_i|$, we obtain that
\begin{align*}
    \Pr(L\leq \ell) 
    &= \Pr(\cup_{i=1}^{n-1} \left\{ t_{i+1} - t_i \leq \ell\right\})\\
    &= \Pr\left(\cup_{k=1}^n \cup_{j=1}^{k-1} \left\{ |\pi^{-1}(k)-\pi^{-1}(j)|\leq \ell \right\}\right) \\
    &\leq \frac{n(n-1)}{2} \Pr(|\pi^{-1}(1)-\pi^{-1}(2)|\leq \ell) \\
    &= \frac{n(n-1)}{2} \Pr(\big(\cup_{k=1}^{n+m} (\pi^{-1}(1) = k, \pi^{-1}(2) \in \{k-\ell, \ldots, k+ \ell\}\setminus\{k\})\\
    &\leq \frac{n(n-1)}{2} \times (n+m) \times \frac{1}{n+m} \times \frac{2 \ell}{n+m-1}\\
    &\leq \frac{n^2 \ell}{m}\; .
\end{align*}

Taking $\ell = \sqrt{m}$, we find that $\Pr(L\leq \ell) \leq n^2/\sqrt{m}$.
Therefore, for any algorithm $\ALG$, observing that the reward of $\ALG^D$ is at most $\max_{i\in [n]} X_i$ a.s., and by independence of $\max_{i\in [n]} X_i$ and $L$, we deduce that
\begin{align}
\E[\ALG^\D(I_m)]
&= \E[\ALG^\D(I_m) \indic{L > \ell}] + \E[\ALG^\D(I_m) \indic{L \leq \ell}] \nonumber\\
&\leq \E[\ALG^\D(I_m) \mid L > \ell] + \E[(\max_{i\in [n]}X_i) \indic{L \leq \ell}] \nonumber\\
&\leq \E[\ALG^\D(I_m) \mid L > \ell] + \E[\max_{i\in [n]}X_i] \frac{n^2}{\sqrt{m}}\; . \label{eq:neg-rd-reward}
\end{align}
Let us denote $\tau$ the stopping time of $\ALG$ and $t_\rho = \max_{j \in [n]}\{t_j: t_j \leq \tau\}$ the last time when a variable $(X_j)_{j \in [n]}$ was observed by $\ALG$. The sequence of functions $(D_j)_{j\geq 1}$ is non-increasing, hence
\begin{align}
\E[\ALG^\D(I_m) \mid L >\ell]
&= \E[\max_{i \in [\tau]} D_{\tau - i}( Y_{\pi(i)}) \mid L > \ell] \nonumber\\
&= \E[\max_{j \leq i} D_{\tau - t_j}( Y_{\pi(t_j)}) \mid L > \ell] \label{eq:eliminate-zeros}\\
&\leq \E[\max_{j \leq i} D_{t_\rho - t_j}( Y_{\pi(t_j)}) \mid L > \ell] \label{eq:Dj-decrease}\\
&= \E[\max \big\{ Y_{\pi(t_\rho)}, \max_{j < \rho} D_{t_\rho - t_j}( Y_{\pi(t_j)}) \big\} \mid L > \ell] \nonumber\\
&\leq \E[\max \big\{ Y_{\pi(t_\rho)}, \max_{j < \rho} D_{\ell}( Y_{\pi(t_j)}) \big\}] \; , \label{eq:Dj-decrease2}
\end{align}

Equation \eqref{eq:eliminate-zeros} holds because the only non-zero values up to step $\tau$ are $(Y_{\pi(t_j)})_{j \in [\rho]}$. Inequality \eqref{eq:Dj-decrease} uses that the sequence $(D_j)_{j \geq 1}$ is non-increasing, and \eqref{eq:Dj-decrease2} uses, in addition to that, the independence of $L$ and $(Y_{\pi(t_j)})_{j \in [n]}$. We now argue that the term $\E[\max \big\{ Y_{\pi(t_\rho)}, \max_{j < \rho} D_{\ell}( Y_{\pi(t_j)}) \big\}]$ is the expected reward of an algorithm in the $D_{\ell}$-inequality.
Given that $\pi$ is a uniform random permutation of $[n+m]$ and by definition of $t_1,\ldots, t_n$, the application $\sigma: k\in [n] \mapsto \pi(t_k)$ is a random permutation of $[n]$. Therefore we consider the algorithm $\A_m$ that receives as input the instance $I = (F_1,\ldots,F_n)$, then considers the array $u = (1,\ldots,1,0,\ldots,0)$ composed of $n$ values equal to $1$ and $m$ zero values, and a uniformly random permutation $\pi$ of $[n+m]$, then simulates $\ALG^D(I_m)$ as follows: at each step $j \in [n+m]$
\begin{itemize}
    \item if $u_{\pi(j)} = 0$, then $\ALG$ observes the value $Y_{\pi(j)} = 0$,
    \item if $u_{\pi(j)} = 1$, then $\A_m$ observes the next value $X_{\sigma(k)}$, and $\ALG$ observes $Y_{\pi(j)} = X_{\sigma(k)}$,
    \item when $\ALG$ decides to stop, $\A_m$ also stops, and its reward is the current value $X_{\sigma(k)}$.
\end{itemize}
With this construction, $(Y_j)_{j \in [n+m]}$ is indeed a realization of the instance $I_m$, and $\A_m$ stops on the last value sampled from $F_1,\ldots,F_n$ observed by $\ALG$. Therefore, denoting $\rho$ the stopping time of $\A_m$, and $\err_\ell(I)$ as defined in Lemma \ref{lem:err_m->0}, we have
\begin{align*}
\E[\ALG^\D(I_m) \mid L > \ell]
&\leq \E[\max \big\{ Y_{\pi(t_\rho)}, \max_{j < \rho} D_{\ell}( Y_{\pi(t_j)}) \big\}]\\
&= \E[\max \big\{ X_{\sigma(\rho)}, \max_{j < \rho} D_{\ell}( X_{\sigma(t_j)}) \big\}]\\
&= \E[ \A^{D_{\ell}}_m(I)]\\
&\leq \E[ \A^{D_{\infty}}_m(I)] + \err_\ell(I)\\
&\leq \sup_{\A: \text{algo}} \E[\A^{D_\infty}(I)] + \err_\ell(I)\;.
\end{align*}
Taking $\ell = \sqrt{m}$ and substituting into Equation \eqref{eq:neg-rd-reward}, then observing that $\E[\OPT(I)] = \E[\OPT(I_{m})]$, gives that
\[
\CR^\D(\ALG) 
\leq \frac{\E[\ALG^\D(I_m)]}{\E[\OPT(I_{m})]}
\leq \sup_{\A: \text{algo}} \frac{\E[\A^{D_\infty}(I)]}{\E[\OPT(I)]} + \frac{ \err_{\sqrt{m}}(I)}{\E[\OPT(I)]} + \frac{n^2}{\sqrt{m}}\;.
\]
Finally, taking $m \to \infty$ and using Lemma \ref{lem:err_m->0}, we deduce that
\[
\CR^\D(\ALG) 
\leq \sup_{\A: \text{algo}} \frac{\E[\A^{D_\infty}(I)]}{\E[\OPT(I)]}\;,
\]
which completes the proof for the random order.

\paragraph{IID random variables}
For any probability distribution $F$ on $[0,\infty)$ and for any $n \geq 1$ we denote $\E[\OPT(F,n)]$ the expected maximum of $n$ independent random variables drawn from $F$, and for any algorithm $\ALG$ we denote $\E[\ALG^\D(F,n)]$ its expected output when given $n$ IID variable sampled from $F$ as input. The proof of Theorem~\ref{thm:neg-D} for this last model is much more technical than for previous models, so we first prove several auxiliary results that we will later use to provide a concise proof of the last part of the theorem.

\begin{lemma}\label{lem:EM(n+k)}
    For any probability distribution $F$ and $n \geq 1,\Delta \geq 0$, we have
    \[
    \E[\OPT(F,n+\Delta)] \leq \left( 1 + \frac{\Delta}{n} \right) \E[\OPT(F,n)] \;.
    \]
\end{lemma}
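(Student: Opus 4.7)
The plan is to prove this by a coupling argument that relates the expected maximum of $m = n+\Delta$ IID samples to the expected maximum of $n$ IID samples via uniform random subsampling. Concretely, I would let $Y_1, \ldots, Y_m$ be IID draws from $F$, write $M_k$ for the maximum of any $k$ of them, and introduce an independent uniformly random subset $S \subseteq [m]$ of size $n$.

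First, I would establish the symmetry identity
\[
\E\!\left[\max_{i \in S} Y_i\right] = \E[\OPT(F,n)].
\]
This holds because, conditionally on $S$, the family $(Y_i)_{i \in S}$ consists of $n$ IID draws from $F$, so the conditional expected maximum is $\E[\OPT(F,n)]$ regardless of the realization of $S$.

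Second, I would lower bound $\max_{i \in S} Y_i$ in terms of $M_m$. Conditioning on $(Y_1,\ldots,Y_m)$ and letting $I^\star$ be any index realizing the maximum (say the smallest one in case of ties), the independence of $S$ from the $Y_i$'s gives $\Pr(I^\star \in S \mid Y_1,\ldots,Y_m) = n/m$. Since the distributions in the $\D$-prophet setting are supported on $[0,\infty)$, the $Y_i$'s are non-negative, so on the event $\{I^\star \in S\}$ we have $\max_{i\in S} Y_i = M_m$, and on the complementary event the max is still non-negative. Hence
\[
\E\!\left[\max_{i \in S} Y_i \,\Big|\, Y_1,\ldots,Y_m\right] \;\geq\; \frac{n}{m}\, M_m.
\]
Taking expectations and combining with the first step yields $\E[\OPT(F,n)] \geq (n/m)\E[\OPT(F,m)]$, which rearranges into the claimed inequality since $m/n = 1 + \Delta/n$.

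I do not anticipate any serious technical obstacle. The only subtle point is to keep the coupling clean: $S$ must be drawn independently of $(Y_i)_{i \in [m]}$, which is what decouples the event $\{I^\star \in S\}$ from the $Y_i$'s, and the non-negativity of $F$ is what lets us discard the contribution of $\{I^\star \notin S\}$ without loss. Both ingredients are standard.
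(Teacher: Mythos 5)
Your argument is correct as stated for integer sample sizes, and it takes a genuinely different route from the paper. The paper's proof is a pointwise CDF computation: it writes $\Pr(\OPT(F,n+\Delta)>x)=1-F(x)^{n+\Delta}$ and uses $F(x)^{\Delta}\geq 1+\Delta\log F(x)$ together with $\log(1/F(x)^n)\leq (1-F(x)^n)/F(x)^n$ to get the stochastic-dominance-type bound $\Pr(\OPT(F,n+\Delta)>x)\leq(1+\Delta/n)\Pr(\OPT(F,n)>x)$, then integrates. Your coupling via a uniform $n$-subset $S$ of $[m]$, $m=n+\Delta$, replaces this analytic manipulation with the symmetry identity $\E[\max_{i\in S}Y_i]=\E[\OPT(F,n)]$ and the bound $\E[\max_{i\in S}Y_i\mid Y]\geq \frac{n}{m}\max_i Y_i$, which is cleaner and arguably more illuminating; both steps you flag (independence of $S$ from the $Y_i$'s, non-negativity) are exactly the right hypotheses and are available here. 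What the paper's approach buys, however, is generality in the parameters: the lemma is later invoked with non-integer arguments (e.g.\ $\Delta=n^{2/3}$ in Lemma~\ref{lem:conc-bin} and in the IID part of Theorem~\ref{thm:neg-D}, where quantities like $\E[\OPT(F,n+n^{2/3})]$ and $\E[\OPT(F,n-\Delta)]$ appear), so $\OPT(F,t)$ must implicitly be read as the random variable with CDF $F^{t}$ for real $t\geq 1$, and the paper's proof goes through verbatim in that setting, also yielding the stronger pointwise tail bound. Your subsampling argument is tied to an integer number of samples and does not directly cover fractional $\Delta$; to serve the paper's downstream uses you would need either to extend it (e.g.\ by an interpolation/monotonicity argument in $t\mapsto \E[\OPT(F,t)]$ with the CDF-power definition) or simply to note that the analytic proof handles real exponents. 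For the statement restricted to integer $n,\Delta$, your proof is complete and correct.
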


\begin{proof}
We first write
\begin{align*}
\Pr(\OPT(F,n+\Delta) > x) 
&= 1 - F(x)^{n+\Delta}\\
&= \left( 1 + F(x)^n\frac{1 - F(x)^\Delta}{1 - F(x)^n} \right) (1-F(x)^n)\\
&= \left( 1 + F(x)^n\frac{1 - F(x)^\Delta}{1 - F(x)^n} \right) \Pr(\OPT(F,n) > x)\;,
\end{align*}
and then use that 
\[
F(x)^\Delta=e^{\Delta \log(F(x))}\geq 1+ \Delta \log(F(x)) = 1 - \frac{\Delta}{n} \log(1/F(x)^n) 
\geq 1 - \frac{\Delta}{n} \left( \frac{1 - F(x)^n}{F(x)^n} \right)
\;,
\]
so we directly obtain 
\[
F(x)^n\frac{1 - F(x)^\Delta}{1 - F(x)^n} 
\leq \frac{\Delta}{n}\; ,
\]
which gives that
\[\Pr(\OPT(F,n+\Delta) > x) \leq \left(1 + \frac{\Delta}{n}\right) \Pr(\OPT(F,n) > x) \;. \]
As we consider non-negative random variables, it follows directly that 
\[\E[\OPT(F,n+\Delta)] \leq \left(1 + \frac{\Delta}{n}\right) \E[\OPT(F,n)] \;. \]
\end{proof}

\begin{lemma}\label{lem:conc-bin}
Let $N \sim \mathcal{B}(m,\eps)$ and let $n := \E[N] = \eps m$, then we have
\begin{align*}
\E[\OPT(F,N) \indic{N \geq n+ n^{2/3}}]
&\leq \frac{6}{n^{1/3}}\E[\OPT(F,n + n^{2/3})]\;,\\
\E[\OPT(F,N)] 
&\leq \left( 1 + \frac{3}{n^{1/3}}\right)\E[\OPT(F,n+n^{2/3})]\; .
\end{align*}
\end{lemma}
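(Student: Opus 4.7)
The plan is to leverage Lemma \ref{lem:EM(n+k)} with the fixed base size $n+n^{2/3}$ to relate $\E[\OPT(F,k)]$ back to $\E[\OPT(F, n+n^{2/3})]$ whenever $k \geq n+n^{2/3}$, then control the remaining expectation over the random $N$ using standard Binomial concentration. The key identity behind everything is $\E[\OPT(F,N) \mid N=k] = \E[\OPT(F,k)]$, which lets us integrate over the distribution of $N$.

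First I would prove the first inequality. Conditioning on $N$ and applying Lemma \ref{lem:EM(n+k)} with base $n+n^{2/3}$ and increment $k-(n+n^{2/3})$ gives
\[
\E[\OPT(F,k)] \leq \frac{k}{n+n^{2/3}} \E[\OPT(F,n+n^{2/3})] \quad \text{for all } k \geq n+n^{2/3}.
\]
Summing against $\Pr(N=k)$ over $k\geq n+n^{2/3}$ yields
\[
\E[\OPT(F,N) \indic{N \geq n+n^{2/3}}] \leq \frac{\E[N\,\indic{N \geq n+n^{2/3}}]}{n+n^{2/3}}\,\E[\OPT(F,n+n^{2/3})].
\]
The numerator decomposes as $(n+n^{2/3})\Pr(N\geq n+n^{2/3}) + \E[(N-n-n^{2/3})_+]$. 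For the first piece, Chebyshev with $\mathrm{Var}(N)=n(1-\varepsilon)\leq n$ gives $\Pr(N\geq n+n^{2/3}) \leq n/n^{4/3} = n^{-1/3}$. For the second piece, $\E[(N-n-n^{2/3})_+] \leq \E[(N-n)_+] \leq \sqrt{\mathrm{Var}(N)} \leq \sqrt{n}$ by Cauchy-Schwarz (or Jensen). Dividing by $n+n^{2/3}$ converts these into two terms of order $n^{-1/3}$ and $n^{-1/2}$ respectively, both absorbed into $6/n^{1/3}$ for $n\geq 1$.

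For the second inequality, I would split $\E[\OPT(F,N)]$ according to $\{N < n+n^{2/3}\}$ versus $\{N \geq n+n^{2/3}\}$. On the low side, monotonicity of $k \mapsto \E[\OPT(F,k)]$ (obvious by coupling) gives
\[
\E[\OPT(F,N)\indic{N < n+n^{2/3}}] \leq \Pr(N<n+n^{2/3})\,\E[\OPT(F,n+n^{2/3})] \leq \E[\OPT(F,n+n^{2/3})].
\]
On the high side, reuse the bound from the first inequality (or redo the same computation with the tighter constant $2$ I outlined above). Summing the two contributions gives a multiplicative factor of $1 + O(n^{-1/3})$, which the $3/n^{1/3}$ in the statement comfortably accommodates.

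The main obstacle is purely bookkeeping the constants: the Chebyshev tail contributes a clean $n^{-1/3}$, but the $\E[(N-n)_+]$ contribution gives $\sqrt{n}/(n+n^{2/3}) = O(n^{-1/2})$, which must be checked to be dominated by $n^{-1/3}$ (true for $n\geq 1$). No conceptual difficulty lies beyond getting these constants right; both bounds are straightforward consequences of Lemma \ref{lem:EM(n+k)} and a one-line variance estimate for the Binomial.
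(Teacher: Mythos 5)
Your proof is correct, but it takes a genuinely different route from the paper's. The paper partitions $[n+n^{2/3},\infty)$ into intervals of width $\Delta=n^{2/3}$, applies Lemma~\ref{lem:EM(n+k)} on each interval, and then sums a geometric series of Chernoff tail bounds $\Pr(N\geq n+k\Delta)\leq e^{-k\Delta^2/(3n)}$ to get the factor $\frac{2}{e^{\Delta^2/(3n)}-1}\leq \frac{6n}{\Delta^2}$. You instead condition on $N=k$ exactly and use Lemma~\ref{lem:EM(n+k)} in its linear form $\E[\OPT(F,k)]\leq \frac{k}{s}\E[\OPT(F,s)]$ with $s=n+n^{2/3}$, which reduces everything to bounding $\E[N\indic{N\geq s}]/s$; you then split this into $\Pr(N\geq s)\leq n^{-1/3}$ (Chebyshev, $\mathrm{Var}(N)\leq n$) and $\E[(N-s)_+]/s\leq \sqrt{n}/n = n^{-1/2}$ (Cauchy--Schwarz). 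This is more elementary — no Chernoff bound, no infinite series — and it actually yields the sharper constant $2/n^{1/3}$ in place of $6/n^{1/3}$; the paper's exponential tails are overkill at the $n^{-1/3}$ resolution the lemma requires. Two minor points worth keeping explicit in a write-up: (i) for the second inequality you must use your internally derived constant $2$ on the high-$N$ part (as you note), since literally reusing the stated constant $6$ would only give $1+6/n^{1/3}$; and (ii) your application of Lemma~\ref{lem:EM(n+k)} at the non-integer base $n+n^{2/3}$ is legitimate under the paper's own convention $\Pr(\OPT(F,t)>x)=1-F(x)^t$, which the paper itself relies on in the same lemma. The low-$N$ part, bounded by monotonicity of $k\mapsto\E[\OPT(F,k)]$, coincides with the paper's treatment.
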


\begin{proof}
Let $\Delta, s > 0$ such that $\Delta \leq s$. For any $k\geq 1$ let $W_k = [s + (k-1)\Delta, s+k\Delta)$.
$(W_k)_{k \geq 1}$ is a partition of $[s,\infty)$, thus we have
\begin{align*}
\E[\OPT(F,N) \indic{N \geq s}]
&= \sum_{k=1}^\infty \E[\OPT(F,N) \indic{N \in W_k}]\\
&\leq \sum_{k=1}^\infty \E[\OPT(F,s + k\Delta) \indic{N \in W_k}]\\
&= \sum_{k=1}^\infty \E[\OPT(F,s + k\Delta)] \Pr(N \in W_k)\\
&\leq \sum_{k=1}^\infty \left( 1 + \frac{k\Delta}{s}\right)\E[\OPT(F,s)] \Pr(N \in W_k)\\
&= \left( \Pr(N \geq s) + \frac{\Delta}{s} \sum_{k=1}^\infty k \Pr(N \in W_k) \right) \E[\OPT(F,s)] \;,
\end{align*}
where we used Lemma \ref{lem:EM(n+k)} in the penultimate inequality. Furthermore, observing that
\[
\sum_{k=1}^\infty k \Pr(N \in W_k)
= \sum_{k=1}^\infty \sum_{\ell = 0}^{k-1} \Pr(N \in W_k) 
= \sum_{\ell=0}^\infty \sum_{k = \ell+1}^{\infty} \Pr(N \in W_k) 
= \sum_{\ell=0}^\infty \Pr(N \geq s + \ell \Delta)\; ,
\]
we obtain, given $\Delta \leq s$, that
\begin{align}
\E[\OPT(F,N) \indic{N \geq s}]
&\leq \left(\Pr(N \geq s) + \frac{\Delta}{s} \sum_{k=0}^\infty \Pr(N \geq s + k \Delta) \right)\E[\OPT(F,s)] \label{eq:EM1}\\
&\leq \left( 2 \sum_{k=0}^\infty \Pr(N \geq s + k \Delta) \right)  \; . \label{eq:EM2}
\end{align}
$N$ is a Binomial random variable with expectation $n$. Therefore, Chernoff's inequality gives for any $\delta \geq 0$ that
\[
\Pr(N \geq (1+\delta)n)
\leq \exp\left( -\frac{\delta^2 n}{2 + \delta} \right)
\leq \exp\left( -\frac{\min(\delta, \delta^2)n}{3} \right)\; ,
\]
where the second inequality can be derived by treating separately $\delta < 1$ and $\delta \geq 1$. In particular, for any $k \geq 1$, taking $\delta = \frac{k\Delta}{n}$ such that $\Delta \leq n$ yields
\[
\Pr(N \geq n + k\Delta) \leq \exp\left(-\frac{\min(k\Delta, k^2\Delta^2/n)}{3}\right)
\leq \exp\left(-\frac{k \min(\Delta, \Delta^2/n)}{3}\right)
= \exp\left(-\frac{k \Delta^2}{3n}\right) \; .
\]
Substituting this Inequality into \eqref{eq:EM2} with $s = n+ \Delta$, we obtain
\begin{align*}
\E[\OPT(F,N)\indic{N \geq n+ \Delta}]
&\leq \left( 2 \sum_{k=1}^\infty \Pr(N \geq n + k\Delta )\right)\E[\OPT(F,n+\Delta)]\\
&\leq \left( 2 \sum_{k=1}^\infty \exp\left(- \frac{k \Delta^2}{3n}\right) \right)\E[\OPT(F,n+\Delta)]\\
&= \frac{2}{\exp\left(\frac{\Delta^2}{3n}\right) - 1} \E[\OPT(F,n+\Delta)]\\
&\leq \frac{6n}{\Delta^2} \E[\OPT(F,n+\Delta)] \; ,
\end{align*}
and taking $\Delta = n^{2/3}$ proves the first inequality of the lemma.

Let us move now to the second inequality. We have
\[
\E[\OPT(F,N) \indic{N < s}]
\leq \E[\OPT(F,s) \indic{N < s}]
= \E[\OPT(F,s)] \Pr(N<s) \; ,
\]
and thus, using Inequality \eqref{eq:EM1}, again with $s = n + \Delta$ and $\Delta = n^{2/3}$, it follows that
\begin{align*}
\E[\OPT(F,N)]
&= \E[\OPT(F,N) \indic{N < s}] + \E[\OPT(F,N) \indic{N \geq s}]\\
&\leq \left(1 + \frac{\Delta}{s} \sum_{k=0}^\infty \Pr(N \geq s + k \Delta)\right) \E[\OPT(F,s)]\\
&\leq \left(1 + \sum_{k=1}^\infty \Pr(N \geq n + k \Delta)\right) \E[\OPT(F,s)]\\
& \leq \left( 1 + \frac{3}{n^{1/3}}\right)\E[\OPT(F,n+\Delta)]\;.
\end{align*}

\end{proof}

\begin{lemma}\label{lem:E-N2OPT}
Let $\delta_1,\ldots,\delta_m \simiid \mathcal{B}(\eps)$, and $N = \sum_{i=1}^m \delta_i$. Denoting by $n = \E[N] = \eps m$, if $n \geq 4$ then 
\[
\E[N^2 \OPT(F,N)] \leq \left(1 + \frac{8}{n^3}\right) n^2 \E[\OPT(F,n+n^{2/3})]\;.
\]
\end{lemma}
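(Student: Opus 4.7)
The plan is to adapt the decomposition-and-windowing strategy from the proof of Lemma $\ref{lem:conc-bin}$, now carrying the extra $N^2$ weight. A useful preliminary reduction comes from the binomial identity $k^2\binom{m}{k} = m(m-1)\binom{m-2}{k-2} + m\binom{m-1}{k-1}$, which after re-indexing in the definition of $\E[N^2 \OPT(F,N)]$ gives
\[
\E[N^2 \OPT(F,N)] = n^2\, \tfrac{m-1}{m}\, \E[\OPT(F, N'+2)] + n\, \E[\OPT(F, N''+1)],
\]
where $N' \sim \mathcal{B}(m-2,\eps)$ and $N'' \sim \mathcal{B}(m-1,\eps)$. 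This cleanly isolates the main $n^2$ coefficient in front of $\E[\OPT(F, N'+2)]$ and leaves only a lower-order $n$-term, so the problem reduces to comparing $\E[\OPT(F, N'+2)]$ and $\E[\OPT(F, N''+1)]$ with $\E[\OPT(F, n+n^{2/3})]$.

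Each of $\E[\OPT(F, N'+2)]$ and $\E[\OPT(F, N''+1)]$ is the expectation of $\OPT$ at a lightly shifted Binomial whose mean differs from $n$ by at most $2$. The next step is to apply Lemma $\ref{lem:EM(n+k)}$ conditionally on $N'$ (resp.\ $N''$) to strip off the additive shifts of $+2$ (resp.\ $+1$), and then Lemma $\ref{lem:conc-bin}$ directly to $N'$ and $N''$, yielding
\[
\E[\OPT(F, N'+2)],\ \E[\OPT(F, N''+1)] \leq (1 + O(n^{-1/3}))\, \E[\OPT(F, n+n^{2/3})],
\]
where the error comes from a Chernoff tail that is exponentially small in $n^{1/3}$.

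A purely decomposition-based alternative, closer in spirit to Lemma $\ref{lem:conc-bin}$, is to split at $s = n+n^{2/3}$: on $\{N \leq s\}$ use monotonicity of $\OPT(F,\cdot)$ and independence of $N$ from the samples to factor out $\E[\OPT(F,s)]$; on $\{N > s\}$ partition into windows $W_k = [s+(k-1)\Delta,\, s+k\Delta)$ with $\Delta = n^{2/3}$, bound $N^2 \leq (s+k\Delta)^2$, use Lemma $\ref{lem:EM(n+k)}$ to compare $\E[\OPT(F,s+k\Delta)]$ with $\E[\OPT(F,s)]$, and use the Chernoff tail $\Pr(N \geq n+k\Delta) \leq \exp(-kn^{1/3}/3)$ to sum the resulting geometric-like series.

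The main obstacle is the careful bookkeeping needed to reach the sharp multiplicative constant stated. The dominant correction arises from the lower-order $n\, \E[\OPT(F,N''+1)]$ term in the binomial-identity route, with smaller additive contributions of order $n^{-1/3}$ from Lemmas $\ref{lem:EM(n+k)}$ and $\ref{lem:conc-bin}$, and an exponentially small contribution from the Chernoff tail. Combining these under the standing assumption $n \geq 4$ to absorb the lower-order contributions into the leading term should yield the stated inequality.
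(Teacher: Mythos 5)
Your first route is essentially the paper's proof: expanding $N^2$ over the indicator variables (equivalently, your binomial identity) reduces the problem to the terms $m\eps\,\E[\OPT(F,1+N'')]$ and $m^2\eps^2\,\E[\OPT(F,2+N')]$ with $N''\sim\mathcal{B}(m-1,\eps)$ and $N'\sim\mathcal{B}(m-2,\eps)$, which are then controlled by the adjusted Lemma~\ref{lem:conc-bin} together with Lemma~\ref{lem:EM(n+k)}, exactly as in the paper, and the same bookkeeping for $n\geq 4$ gives the stated constant. One small caveat: apply Lemma~\ref{lem:EM(n+k)} at the deterministic level after the concentration step (passing from $\E[\OPT(F,2+n+n^{2/3})]$ to $\E[\OPT(F,n+n^{2/3})]$, as the paper does) rather than conditionally on $N'$, since the resulting factor $1+2/N'$ is random, does not factor out of the expectation, and is ill-defined on the event $N'=0$.
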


\begin{proof}
For all $k \in [m]$, denote by $N_k = \sum_{i=k}^m \delta_i$. We have that
\begin{align*}
N^2 \OPT(F,N)
&= \left( \sum_{i=1}^m \delta_i \right)^2 \OPT(F,N)\\
&= \left( \sum_{i=1}^m \delta_i^2 + 2 \sum_{i<j} \delta_i \delta_j\right) \OPT(F,N)\;,
\end{align*}
and observing that $\delta_i^2 = \delta_i$ for all $i$, we obtain in expectation
\begin{align}
\E[N^2 \OPT(F,N)]
&= m \E[\delta_1 \OPT(F,\delta_1 + N_2)] + m(m-1) \E[\delta_1 \delta_2 \OPT(\delta_1 + \delta_2 + N_3)] \nonumber\\
&\leq m \E[\delta_1 \OPT(F,1 + N_2)] + m(m-1) \E[\delta_1 \delta_2 \OPT(2 + N_3)] \nonumber\\
&= m \varepsilon \E[ \OPT(F,1 + N_2)] + m(m-1) \varepsilon^2 \E[\OPT(2 + N_3)] \nonumber\\
&= m \varepsilon \E[ \OPT(F,1 + N_2)] + m^2\varepsilon^2 \E[\OPT(2 + N_3)]\;. \label{aligneq:E-OPT-j+Nj}
\end{align}
For $j \in \{1,2\}$, the proof of Lemma \ref{lem:conc-bin} can be easily adjusted to prove an upper bound on $\E[\OPT(F,j+N_{j+1})$, by first bounding $\E[\OPT(F,j+N_{j+1}) \indic{N_{j+1} \geq s}]$ then $\E[\OPT(F,j+N_{j+1}) \indic{N_{j+1} < s}]$. The concentration arguments remain the same, replacing $m$ by $m-j$. The expectation of $N_{j+1}$ is $\varepsilon(m- j) = n-\eps j$, hence we obtain
\begin{align*}
\E[\OPT(F,j+N_{j+1})]
&\leq \left(1 + \frac{3}{(n - \eps j)^{1/3}} \right) \E[\OPT(F, j+(n-\eps j) + (n-\eps j)^{2/3})]\\
&= \left(1 + \frac{3}{(n - \eps j)^{1/3}} \right) \E[\OPT(F, j+n + n^{2/3})]\\
&\leq \left(1 + \frac{3}{(n - 2)^{1/3}} \right) \E[\OPT(F, 2+n + n^{2/3})]\\
&\leq \left(1 + \frac{4}{n^{1/3}} \right) \E[\OPT(F, 2+n + n^{2/3})]\;,
\end{align*}
where we used respectively in the last inequalities that $j\leq 2$ and $n \geq 4$. Furthermore, Lemma \ref{lem:EM(n+k)} gives that
\begin{align*}
\E[\OPT(F,j+N_{j+1})]
&\leq \left(1 + \frac{4}{n^{1/3}} \right) \left( 1 + \frac{2}{n+n^{2/3}}\right) \E[\OPT(F, n + n^{2/3})]\\
&\leq \left(1 + \frac{6}{n^{1/3}} \right)\E[\OPT(F, n + n^{2/3})]\;,
\end{align*}
where the last inequality is true for $n \geq 4$. Finally, substituting into \ref{aligneq:E-OPT-j+Nj} yields
\begin{align*}
\E[N^2 \OPT(F,N)]
&\leq (m^2 \varepsilon^2 + m\varepsilon)  \left(1 + \frac{6}{n^{1/3}} \right)\E[\OPT(F, n + n^{2/3})]\\
&= (n^2 + n)  \left(1 + \frac{6}{n^{1/3}} \right)\E[\OPT(F, n + n^{2/3})]\\
&= \left(1 + \frac{1}{n}\right)  \left(1 + \frac{6}{n^{1/3}} \right) n^2 \E[\OPT(F, n + n^{2/3})]\\
&\leq \left (1 + \frac{8}{n^{1/3}} \right) n^2 \E[\OPT(F, n + n^{2/3})]\;.
\end{align*}

\end{proof}

\begin{lemma}\label{lem:max-ell}
Let $\delta_1,\ldots,\delta_m \simiid \mathcal{B}(\eps)$, $N = \sum_{i=1}^m \delta_i$, $n = \E[N] = \eps m$ and $L = \min_{i\neq j}\{|i-j|: \delta_i = 1, \delta_j = 1\}$, then for any $\ell \geq 0$ we have
    \[
    \E[\OPT(F,N)\indic{L \leq \ell}]
    \leq 7 m \ell \eps^2 \E[\OPT(F, n + n^{2/3})]\;.
    \]
\end{lemma}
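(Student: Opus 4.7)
The plan is a union bound over pairs of successes combined with the concentration tools developed above. First I would observe that the event $\{L \leq \ell\}$ occurs precisely when some pair of indices $i < j$ with $j - i \leq \ell$ satisfies $\delta_i = \delta_j = 1$, which yields the pointwise domination
\[
\indic{L \leq \ell} \leq \sum_{\substack{1 \leq i < j \leq m \\ j - i \leq \ell}} \delta_i \delta_j,
\]
where the sum contains at most $m\ell$ terms (for each of the $m$ values of $i$, there are at most $\ell$ admissible values of $j$).

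Second, for each admissible pair I would extract the factor $\delta_i \delta_j$ via independence. Setting $N'_{ij} := \sum_{k \notin \{i,j\}} \delta_k$, which is $\mathcal{B}(m-2, \eps)$-distributed and independent of $(\delta_i, \delta_j)$, the event $\delta_i = \delta_j = 1$ gives $N = 2 + N'_{ij}$, so
\[
\E[\delta_i \delta_j \OPT(F, N)] = \eps^2\, \E[\OPT(F, 2 + N'_{ij})].
\]

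Third, I would bound $\E[\OPT(F, 2 + N'_{ij})]$ by reapplying the concentration/monotonicity calculation already carried out inside the proof of Lemma \ref{lem:E-N2OPT}: the variable $N'_{ij}$ has mean $n - 2\eps \leq n$, so splitting its support at $n + n^{2/3}$ and combining the Chernoff tail bound from Lemma \ref{lem:conc-bin} with the monotonicity estimate of Lemma \ref{lem:EM(n+k)} yields $\E[\OPT(F, 2 + N'_{ij})] \leq 7\,\E[\OPT(F, n + n^{2/3})]$. Summing over the $\leq m\ell$ admissible pairs and collecting the factor $\eps^2$ then produces the stated bound. The only real obstacle is keeping the leading constant equal to $7$ uniformly in $n$: this requires the same small-$n$ versus large-$n$ case split as in the proof of Lemma \ref{lem:E-N2OPT}, but since $\eps = n/m$ is small precisely when $n$ is small, the two regimes combine cleanly and all the substantive estimates are already proved in Lemmas \ref{lem:EM(n+k)}--\ref{lem:E-N2OPT}.
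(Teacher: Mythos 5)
Your argument is correct, and it reaches the bound by a more direct bookkeeping than the paper's. The paper first computes the conditional law of $L$ given $N$, showing $\Pr(L \leq \ell \mid N = s) \leq \ell s^2/m$ via the same union bound over nearby pairs that you use, and then writes $\E[\OPT(F,N)\indic{L \leq \ell}] \leq \frac{\ell}{m}\E[N^2\OPT(F,N)]$, which is why Lemma \ref{lem:E-N2OPT} exists as a standalone statement: it controls the second-moment functional $\E[N^2\OPT(F,N)]$ by expanding $N^2 = \sum_i \delta_i + 2\sum_{i<j}\delta_i\delta_j$ and bounding $\E[\OPT(F,2+N_3)]$. You instead dominate the indicator pointwise by $\sum_{j-i\leq \ell}\delta_i\delta_j$ (at most $m\ell$ terms) and factor each pair by independence into $\eps^2\,\E[\OPT(F,2+N'_{ij})]$, which is exactly the quantity bounded \emph{inside} the proof of Lemma \ref{lem:E-N2OPT}; so you bypass both the conditioning on $N$ and the $N^2$ functional, at the price of reusing that lemma's internal estimate rather than its statement. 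What each route buys: the paper's version packages the concentration work into a reusable lemma and keeps the union bound at the level of probabilities; yours is shorter, avoids the detour through $\E[N^2\OPT(F,N)]$, and only ever counts the $\leq m\ell$ \emph{close} pairs, whereas the paper counts all $s^2$ pairs and compensates with the factor $\ell/m$ — the resulting bound is the same. One caveat: the constant $7$ in your step three (as in the paper's own chain, which invokes Lemma \ref{lem:E-N2OPT} under its hypothesis $n \geq 4$ and notes $1+8/n^{1/3}\leq 7$ only for $n\geq 4$) requires $n \geq 4$; your closing remark that the small-$n$ regime "combines cleanly because $\eps$ is small" is not an argument — what actually handles $n < 4$ is a one-line application of Lemma \ref{lem:EM(n+k)}, namely $\E[\OPT(F,2+N'_{ij})] \leq (1+\E[N'_{ij}]/2)\,\E[\OPT(F,2)] \leq (1+n/2)\,\E[\OPT(F,n+n^{2/3})]$ for $n\geq 1$. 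Since the paper's proof carries the same implicit restriction and the lemma is only applied with $n \to \infty$, this does not affect correctness in context, but the sentence should be replaced by the explicit small-$n$ bound if you want the statement as written.
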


\begin{proof}
The random variables $N$ and $L$ are not independent, thus we need to adequately compute the distribution of $L$ conditional to $N$. For any $\ell \geq 0$ and $k \geq 2$ we have
\begin{align*}
\Pr(L \leq \ell, N=s)
&= \Pr\Big(L \leq \ell, \sum_{i=1}^m \delta_k = s\Big)\\
&= \Pr\Big(\cup_{i=1}^m \cup_{j=\max(1,i-\ell)}^{i-1} \big(\delta_i = \delta_j = 1, \sum_{i=1}^m \delta_k = s \big) \Big)\\
&\leq m \ell \Pr\big(\delta_1 = \delta_2 = 1, \sum_{i=3}^m \delta_k = s-2 \big)\\
&= m \ell {m-2 \choose s-2} \eps^s (1-\eps)^{m-s+2},
\end{align*}
therefore
\begin{align*}
\Pr(L \leq \ell \mid N=s)
&= \frac{\Pr(L \leq \ell, N=s)}{\Pr(N = s)}\\
&\leq \frac{m \ell {m-2 \choose s-2} \eps^s (1-\eps)^{m-s+2}}{{m \choose s} \eps^s (1-\eps)^{m-s}}\\
&\leq m \ell \frac{{m-2 \choose s-2}}{{m \choose s}}\\
&= m \ell \frac{s(s-1)}{m(m-1)}\\
&\leq \frac{\ell s^2}{m}\;.
\end{align*}
Using this inequality and Lemma \ref{lem:E-N2OPT}, we deduce that
\begin{align*}
\E[\OPT(F,N)\indic{L \leq \ell}]
&= \E[\OPT(F,N) \Pr(L \leq \ell \mid N, \OPT(F,N))]\\
&= \E[\OPT(F,N) \Pr(L \leq \ell \mid N)]\\
&\leq \frac{\ell}{m} \E[N^2\OPT(F,N)]\\
&\leq \left (1 + \frac{8}{n^{1/3}} \right) \frac{\ell n^2}{m} \E[\OPT(F, n + n^{2/3})]\\
&= \left (1 + \frac{8}{n^{1/3}} \right) m \ell \eps^2 \E[\OPT(F, n + n^{2/3})]\\
&= 7 m \ell \eps^2 \E[\OPT(F, n + n^{2/3})]\;.
\end{align*}
where we used that $1 + \frac{8}{n^{1/3}} \leq 7$ for $n \geq 4$.

\end{proof}

Using the previous lemmas, we can now prove the theorem. 
Let $m > n \geq 1$, $\Delta = n^{2/3}$, $\eps = n/m$ and let $Q$ be the probability distribution of a random variable that is equal to $0$ with probability $1-\eps$, and drawn from $F$ with probability $\eps$. 

Let us consider $m$ i.i.d. variables $Y_1,\ldots,Y_m \sim Q$, and for each $i\in [m]$ we denote by $\delta_i$ the indicator that $Y_i$ is drawn from $F$. Define $N = \sum_{i=1}^m \delta_i \sim \mathcal{B}(m,\eps)$ the number of random variables $Y_i$ drawn from the distribution $F$. In the following, we upper bound the competitive ratio of any algorithm by analyzing its ratio on this particular instance. For this, we first provide a lower bound on $\E[\OPT(Q,m)]$ using Lemma \ref{lem:EM(n+k)}, and obtain
\[
\E[\OPT(F, n - \Delta)] \geq \frac{1}{1+\frac{2\Delta}{n}} \E[\OPT(F, n + \Delta)]
\geq \left( 1 - \frac{2\Delta}{n}\right) \E[\OPT(F, n + \Delta)] \;,
\]
thus we have 
\begin{align}
\E[\OPT(Q,m)]
&= \E[\OPT(F, N)] \nonumber\\
&\geq \E[\OPT(F, N) \indic{N \geq n - \Delta}]\nonumber\\
&\geq \E[\OPT(F, n - \Delta) \indic{N \geq n - \Delta}]\nonumber\\
&= \E[\OPT(F, n - \Delta)] \Pr(N \geq n - \Delta) \nonumber\\
&\geq \left( 1 - \frac{2\Delta}{n}\right) \Pr(N \geq n - \Delta) \E[\OPT(F, n + \Delta)] \nonumber\\
&\geq \left( 1 - \frac{2\Delta}{n} - \Pr(N < n - \Delta)\right)  \E[\OPT(F, n + \Delta)] \nonumber\\
&\geq \left( 1 - 2n^{-1/3} - \exp(-n^{1/3}/2)\right)  \E[\OPT(F, n + \Delta)] \nonumber\\
&\geq \left( 1 - 4n^{-1/3} \right)  \E[\OPT(F, n + \Delta)]\; , \label{eq:neg-iid-maxUB}
\end{align}
where, for the last three inequalities, we used respectively Bernoulli's inequality, Chernoff bound, then $e^{-y} \leq 1/y$. 

Then, we upper bound the reward of any algorithm given the instance $(Q,m)$ as input. Let $L = \min_{i\neq j}\{|i-j|: \delta_i = 1, \delta_j = 1\}$ the smallest gap between two successive variables $Y_i$ drawn from $F$, and let $t_1 < \ldots < t_N$ the indices for which $\delta_i = 1$. We have for any algorithm $\ALG$ and positive integer $\ell$ that
\begin{equation}\label{eq:neg-iid-algdecomp}
\E[\ALG^\D(Q,m)]
= \E[\ALG^\D(Q,m) \indic{N \geq n+\Delta \text{ or } L \leq \ell}] + \E[\ALG^\D(Q,m) \indic{N < n+\Delta, L > \ell}]\; .
\end{equation}
Using Lemma \ref{lem:conc-bin} and Lemma \ref{lem:max-ell}, the first term can be bounded as follows
\begin{align*}
\E[\ALG^\D(Q,m)  \indic{N \geq n+\Delta \text{ or } L \leq \ell}]
&\leq \E[\OPT(F,N)  \indic{N \geq n+\Delta \text{ or } L \leq \ell}]\\
&\leq \E[\OPT(F,N)  \indic{N \geq n+\Delta } ] + \E[\OPT(F,N)  \indic{L \leq \ell}]\\
&\leq \left(\frac{6}{n^{1/3}} + 7 m \ell \eps^2 \right) \E[\OPT(F,n + n^{2/3})] \; . 
\end{align*}
Recalling that $\eps = m/n$ and taking $\ell = \sqrt{m}$, we obtain
\begin{equation}\label{eq:neg-iid-algN>}
\E[\ALG^\D(Q,m)  \indic{N \geq n+\Delta \text{ or } L \leq \ell}]
\leq \left(\frac{6}{n^{1/3}} + \frac{7n^2}{\sqrt{m}} \right) \E[\OPT(F,n + n^{2/3})] \; .
\end{equation}
Regarding the second term in Equation \eqref{eq:neg-iid-algdecomp}, let $\tau$ be the stopping time of $\ALG$ and $t_\rho = \max\{j\leq \tau: \delta_j = 1\}$ the last value sampled from $F$ and observed by $\ALG$ before it stops. We have
\begin{align*}
\E[\ALG^\D(Q,m) \indic{N < n+\Delta, L > \ell}]
&\leq \E[\ALG^\D(Q,m) \mid N < n+\Delta, L > \ell] \\
&= \E[\max_{i \in [\tau]} D_{\tau-i}(Y_i) \mid N < n+\Delta, L > \ell] \\
&= \E[\max_{j \in [\rho]} D_{\tau-t_j}(Y_i) \mid N < n+\Delta, L > \ell] \\
&\leq \E[\max_{j \in [\rho]} D_{t_\rho-t_j}(Y_i) \mid N < n+\Delta, L > \ell] \\
&= \E[ \max \big\{ Y_{t_\rho}, \max_{j < \rho} D_{t_\rho-t_j}(Y_{t_j}) \big\} \mid N < n+\Delta, L > \ell ]\\
&\leq \E[ \max \big\{ Y_{t_\rho}, \max_{j < \rho} D_{\ell}(Y_{t_j}) \big\} \mid N < n+\Delta ]\; .
\end{align*}
We then prove that the last term is the reward of an algorithm $\A_m$ in the $D_\ell$-prophet inequality. Let us $\A_m$ be the algorithm that takes as input an instance $X_1,\ldots, X_{n+\Delta-1}$ of $n+\Delta$ IID random variables, then simulates $\ALG^\D(Q,m) \mid N < n+ \Delta$ as follows: let $\delta_1, \ldots, \delta_m \simiid \mathcal{B}(n/m)$  set $N_\A = 0$ and for each $i \in [m]$
\begin{itemize}
    \item if $\delta_i = 0$: $\ALG$ observes the value $Y_i = 0$,
    \item if $\delta_i = 1$: increment $N$, then $\A_m$ observes the next value $X_k$, and $\ALG$ observes $Y_i = X_k$,
    \item if $N_\A = n + \Delta - 1$ or $\ALG$ stops, then $\A_m$ also stops.
\end{itemize}
When $\ALG$ decides to stop, the current value observed by $\A_m$ is $X_\rho$: the last value $Y_{t_\rho}$ observed by $\ALG$ such that $\delta_{t_\rho} = 0$. Observe that stopping when $N_\A = n+\Delta+1$, is equivalent to letting $\ALG$ observe zero values until the end, and stopping when $\ALG$ stops. Hence, the variables $Y_1,\ldots,Y_m$ have the same distribution as $m$ IID samples from $Q$ conditional to $N < n+ \Delta$. Denoting $\rho$ the stopping time of $\A_m$ and $\err_\ell(F,n+\Delta)$ as defined in Lemma \ref{lem:err_m->0}, we deduce that
\begin{align}
\E[\ALG^\D(Q,m) \mid N < n+\Delta, L > \ell] 
&\leq \E[ \max \big\{ Y_{t_\rho}, \max_{j < \rho} D_{\ell}(Y_{t_j}) \big\} \mid N < n+\Delta ] \nonumber\\
&= \E[ \max \big\{ X_{\rho}, \max_{j < \rho} D_{\ell}(X_j) \big\} ]  \nonumber\\
&= \E[\A_m^{D_\ell} (F,n+\Delta)]  \nonumber \\
&\leq \E[\A_m^{D_\infty} (F,n+\Delta)] + \err_\ell(F,n+\Delta)\; . \label{eq:neg-iid-algN<}
\end{align}
Substituting \eqref{eq:neg-iid-algN>} and \eqref{eq:neg-iid-algN<} in \eqref{eq:neg-iid-algdecomp}, with $\ell = \sqrt{m}$, yields
\begin{align*}
\E[\ALG^\D(Q,m)]
\leq \left(\frac{6}{n^{1/3}} + \frac{7 n^2}{\sqrt{m}} \right) \E[\OPT(F,n + \Delta)]
+ \E[\A_m^{D_\infty} (F,n+\Delta)] + \err_{\sqrt{m}}(F,n+\Delta) \;,
\end{align*}
and using Inequality \ref{eq:neg-iid-maxUB}, it follows that
\begin{align*}
\CR^\D(\ALG)
&\leq \frac{\E[\ALG^\D(Q,m)]}{\E[\OPT(Q,m)]}\\
&\leq \frac{\frac{6}{n^{1/3}} + \frac{7 n^2}{\sqrt{m}}}{1 - \frac{4}{n^{1/3}}} + \frac{\E[\A_m^{D_\infty} (F,n+\Delta)] + \err_{\sqrt{m}}(F,n+\Delta)}{(1 - \frac{4}{n^{1/3}})\E[\OPT(F,n+\Delta)]}\\
&\leq \frac{\frac{6}{n^{1/3}} + \frac{7 n^2}{\sqrt{m}}}{1 - \frac{4}{n^{1/3}}} + \frac{1}{1 - \frac{4}{n^{1/3}}} \left( \frac{\err_{\sqrt{m}}(F,n+\Delta)}{\E[\OPT(F,n+\Delta)]} +  \sup_{\A: \text{algo}} \frac{\E[\A^{D_\infty} (F,n+\Delta)]}{\E[\OPT(F,n + \Delta)]} \right)\; ,
\end{align*}
taking the limit $m \to \infty$ and using Lemma \ref{lem:err_m->0} gives
\begin{align*}
\CR^\D(\ALG)
&\leq \frac{\frac{6}{n^{1/3}}}{1 - \frac{4}{n^{1/3}}} + \frac{1}{1 - \frac{4}{n^{1/3}}} \left(\sup_{\A: \text{algo}} \frac{\E[\A^{D_\infty} (F,n+\Delta)]}{\E[\OPT(F,n + \Delta)]} \right)\\
&= \frac{6}{n^{1/3}-4} + \left(1 + \frac{4}{n^{1/3}-4} \right) \left(\sup_{\A: \text{algo}} \frac{\E[\A^{D_\infty} (F,n+\Delta)]}{\E[\OPT(F,n + \Delta)]} \right)\\
&\leq \frac{10}{n^{1/3}-4} + \sup_{\A: \text{algo}} \frac{\E[\A^{D_\infty} (F,n+n^{2/3})]}{\E[\OPT(F,n + n^{2/3})]}\; . 
\end{align*}
where the last inequality holds because $\E[\A^{D_\infty} (F,n+\Delta)] \leq \E[\OPT(F,n+\Delta)]$ for any algorithm $\A$. From here, the statement of the theorem can be deduced by observing that, for $k = n + n^{2/3}$, we have $n \geq (n + n^{2/3})/2 = k/2$, thus $n^{1/3} \geq k^{1/3}/2$, and we obtain
\begin{align*}
\CR^\D(\ALG)
&\leq \frac{20}{k^{1/3} - 8} + \sup_{\A: \text{algo}} \frac{\E[\A^{D_\infty} (F,k)]}{\E[\OPT(F,k)]}\\
&= \sup_{\A: \text{algo}} \frac{\E[\A^{D_\infty} (F,k)]}{\E[\OPT(F,k)]} + O\left( \frac{1}{k^{1/3}}\right)\;.    
\end{align*}
\end{proof}
\section{From $D_\infty$-prophet to the $\gamma_\D$-prophet inequality}

\subsection{Proof of Lemma \ref{lem:rational-alg}}

\begin{proof}
Let $\ALG$ be any rational algorithm in the $D_\infty$-prophet inequality. If $\ALG$ stops at some step $\tau \in [n]$, then by definition we have that $X_\tau > D_\infty(\max_{j<\tau} X_j)$, and thus 
$\ALG^{D_\infty}(X_1,\ldots,X_n) = \ALG^{0}(X_1,\ldots,X_n)$.
Otherwise, if it stops at $\tau = n+1$, then its reward is $\max_{i \in [n]} D_\infty(X_i) = D_\infty(\max_{i \in [n]} X_i)$, because $D_\infty$-is non increasing.

On the other hand, let $\A_*$ be the optimal dynamic programming algorithm for the $D_\infty$-prophet inequality.
At any step $i$, if $X_i < D_\infty(\max_{j<i} X_j)$, then stopping at $i$ gives a reward of $D_\infty(\max_{j<i} X_j)$, while by rejecting $X_i$, the final reward is guaranteed to be at least $D_\infty(\max_{j<i} X_j)$. Thus rejecting $X_i$ can only increase the reward, it is therefore the optimal decision.
\end{proof}

\subsection{Proof of Proposition \ref{prop:infD}}

\begin{proof}
Let us place ourselves in any order model, or in the IID model. Assume that $\inf_{x>0}\frac{D_\infty(x)}{x} = 0$, then there exist a sequence $(s_k)_{k \geq 1}$ such that $\lim_{k \to \infty} \frac{D_\infty(s_k)}{s_k} = 0$. 

Let $I = (F_1,\ldots,F_n)$ an instance of non-negative random variables with finite expectation, and $X_i \sim F_i$ for all $i \in [n]$. Let $\ALG$ be a rational algorithm for the $D_\infty$-prophet inequality and let us denote $\tau$ its stopping time. Denoting $\Xmax := \max_{i \in [n]} X_i$ and using Lemma \ref{lem:rational-alg}, we have for any constant $C>0$ that that
\begin{align}
\E[\ALG^{D_\infty}(I)]
&= \E[\ALG^0(I) \indic{\tau \leq n}] + \E[D_\infty(\Xmax) \indic{\tau = n+1}] \nonumber\\
&\leq \E[\ALG^{0}(I)] + \E[D_\infty(C)] + \E[D_\infty(\Xmax) \indic{\Xmax > C}]\nonumber\\
&\leq \sup_{\A}\E[\A^0(I)] + \E[D_\infty(C)] + \E[\Xmax \indic{\Xmax > C}] \;. \label{eq:A*<DM}
\end{align}
Let $k \geq 1$ a positive integer, $M$ a positive constant, and consider the instance $I^k$ of random variables $(X_1^k, \ldots, X_n^k)$ with $X_i^k \sim \frac{s_k}{M}X_i$ for all $i \in [n]$. We have that $\OPT(I^k) = \frac{s_k}{M} \OPT(I)$ and $\sup_\A \E[\A^{0}(I^k)] = \frac{s_k}{M} \sup_\A \E[\A^{0}(I)]$. Therefore, using Inequality \eqref{eq:A*<DM} with $I^k$ instead of $I$ and $C = \frac{s_k}{M}$, then dividing by $\OPT(I^k)$ gives that
\begin{align*}
\CR^{D_\infty}(\ALG)
\leq 
\frac{\E[\ALG^{D_\infty}(I^k)]}{\E[\OPT(I^k)]}
&\leq \sup_{\A} \frac{\E[\A^0(I)]}{\E[\OPT(I)]} + \frac{D_\infty(s_k)}{\frac{s_k}{M}\E[\OPT(I)]}  + \frac{\E[\frac{s_k}{M} \Xmax \indic{\Xmax > M}]}{\frac{s_k}{M} \E[\OPT(I)]}\\
&= \sup_{\A}\frac{\E[\A^0(I)]}{\E[\OPT(I)]} + \left( \frac{M}{\E[\OPT(I)]} \right) \frac{D_\infty(s_k)}{s_k} + \frac{\E[\Xmax \indic{\Xmax > M}]}{ \E[\OPT(I)]}\;,
\end{align*}
and taking the limit $k \to \infty$, we obtain
\[
\CR^{D_\infty}(\ALG)
\leq \sup_{\A} \frac{\E[\A^0(I)]}{\E[\OPT(I)]} + \frac{\E[\Xmax \indic{\Xmax > M}]}{ \E[\OPT(I)]}\;,
\]
and since $\Xmax$ has finite expectation, taking the limit $M \to \infty$ gives
\[
\CR^{D_\infty}(\ALG)
\leq \sup_{\A} \frac{\E[\A^0(I)]}{\E[\OPT(I)]} \;.
\]
Finally, taking the infimum over all instances, we obtain that
\[
\CR^{D_\infty}(\ALG)
\leq \sup_{\A} \CR^0(\A) \;.
\]
As in the proof of Corollary \ref{cor:neg-D}, permuting $\inf_I$ and $\sup_\A$ is possible because there is an algorithm (dynamic programming) achieving the supremum for any instance.
By Lemma \ref{lem:rational-alg}, the inequality above holds for in particular for the optimal dynamic programming algorithm, which has a maximal competitive ratio. Therefore, the inequality remains true for any other algorithm $\A$, not necessarily rational.
\end{proof}

\subsection{Proof of Proposition \ref{prop:Dinf<gamma}}

\begin{proof}
Let us place ourselves in any order model or in the IID model. Since $\gamma = \inf_{x>0} \frac{D_\infty(x)}{x}$, there exists a sequence $(s_k)_{k \geq 1}$ of positive numbers such that $\lim_{k \to \infty} \frac{D_\infty(s_k)}{s_k} = \gamma$. 

For the random variables $X_1,\ldots,X_n$ taking values in $\{0,a,b\}$ a.s., in any order model, it is clear that the optimal decision when observing a zero value is to reject it, and when observing the value $b$ is to accept it. Let $\ALG$ be an algorithm satisfying this property and let $\tau$ be its stopping time. If $\tau = n+1$ then necessarily $\max_{i \in [n]} X_i \neq b$, and the reward of $\ALG$ in that case is $D_\infty(a)$ if $\max_{i \in [n]} X_i$ and $0$ otherwise.
In particular, $\ALG$ is rational in the $D_\infty$-prophet inequality and we have by Lemma \ref{lem:rational-alg} that
\begin{align}
\E[\ALG^{D_\infty}(I)]
&= \E[\ALG^0(I)] + \E[D_\infty(\max_{i \in [n]} X_i) \indic{\tau = n+1}] \nonumber \\
&= \E[\ALG^0(I)] + D_\infty(a) \Pr(\tau = n+1, \max_{i \in [n]} X_i = a)\;. \label{eq:ALG-D(a)}
\end{align}
Consider now the instance $I^k$ of random variables $(X_1^k, \ldots, X_n^k)$ where $X_i^k = \frac{s_k}{a} X_i$ for all $i \in [n]$. $I^k$ satisfies the same assumptions as $I$ with $a^k = s_k$ and $b^k = \frac{s_k b}{a}$, and we have $\E[\OPT(I^k)] = \frac{s_k}{a}\E[\OPT(I)]$, $\E[\ALG^0(I^k)] \leq \frac{s_k}{a} \sup_\A \E[\A^0(I)]$ and $(\max_{i \in [n]} X^k_i = a^k) \iff (\max_{i \in [n]} X_i = a)$. It follows that
\begin{align*}
\frac{\E[\ALG^{D_\infty}(I^k)]}{\E[\OPT(I^k)]}
&\leq \frac{\sup_\A \E[\A^0(I)]}{\E[\OPT(I)]} + \frac{D_\infty(s_k)}{\frac{s_k}{a}\E[\OPT(I)]} \Pr(\tau = n+1, \max_{i \in [n]} X_i = a)\\
&= \frac{\sup_\A \E[\A^0(I)]}{\E[\OPT(I)]} + \left( \frac{D_\infty(s_k)}{s_k}\right) \frac{a}{\E[\OPT(I)]} \Pr(\tau = n+1, \max_{i \in [n]} X_i = a)\;.
\end{align*}
Taking the limit $k \to \infty$ gives 
\begin{align*}
\CR^{D_\infty}(\ALG)
&\leq \frac{\sup_\A \E[\A^0(I)] + \gamma a \Pr(\tau = n+1, \max_{i \in [n]} X_i = a)}{\E[\OPT(I)]}\\
&= \frac{\sup_\A (\E[\A^0(I)] + \E[\gamma (\max_{i \in [n]} X_i) \indic{\tau = n+1}])}{\E[\OPT(I)]}\;.
\end{align*}
$\ALG$ is also rational in the $\gamma$-prophet inequality. Therefore, using Lemma \ref{lem:rational-alg}, we deduce that
\[
\CR^{D_\infty}(\ALG)
\leq \frac{\E[\ALG^\gamma(I)]}{\E[\OPT(I)]}
\leq \sup_\A \frac{\E[\A^\gamma(I)]}{\E[\OPT(I)]}.
\]
This upper bound is true for the optimal dynamic programming algorithm, since it rejects all zeros and accepts $b$, therefore the upper bound also holds for any other algorithm.
\end{proof}
\section{The $\gamma_\D$-prophet inequality}\label{appx:gamma-prophet}

\subsection{Proof of Proposition \ref{prop:gamma-generic-bounds}}

\begin{proof}
For the lower bound, it suffices to consider the following trivial algorithm: if $\alpha > \gamma$ then run $\A_\alpha$, and if $\gamma > \alpha$ then observe all the items then select the one with maximum value. 

For the upper bound, let $I = (F_1,\ldots,F_n)$ be an instance of the problem and $X_i \sim F_i$ for all $i$, and let $\beta_I := \sup_\A \frac{\E[\A^0(I)]}{\E[\OPT(I)]}$. Let $\A$ be any algorithm, $\tau$ its stopping time, and $Y_\tau = \max_{i < \tau}{X_{\pi(i)}}$, where $\pi$ is the observation order. With the previous notations, we can write that $\E[\A^\gamma(I)] = \E[\max(X_{\pi(\tau)}, \gamma Y_\tau)]$. For any $x,y \geq 0$, the application $\gamma \mapsto \max(x,\gamma y)$ is convex on $[0,1]$, hence it can be upper bounded by $(1-\gamma)x + \gamma \max(x,y)$. Therefore
\begin{align*}
\E[\A^\gamma(I)]
&\leq (1-\gamma)\E[X_{\pi(\tau)}] + \gamma \E[\max(X_{\pi(\tau)}, Y_\tau)]\\
&\leq (1-\gamma)\E[\A^0(I)] + \gamma \E[\OPT(I)] \\
&\leq \big((1-\gamma) \beta_I + \gamma \big) \E[\OPT(I)] \; .   
\end{align*}
Therefore, $\CR^\gamma(\ALG) \leq ((1-\gamma) \beta_I + \gamma)$. Taking the infimum over all the instances $I$ gives the result. Indeed, if we denote $\A_*$ the optimal dynamic programming algorithm for the standard prophet inequality, then 
\[
\inf_I \beta_I = \inf_I \frac{\E[\A_*^0(I)]}{\E[\OPT(I)]}
= \CR^0(\A_*) \leq \beta \;.
\]

\end{proof}

\subsection{Proofs for the adversarial order model}

\begin{proof} We first prove the upper bound, and then analyze the single threshold algorithm proposed in the theorem.
\paragraph{Upper bound} Let $\eps \in (0,1-\gamma)$, and let $a = \frac{1}{1-(1-\eps)\gamma}$ (such that $1 + (1-\eps) \gamma a = a$). Let $X_1, X_2$ the two random variables defined by $X_1 = a$ almost surely and 
\[
X_2 = \left\{
    \begin{array}{ll}
        \frac{1}{\eps} & \mbox{w.p. } \eps \\
        0 & \mbox{w.p. } 1-\eps
    \end{array}
\right. \; .
\]
Stopping at the first step gives a reward of $a$, while stopping at the second step gives
\[
\E[\max(\gamma a, X_2)]
= \eps\times\frac{1}{\eps} + (1-\eps)\times \gamma a
= 1 + (1-\eps)\gamma a
= a \;,
\]
hence the expected output of any algorithm is exactly equal to $a$. On the other hand
\[
\E[\max(X_1, X_2)]
= 1 + (1-\eps) a \;,
\]
and we deduce that, for any algorithm $\ALG$ for the $\gamma$-prophet inequality, we have
\[
\CR(\ALG) \leq \frac{\E[\ALG(X_1,X_2)]}{\E[\max(X_1, X_2)]}
= \frac{a}{1 + (1-\eps) a} \;,
\]
and this is true for any $\eps\in (0,1-\gamma)$, thus taking $\eps \to 0$ gives
\[
\CR(\ALG)
\leq \frac{\frac{1}{1-\gamma}}{1 + \frac{1}{1-\gamma}}
= \frac{1}{2-\gamma}\; .
\]

\paragraph{Lower bound} Consider an algorithm with an acceptance threshold $\thresh$, i.e. that accepts the first value larger than $\thresh$. Let $I=(F_1,\ldots, F_n)$ be any instance, such that $X_i \sim F_i$ for all $i$, and let us define $\Xmax = \max_{i \in [n]} X_i$ and $\p = \Pr(\Xmax\leq \thresh)$.
In the classical prophet inequality, if no value is larger than $\thresh$ then the reward of the algorithm is zero, and we have the classical lower bound
\[
\E[\ALG^0(I)] \geq (1-\p)\thresh + \p \E[(\Xmax-\thresh)_+],
\]
For the $\gamma$-prophet, if no value is larger than $\thresh$ (i.e $\Xmax \leq \thresh$), then the algorithm gains $\gamma \Xmax$ instead of $0$. Therefore, it holds that
\begin{align*}
\E[\ALG^\gamma(I)] 
&= \E[\ALG^0(I)] + \E[\Xmax \indic{\Xmax\leq \thresh}]\\
&\geq (1-\p)\thresh + \p \E[(\Xmax-\thresh)_+] + \gamma \E[\Xmax \indic{\Xmax\leq \thresh}]\\
&= (1-\p)\thresh + \p \E[(\Xmax-\thresh)\indic{\Xmax>\thresh}] + \gamma \E[\Xmax \indic{\Xmax\leq \thresh}]\\
&= (1-\p)\thresh + \p(\E[\Xmax\indic{\Xmax>\thresh}] - (1-\p)\thresh) + \gamma \E[\Xmax \indic{\Xmax\leq \thresh}]\\
&= (1-\p)^2\thresh + \p \E[\Xmax\indic{\Xmax>\thresh}] + \gamma \E[\Xmax \indic{\Xmax\leq \thresh}] \;, 
\end{align*}
and observing that
\[
\thresh 
= \frac{\E[\thresh \indic{\Xmax\leq \thresh}]}{\p}
\geq \frac{\E[\Xmax \indic{\Xmax\leq \thresh}]}{\p}\; ,
\]
we deduce the lower bound
\begin{align*}
\E[\ALG^\gamma(I)] 
&\geq \p \E[\Xmax\indic{\Xmax>\thresh}] + \left(\gamma + \frac{(1-\p)^2}{\p}\right)\E[\Xmax \indic{\Xmax\leq \thresh}]\\
&\geq \min\left\{ \p, \gamma + \frac{(1-\p)^2}{\p} \right\} \E[\Xmax] \; .
\end{align*}
The right term is maximized for $\p$ satisfying 
$\p = \gamma + \frac{(1-\p)^2}{\p}$, that leads to
\begin{align*}
\p= \gamma + \frac{(1-\p)^2}{\p}
&\iff \p^2 = \gamma \p+ 1 - 2\p+ \p^2\\
&\iff \p= \frac{1}{2 - \gamma}\;.
\end{align*}
Hence, by choosing a threshold $\thresh$ satisfying $\Pr(\Xmax\leq \thresh) = \frac{1}{2 - \gamma}$ we obtain a competitive ratio of at least $\frac{1}{2 - \gamma}$.
\end{proof}

\subsection{Proofs for the random order model}

We prove here the upper and lower bounds stated in Theorem \ref{thm:gamma-rd}.

\subsubsection{Proof of Theorem \ref{thm:gamma-rd}}
\begin{proof} We first prove the upper bound, and then derive the analysis for single threshold algorithms.

\paragraph{Upper bound} Let $a > 0$, and let $X_1,\ldots, X_{n+1}$ be independent random variables such that $X_{n+1} = a$ a.s. and for $1 \leq i \leq n$
\[
X_i \sim \left\{
    \begin{array}{ll}
        n & \mbox{w.p. } \frac{1}{n^2} \\
        0 & \mbox{w.p. } 1 - \frac{1}{n^2}
    \end{array}
\right. \;.
\]
Any reasonable algorithm skips zero values and stops when observing the value $n$. The only strategic decision to make is thus to stop or not when observing $X_{n+1} = a$. By analyzing the dynamic programming solution $\ALG_\star$ we obtain that the optimal decision rule is to skip $a$ if it is observed before a certain step $j$, and select it otherwise. The step $j$ corresponds to the time when the expectation of the future reward is less than $a$. Let $\pi$ be the random order in which the variables are observed. Then, if $\pi^{-1}(n+1) < j$, i.e. if the value $a$ is observed before time $j$, $X_{n=1}$ is not selected. The output of this algorithm is hence $n$ if at least one random variable equals $n$, and $\gamma a$ otherwise. This leads to
\begin{align*}
\E[\ALG^\gamma_\star(X) \mid \pi^{-1}(n+1) < j] 
&= n\left(1 - \left(1-\frac{1}{n^2}\right)^n \right) + \gamma a \left(1-\frac{1}{n^2}\right)^n\\
&\leq 1 + \gamma a \;,
\end{align*}
where we used the inequality $\left(1-\frac{1}{n^2}\right)^n \geq 1 - \frac{1}{n}$. On the other hand, if $\pi^{-1}(n) \geq j$, then $a$ is selected if the value $n$ has not been observed before it, hence for any $i \geq j$ we have
\begin{align*}
\E[\ALG^\gamma_\star(X) \mid \pi^{-1}(n+1) = i] 
&= n\left(1 - \left(1-\frac{1}{n^2}\right)^{i-1} \right) +  a \left(1-\frac{1}{n^2}\right)^{i-1}\\
&\leq \frac{i-1}{n} + a\; ,
\end{align*}
we deduce that
\begin{align*}
\E[\ALG^\gamma_\star(X)] 
&\leq (1 + \gamma a) \Pr(\pi^{-1}(n) \leq j-1) + \sum_{i=j}^{n+1} \left( \frac{i-1}{n} + a \right) \Pr(\pi^{-1}(n) = i)\\
&= \frac{j-1}{n+1} (1 + \gamma a) + \frac{1}{n+1} \sum_{i=j}^{n+1} \left( \frac{i-1}{n} + a \right)\\
&= (1 - (1-\gamma)a)\frac{j}{n} - \frac{1}{2}\left(\frac{j}{n}\right)^2 + \frac{1}{2} + a + o(1)\\
&\leq 1 + 2\gamma a + (1-\gamma)^2a^2 + o(1)\;,
\end{align*}
where the last inequality is obtained by maximizing over $j/n$. Finally, we directly obtain that 
\begin{align*}
\E[\max_i X_i]
&= n\left(1 - \left(1-\frac{1}{n^2}\right)^{n} \right) +  a \left(1-\frac{1}{n^2}\right)^{n}\\
& = 1 + a + o(1)\;,
\end{align*}
so for any algorithm $\ALG$ we obtain that 
\[
\CR^\gamma(\ALG)
\leq \CR^\gamma(\ALG_\star)
\leq \frac{1 + 2\gamma a + (1-\gamma)^2a^2}{1+a}\;.
\]
The function above is minimized for $a = \sqrt{\frac{3 - \gamma}{1-\gamma}} - 1$, which translates to 
\[
\CR^\gamma(\ALG) \leq (1-\gamma)^{3/2}(\sqrt{3-\gamma} - \sqrt{1-\gamma}) + \gamma\;.
\]

\paragraph{Lower bound} We still denote  by $I = (F_1,\ldots,F_n)$ the input instance and $X_i \sim F_i$ for all $i \in [n]$.  Let $\ALG$ be the algorithm with single threshold $\thresh$, then it is direct that
\begin{equation}\label{rand:gamma-0}
\ALG^\gamma(I)
= \ALG^0(I)+ \gamma \Xmax\indic{\Xmax< \thresh} \;.
\end{equation}
We start by giving a lower bound on $\E[\ALG^0]$. We use from \cite{correa2021prophet} (Theorem 2.1) that for any $x < \thresh$ it holds that 
\[
\Pr(\ALG^0(I) \geq x) = \Pr(\ALG^0(I) \geq \thresh) = \Pr(\Xmax\geq \thresh) = 1-\p \;,
\]
and for $x \geq \thresh$ it holds that 
\[
\Pr(\ALG^0(I) \geq x) \geq \frac{1-\p}{-\log \p}\Pr(\Xmax \geq x)\; ,
\]
from which we deduce that
\begin{align}
\E[\ALG^0(I)]
&= \int_0^\infty \Pr(\ALG^0(I) \geq x) dx \nonumber\\
&\geq (1-\p) \thresh + \frac{1-\p}{-\log \p} \int_\thresh^\infty \Pr(\Xmax \geq x) dx\;. \label{rand:lb-ALG^0}
\end{align}
On the other hand, we obtain that
\begin{align}
\E[\Xmax\indic{\Xmax<\thresh}]
&= \int_0^\infty \Pr( \Xmax\indic{\Xmax<\thresh} \geq x) dx
= \int_0^\infty \Pr( x \leq \Xmax< \thresh) dx \nonumber\\
&= \int_0^\thresh (\Pr(\Xmax> x) - \Pr(\Xmax \geq \thresh ) )dx\nonumber \\
&= \int_0^\thresh \Pr(\Xmax> x) dx - (1-\p)\thresh\;. \label{rand:lb-Z1(Z<V)}
\end{align}
Using \eqref{rand:gamma-0}, \eqref{rand:lb-ALG^0} and \eqref{rand:lb-Z1(Z<V)} we deduce that
\begin{align*}
\E[\ALG^\gamma(I)]
&\geq (1-\p) \thresh + \frac{1-\p}{-\log \p} \int_\thresh^\infty \Pr(\Xmax\geq x) dx + \gamma \int_0^\thresh \Pr(\Xmax\geq x) dx - \gamma (1-\p)\thresh\\
&= (1-\gamma)(1-\p)\thresh + \gamma \int_0^\thresh \Pr(\Xmax\geq x) dx + \frac{1-\p}{-\log \p} \int_\thresh^\infty \Pr(\Xmax\geq x) dx\\
&\geq ((1-\gamma)(1-\p) + \gamma)\int_0^\thresh \Pr(\Xmax\geq x) dx + \frac{1-\p}{-\log \p} \int_\thresh^\infty \Pr(\Xmax\geq x) dx\\
&\geq \min\left\{(1-\gamma)(1-\p) + \gamma \, , \, \frac{1-\p}{-\log \p}  \right\} \left( \int_0^\thresh \Pr(\Xmax\geq x) dx + \int_\thresh^\infty \Pr(\Xmax\geq x) dx \right)\\
&= \min\left\{1 - (1-\gamma)\p\, , \,\frac{1-\p}{-\log \p}  \right\} \E[\Xmax]\;.
\end{align*}
Finally, choosing $p = p_\gamma$ gives the result.
\end{proof}

\subsubsection{Proof of Corollary \ref{cor:gamma-rd}}

\begin{proof}
    For $\p= \frac{1/e}{1 - (1-1/e)\gamma}$, we have immediately that 
\[
1 - (1-\gamma)\p= 1-\frac{(1-\gamma)/e}{1-(1-1/e)\gamma} \;,
\]
and $\p\in [1/e,1]$ for any $\gamma \in [0,1]$. Since the function $x \mapsto (1-x)/\log(1/x)$ is concave, we can lower bound it on $[1/e,1]$ by $x\mapsto 1-1/e + \frac{x-1/e}{e-1}$, which is the line intersecting it in $1/e$ and $1$. Therefore we have
\[
\frac{1-\p}{-\log \p} 
\geq 1-1/e + \frac{\p-1/e}{e-1}
= 1-\frac{(1-\gamma)/e}{1-(1-1/e)\gamma} \;.
\]
Finally, using the previous theorem, this choice of $\p$ guarantees a competitive ratio of at least 
\[
\min\left\{1 - (1-\gamma)\p\, , \,\frac{1-\p}{-\log \p}  \right\}
= 1-\frac{(1-\gamma)/e}{1-(1-1/e)\gamma}\;.
\]
\end{proof}

\subsection{Proofs for the IID model}

\subsubsection*{Proof of Lemma \ref{lem:gamma-iid}}

\begin{proof}[Proof of Lemma \ref{lem:gamma-iid}]
Let $F$ be the cumulative distribution function of $X_1$, $a>0$ and $\ALG$ the algorithm with single threshold $\thresh$ such that $1-F(\thresh) = \frac{a}{n}$. We denote $\Xmax = \max_{i\in [n]}X_i$. As in the previous proofs, we will begin by lower bounding $\ALG^0(F,n)$. For any $i \in [n]$, $\ALG$ stops at step $i$ if and only if $X_i > \thresh$ and all the previous items were rejected, i.e. $X_j \leq \thresh$ for all $j<i$. Thus we can write
\begin{align}
\E[\ALG^0(F,n)]
&= \E\Big[ \sum_{i=1}^n (X_i \indic{X_i > \thresh})\indic{\forall j<i: X_j \leq \thresh} \Big] \nonumber\\
&= \sum_{i=1}^n F(\thresh)^{i-1} \E[X_i \indic{X_i > \thresh}]\nonumber\\
&= \frac{1-F(\thresh)^n}{1-F(\thresh)}\E[X_1 \indic{X_1 > \thresh}] \nonumber\\
&= \frac{1-F(\thresh)^n}{a} \times n\E[X_1 \indic{X_1 > \thresh}] \; ,\label{eq:iid-alg0}
\end{align}
where the second equality is true by the independence of the random variables $(X_i)_i$. On the other hand, we can upper bound $\E[\Xmax \indic{\Xmax > \thresh}]$ as follows
\begin{align}
\E[\Xmax \indic{\Xmax > \thresh}]
&\leq \Pr(\Xmax > \thresh) \thresh + \E[(\Xmax - \thresh)_+] \nonumber\\
&\leq \Pr(\Xmax > \thresh) \thresh + \E\Big[\sum_{i=1}^n(X_i - \thresh)_+\Big] \nonumber\\
&= \Pr(\Xmax > \thresh)\thresh + n\big( \E[X_1 \indic{X_1 > \thresh}] - \Pr(X_1>\thresh) \thresh\big) \nonumber\\
&= (1 - F(\thresh)^n-a)\thresh + n \E[X_1 \indic{X_1 > \thresh}] \;. \nonumber
\end{align}
Using the definition of $\thresh$, the independence of $(X_i)_i$ then Bernoulli's inequality we have that 
\[
\Pr(\Xmax>\thresh) = 1 - F(\thresh)^n = 1 - \left( 1 - \frac{a}{n}\right)^n \leq 1 - (1-n\times \frac{a}{n}) = a \;,
\]
and observing that $\thresh = \frac{\E[\thresh \indic{\Xmax \leq \thresh}]}{F(\thresh)^n} \geq \frac{\E[\Xmax \indic{\Xmax \leq \thresh}]}{F(\thresh)^n}$, we deduce that
\[
\E[\Xmax \indic{\Xmax > \thresh}]
\leq - \left(1 - \frac{1-a}{F(\thresh)^n} \right) \E[\Xmax \indic{\Xmax \leq \thresh}] + n \E[X_1 \indic{X_1 > \thresh}] \;.
\]
by substituting into \eqref{eq:iid-alg0}, we obtain
\[
\E[\ALG^0(F,n)]
\geq \frac{1-F(\thresh)^n}{a} 
\left( \E[\Xmax \indic{\Xmax > \thresh}] 
+ \left(1 - \frac{1-a}{F(\thresh)^n} \right) \E[\Xmax \indic{\Xmax \leq \thresh}]
\right)\; .
\]
Finally, the reward in the $\gamma$-prophet inequality is
\begin{align*}
\E[\ALG^\gamma(F,n)] 
&= \E[\ALG^0(F,n)] + \gamma \E[\Xmax \indic{\Xmax < \thresh}]\\
&\geq \frac{1-F(\thresh)^n}{a} \E[\Xmax \indic{\Xmax > \thresh}] 
+ \left( \frac{1-F(\thresh)^n}{a} \left( 1 - \frac{1-a}{F(\thresh)^n} \right) + \gamma\right) \E[\Xmax \indic{\Xmax < \thresh}]\\
&\geq \min\left\{ \frac{1-F(\thresh)^n}{a},  \frac{1-F(\thresh)^n}{a} \left( 1 - \frac{1-a}{F(\thresh)^n} \right) + \gamma \right\} \E[\Xmax] \;.
\end{align*}
The equation $\frac{1-F(\thresh)^n}{a} = \frac{1-F(\thresh)^n}{a} \left( 1 - \frac{1-a}{F(\thresh)^n} \right) + \gamma$, is equivalent to 
\begin{equation}\label{eq:agamma}
    \left( \frac{1}{(1-a/n)^n} - 1 \right) \left(\frac{1}{a} - 1\right) = \gamma \;,
\end{equation}
and for any $n \geq 2$ the function $a \mapsto \big( \frac{1}{(1-a/n)^n} - 1 \big) \big(\frac{1}{a} - 1\big)$ is decreasing on $(0,1]$ and goes from $1$ to $0$, thus Equation \eqref{eq:agamma} admits a unique solution $a_{n,\gamma}$, and taking $a = a_{n,\gamma}$ guarantees a reward of $\frac{1-F(\thresh)^n}{a_{n,\gamma}} \E[\Xmax] = \frac{1-(1-\frac{a_{n,\gamma}}{n})^n}{a_{n,\gamma}} \E[\Xmax]$.
\end{proof}

\subsubsection*{Proof of Theorem \ref{thm:gamma-iid}}

\begin{proof}
We first prove the upper bound, and then we give the single-threshold algorithm satisfying the lower bound.

\paragraph{Upper bound}

We consider an instance similar to the one used in the proof of Theorem~\ref{thm:gamma-rd}. Let $a,x > 0$, and let $X_1,\ldots, X_{n}$ be IID random variables with the following the distribution $F$  defined by
\[
X_1 \sim \left\{
    \begin{array}{ll}
        n & \mbox{w.p. } \frac{1}{n^2} \\
        a & \mbox{w.p. } \frac{x}{n}\\
        0 & \mbox{w.p. } 1 - \frac{x}{n} - \frac{1}{n^2}
    \end{array}
\right. \; .
\]
A reasonable algorithm would always reject the value $0$ and accept the value $n$. However, if the algorithm faces an item with value $a$, it must decide to either accept it, or reject it with a guarantee of recovering $\gamma a$ at the end. By analyzing the dynamic programming algorithm $\ALG_\star$, we find that the optimal decision is to reject $a$ if observed before a certain step $j$, and accept it otherwise. Let us denote $\tau$ the stopping time of $\ALG_\star$. By convention, we write $\tau = n+1$ to say that no value was selected by the algorithm, in which case the reward is $\gamma \max_{i \in [n]} X_i$. \\
If $\tau \leq j-1$ then necessarily $X_\tau = n$, because $\ALG_\star$ rejects the value $a$ if it is met before step $j$, and if $\tau = n+1$ then $\max_{i \in [n]} X_i \in \{0,a\}$, because otherwise the algorithm would have selected the value $n$ and stopped earlier. It follows that the expected output of $\ALG_\star$ on this instance is
\begin{align}
\E[\ALG_\star^\gamma(F,n)] 
=& n \Pr(\tau < j)
+ \sum_{i=j}^n \E[X_i \mid \tau = i]\Pr(\tau = i) \nonumber\\
&+ \gamma a \Pr(\tau = n+1, \max_{i\in [n]} X_i = a) \; .\label{eq:ALG*iid}
\end{align}
Let us now compute the terms above one by one.
\[
\Pr(\tau < j) 
= \Pr(\exists i \in [j-1]: X_i = n)
= 1 - \left( 1 - \frac{1}{n^2}\right)^{j-1}
\leq \frac{j}{n^2} \; ,
\]
where we used Bernoulli's inequality $(1-1/n^2)^{j-1} \geq 1 - \frac{j-1}{n^2} > 1 - \frac{j}{n^2}$. For $i \in \{j,\ldots,n\}$, $\ALG_\star$ stops at $i$ if $X_i \in \{a,n\}$ and if it has not stopped before, i.e. $X_k \in \{0,a\}$ for all $k < j$ and $X_k = 0$ for all $k \in \{j,\ldots, i-1\}$, hence
\begin{align*}
\Pr(\tau = i)
&= \Pr( \forall k < j: X_k \neq n \text{ and } \forall j \leq k \leq i-1: X_k = 0 \text{ and } X_i \neq 0) \\
&= \left( 1 - \frac{1}{n^2}\right)^{j-1} \left(1 - \frac{x}{n} - \frac{1}{n^2}\right)^{i-j} \Pr(X_i \neq 0)\\
&\leq \left(1 - \frac{x}{n}\right)^{i-j} \Pr(X_i \neq 0) \;,
\end{align*}
the second equality is true by independence, and the last inequality holds because $1- \frac{1}{n^2} \leq 1$ and $1 - \frac{x}{n} - \frac{1}{n^2} \leq 1 -  \frac{x}{n}$. By independence of the variables $(X_k)_k$, we also have that
\[
\E[X_i \mid \tau = i]
= \E[X_i \mid X_i \neq 0]
= \frac{\E[X_i]}{\Pr(X_i \neq 0)}
= \frac{1 + ax}{n \Pr(X_i \neq 0)} \; .
\]
Finally, the event $(\tau = n+1, \max_{i\in [n]}X_i = a)$ is equivalent $(\max_{i\in [j-1]}X_i = a, \forall k \geq j: X_k = 0)$. In fact, the algorithm does not stop before $n+1$ if and only if $X_k \neq n$ for all $k < j$ and $X_k = 0$ for all $j \leq k \leq n$, and under these conditions, it holds that $\max_{i\in [n]}X_i = \max_{i\in [j-1]}X_i$. Therefore
\begin{align*}
\Pr(\tau = n+1, \max_{i\in [n]}X_i = a)
&= \Pr(\max_{i\in [j-1]}X_i = a, \forall k \geq j: X_k = 0)\\
&\leq \Pr(\max_{i\in [j-1]}X_i \neq 0) \Pr( \forall k \geq j: X_k = 0)\\
&= \left( 1 - \left(1 - \frac{x}{n} - \frac{1}{n^2} \right)^{j-1}\right)\left(1 - \frac{x}{n} - \frac{1}{n^2} \right)^{n-j}\\
&= \big( 1 - e^{-\frac{xj}{n}} + o(1) \big) \big(e^{-x + \frac{xj}{n}} + o(1)\big) \\
&= \big( e^{\frac{xj}{n}} - 1 \big)e^{-x} + o(1) \;.
\end{align*}
All in all, we obtain by substituting into \ref{eq:ALG*iid} that
\begin{align*}
\E[\ALG_\star^\gamma(F,n)]
&\leq \frac{j}{n} + \left(\frac{1 + ax}{n}\right) \sum_{i=j}^n \left( 1 - \frac{x}{n} \right)^{i-j} + \gamma a e^{-x} \big( e^{\frac{xj}{n}} - 1 \big) + o(1)\\
&= \frac{j}{n} + \left(\frac{1 + ax}{n}\right) \frac{1 - (1 - x/n)^{n-j+1}}{x/n} + \gamma a e^{-x} \big( e^{\frac{xj}{n}} - 1 \big) + o(1)\\
&= \frac{j}{n} + \left(\frac{1}{x} + a\right) \big(1 - e^{-x + \frac{xj}{n}} + o(1)\big) + \gamma a e^{-x} \big( e^{\frac{xj}{n}} - 1 \big) + o(1)\\
&= \frac{j}{n} - \Big[ \big( \tfrac{1}{x} + (1-\gamma)a \big) \Big] e^{\frac{xj}{n}} + \frac{1}{x} + a - \gamma a e^{-x} + o(1)\\
&\leq \max\limits_{s > 0} \left\{ s - \Big[ \big( \tfrac{1}{x} + (1-\gamma)a \big) \Big] e^{xs} \right\} + \frac{1}{x} +  (1 - \gamma e^{-x})a + o(1)\\
&= - \frac{1}{x}\big( \log( 1 + (1-\gamma)ax) + 1 - x \big) + \frac{1}{x} +  (1 - \gamma e^{-x})a + o(1)\\
&= -\frac{1}{x}\log( 1 + (1-\gamma)ax) + 1 + (1 - \gamma e^{-x})a + o(1) \;.
\end{align*}

On the other hand, we have that
\[
\Pr( \max_{i\in [n]} X_i = n)
= 1 - \left( 1 - \frac{1}{n^2} \right)^n
= \frac{1}{n} + o(1/n)\;,
\]
\[
\Pr(\max_{i\in [n]} X_i = 0) 
= \left( 1 - \frac{x}{n} - \frac{1}{n^2} \right)
= e^{-x} + o(1)\;,
\]
\[
\Pr(\max_{i\in [n]} X_i = a)
= 1 - \Pr(\max_{i\in [n]} X_i = 0) - \Pr(\max_{i\in [n]} X_i = n)
= 1 - e^{-x} + o(1)\; ,
\]
therefore, the expected maximum value is
\begin{align*}
\E[\max_{i\in [n]} X_i] 
&= n \Pr( \max_{i\in [n]} X_i = n) + a \Pr( \max_{i\in [n]} X_i = a)\\
&= 1 + \big(1 - e^{-x}\big)a + o(1)\; .
\end{align*}
We deduce that 
\begin{align*}
\frac{\E[\ALG_\star^\gamma(F,n)]}{\E[\max_{i\in [n]} X_i] }
&\leq \frac{-\frac{1}{x}\log( 1 + (1-\gamma)ax) + 1 + (1 - \gamma e^{-x})a}{1 + \big(1 - e^{-x}\big)a} + o(1)\\
&= 1 - \frac{\frac{1}{x}\log( 1 + (1-\gamma)ax) - (1-\gamma)ae^{-x}}{1 + \big(1 - e^{-x}\big)a} + o(1)\; .
\end{align*}
Consequently, for any $a,x > 0$ and for any algorithm $\ALG$ we have
\begin{align*}
\CR(\ALG) 
&\leq \CR(\ALG_\star)\\
&\leq \lim_{n\to \infty} \frac{\E[\ALG_\star^\gamma(F,n)]}{\E[\max_{i\in [n]} X_i] }\\
&\leq 1 - \frac{\log( 1 + (1-\gamma)ax) - (1-\gamma)axe^{-x}}{x + \big(1 - e^{-x}\big)ax} \;.
\end{align*}
In particular, for $x = 2$ and $a = \frac{1-\gamma/2}{1-\gamma}$ we find that
\begin{align*}
\CR(\ALG) 
&\leq 1 - \frac{\log(3-\gamma) - (2-\gamma)e^{-2}}{2 + \frac{2-\gamma}{1-\gamma}(1-e^{-2})}\\
&= 1 - (1-\gamma)\frac{e^2 \log(3-\gamma) - (2-\gamma)}{2(2e^2-1) - \gamma(3e^2 - 1)}\\
&= U(\gamma)\;.
\end{align*}
This proves the upper bound stated in the theorem, and we can verify that it is increasing, and satisfies $U(0) = \frac{4 - \log 3}{4 - 2/e^2}$ $U(1) = 1$

\paragraph{Lower bound on the competitive ratio}
We will prove that the algorithm presented in Lemma \ref{lem:gamma-iid} has a competitive ratio of at least $(1-(1-\gamma)p_\gamma)$, where $p_\gamma$, first introduced in Theorem \ref{thm:gamma-rd}, is the unique solution of the equation $(1-(1-\gamma)p) = \frac{1-p}{-\log p}$, which is equivalent to $\big(\frac{1}{\p}-1\big)\big(\frac{1}{\log(1/\p)}-1\big) = \gamma$.

Let $a_\gamma = - \log(p_\gamma)$. It follows from the definition of $p_\gamma$ that $a_\gamma$ is the unique solution of the equation $(e^a - 1)(\frac{1}{a} - 1) = \gamma$.
For any $n \geq 2$ and $x\geq 0$ we have that $(1-x/n)^n \leq e^{-x}$, hence, by definition of $a_{n,\gamma}$ and $a_{\gamma}$
\begin{align}
\left(\frac{1}{e^{-a_{n,\gamma}}} - 1\right) \left( \frac{1}{a_{n,\gamma}} - 1\right)
&\leq \left(\frac{1}{(1-\frac{a_{n,\gamma}}{n})^n} - 1\right) \left( \frac{1}{a_{n,\gamma}} - 1\right) \nonumber\\
&= \gamma \nonumber \\
&= \left(\frac{1}{e^{-a_{\gamma}}} - 1\right) \left( \frac{1}{a_\gamma} - 1\right) \;. \label{eq:agamma-angamma}
\end{align}
Moreover, the function $x \mapsto (e^x-1)(1/x-1)$ is decreasing on $(0,1)$. In fact its derivative at any point $x \in (0,1)$ is
\begin{align*}
\frac{d}{dx}\left[ \left(\frac{1}{e^{-x}} - 1\right) \left( \frac{1}{x} - 1\right) \right]
&= \left( \frac{1}{x} - 1\right)e^x - \frac{e^x-1}{x^2}\\
&= \frac{1}{x^2} \left( 1 - x^2 - (1-x)e^x \right)\\
&= \frac{1-x}{x^2}(1+x-e^x)    
< 0 \;.
\end{align*}
It follows from \eqref{eq:agamma-angamma} that $a_\gamma \leq a_{n,\gamma}$. Finally, given that $x \mapsto \frac{1-e^{-x}}{x}$ is non-increasing on $(0,1]$, we deduce that
\[
\frac{1-(1-\frac{a_{n,\gamma}}{n})^n}{a_{n,\gamma}} 
\geq \frac{1-e^{-a_{n,\gamma}}}{a_{n,\gamma}}
\geq \frac{1-e^{-a_\gamma}}{a_\gamma} \;.
\]
We deduce that the competitive ratio of the algorithm described in Theorem \ref{thm:gamma-iid} is at least $\frac{1-e^{-a_\gamma}}{a_\gamma} = \frac{1-\p_\gamma}{\log(1/\p_\gamma)} = 1 - (1-\gamma)p_\gamma$.

\end{proof}
\section{Random decay functions}\label{appx:rd-decay}
While we only studied deterministic decay functions in the paper, it is also possible to have scenarios with random decay functions. Consider for example that rejected items remain available after $j$ steps with a probability $p_j$, this is modeled by $D_j(x) = \xi_j x$ with $\xi_j$ a Bernoulli random variable with parameter $p_j$. We explain in this section how the definitions and our results extend to this case.

\begin{definition}[Random process]
Let $\mathcal{X}$ is a non-empty set. A random process o $\mathcal{X}$ is a collection of random variables $\{Z_x\}_{x \in \mathcal{X}}$. Two random processes $\mathcal{Z} = \{Z_x\}_{x \in \mathcal{X}}$ and $\mathcal{Z}' = \{Z'_x\}_{x \in \mathcal{X}'}$ are independent if any finite sub-process of $\mathcal{Z}$ is independent of any sub-process of $\mathcal{Z}'$. For simplicity, let us say that the random processes $\{Z^1_x\}_{x \in \mathcal{X}^1}, \ldots, \{Z^m_x\}_{x \in \mathcal{X}^m}$ are mutually independent if, for any $x_1 \in \mathcal{X}_1, \ldots, x_m \in \mathcal{X}_m$, the random variables $Z^1_{x_1}, \ldots, Z^m_{x_m}$ are mutually independent.    
\end{definition}

\begin{definition}[Random decay functions]
Let $\D = (D_1, D_2, \ldots)$ be a sequence of mutually independent random processes. 
We say that $\D$ is a sequence of \textit{random decay functions} if 
\begin{enumerate}
    \item $\Pr(D_j(x) \notin [0,x]) = 0$ for any $x \geq 0$ and $j \geq 1$,
    \item $j \in \N_{\geq 1} \mapsto \Pr(D_{j}(x) \geq a)$ is non-increasing for any $x,a \geq 0$,
    \item $x \geq 0 \mapsto \Pr(D_j(x) \geq a)$ is non-decreasing for any $j \in \N_{\geq 1}$ and $a\geq 0$.
\end{enumerate}
\end{definition}

The second condition asserts that the random variable $D_{j-1}(x)$ has first-order stochastic dominance over $D_j(x)$. Along with the first condition, reflect that the distributions of the rejected values become progressively smaller.
The last condition indicates that for any integer $j \geq 1$ and non-negative real numbers $x < y$, $D_j(y)$ has a first-order stochastic dominance over $D_j(x)$, which means that, as the value of $x$ increases, so does the potential recovered value after $j$ steps.

\paragraph{The decision-maker} In the $\D$-prophet inequality with deterministic decay functions, we assumed that the decision-maker has full knowledge of the functions $D_1,D_2\ldots$. In the randomized setting, we assume instead that the decision-maker knows the distributions of the decay functions, i.e. knows the distribution of the random variables $D_j(x)$ for all $x \geq 0$ and $j\geq 1$. However, they do not observe their values until they decide to stop. The online selection process is therefore as follows: the algorithm knows beforehand the distributions of the decay functions, then at each step, it observes a new item with value $X_i$, and decides to stop or continue. Once they decide to stop at some time $\tau$, they observe the values $D_1(X_{\tau-1}),\ldots, D_{\tau}(X_1)$ and then they choose the maximal one. As a consequence, the stopping time $\tau$ is independent of the randomness induced by the decay functions. As in the deterministic case, the expected reward of any algorithm $\ALG$ can be written as
\[
\E[\ALG^\D(X_1,\ldots,X_n)]
= \E[\max_{0\leq i \leq \tau-1} \{ D_i (X_{\tau-i}) \}]\;.
\]

\paragraph{The limit decay} A key result in our paper is the reduction of the problem to the case where all the decay functions are identical, and we prove this reduction by considering the pointwise limit of the decay functions. In the case of random decay functions, instead of the pointwise convergence, it holds for all $x \geq 0$ that the random variables $(D_j(x))_j$ converge in distribution to some random variable $D_\infty(x)$. In fact, for any $x \geq 0$ and $a \geq 0$, the sequence $(\Pr(D_j(x) \geq a))_{j \geq 1}$ is non-increasing and non-negative, thus it converges to some constant $G(x,a)$. Given that $x \mapsto \Pr(D_j(x) \geq a)$ is non-decreasing for any $j$, we obtain by taking the limit $j \to \infty$ that $x \mapsto G(x,a)$ is non-increasing, and with similar argument we obtain, for any $x \geq 0$, that $G(x,a) = 1$ for all $a \leq 0$ and $G(x,a) = 0$ for all $a > x$. Therefore, $a \mapsto 1-G(x,a)$ defined the cumulative distribution of a random variable $D_\infty$ such that
\begin{itemize}
    \item $x \geq 0 \mapsto \Pr(D_\infty(x) \geq a)$ is non-decreasing for all $a\geq 0$,
    \item $\Pr(D_\infty(x) \notin [0,x]) = 0$ for al $x \geq 0$.
\end{itemize}
Therefore, for all $x\geq 0$, $D_\infty(x)$ is the limit in distribution of $(D_j(x))_j$, hence a sequence $\D' = (D_1',D_2',\ldots)$ of mutually independent random processes such that $D'_j(x) \sim D_\infty(x)$ for any $j \geq 1$ and $x \geq 0$ defines a sequence of decay functions. We say in this case that all the decay functions are identically distributed as $D_\infty$. Moreover, it holds for all $x\geq 0$ that $\E[D_\infty(x)] = \lim_{j \to \infty} \E[D_j(x)] = \inf_{j\geq 1} \E[D_j(x)]$

From there, all the proofs of Section \ref{sec:Dinfty} can be easily generalized to the case of random decay functions, and it follows that we can restrict ourselves to studying identically distributed decay functions.
Moreover, Proposition \ref{prop:infD} can be generalized to the case of random decay functions, and the necessary condition for surpassing $1/2$ becomes $\inf_{x>0}\frac{\E[D_\infty(x)]}{x} > 0$. Similarly, using that the stopping time $\tau$ of the algorithm is independent of randomness induced by $D_\infty$, Proposition \ref{prop:Dinf<gamma} remains true with $\gamma = \inf_{x>0}\frac{\E[D_\infty(x)]}{x}$.

\paragraph{Lower bounds}
For establishing lower bounds, observe that, for any random decay functions $\D$, if we denote $H_j(x) = \E[D_j(x)]$ for all $x$, then $\mathcal{H} = (H_1,H_2,\ldots)$ defines a sequence of deterministic decay functions. Furthermore, for any instance $X_1,\ldots,X_n$ and any algorithm $\ALG$, it holds that
\begin{align*}
\E[\ALG^\D(X_1,\ldots,X_n)]
&= \E[\max_{0\leq i \leq \tau-1} \{ D_i (X_{\tau-i}) \}]\\
&= \E\Big[ \E[\max_{0\leq i \leq \tau-1} \{ D_i (X_{\tau-i}) \} \mid \tau, X_1,\ldots,X_n ] \Big]\\
&\geq \E\Big[ \max_{0\leq i \leq \tau-1} \big\{ \E[ D_i (X_{\tau-i}) \mid \tau, X_1,\ldots,X_n] \big\} \Big]\\
&= \E\Big[ \max_{0\leq i \leq \tau-1} \{ H_i(X_{\tau-i}) \} \Big]\\
&= \E[\ALG^\mathcal{H}(X_1,\ldots,X_n)].
\end{align*}
It follows that lower bounds established for deterministic decay functions can be extended to random decay functions by considering their expectations.

\paragraph{Implications} With the previous observations, both the lower and upper bounds, depending on $\gamma_\D$ that we proved in the deterministic $\D$-prophet inequality can be generalized to the random $\D$-prophet inequality, by taking 
\[
\gamma_\D = \inf_{x>0} \inf_{j \geq 1} \frac{\E[D_j(x)]}{x}\;.
\]

\end{document}